\documentclass[12pt]{article}
\usepackage[margin=1in]{geometry}

\usepackage{setspace}
\usepackage{tikz}
\usetikzlibrary{arrows.meta,positioning,shapes.geometric}

\usepackage{array}
\newcolumntype{L}[1]{>{\raggedright\arraybackslash}p{#1}}
\newcolumntype{C}[1]{>{\centering\arraybackslash}p{#1}}

\usepackage{natbib}
\usepackage{graphicx}
\usepackage{amsmath,amssymb,mathtools}
\usepackage{amsthm}

\theoremstyle{plain}
\newtheorem{theorem}{Theorem}

\newtheorem{lemma}{Lemma}

\theoremstyle{definition}
\newtheorem{assumption}{Assumption}

\theoremstyle{remark}
\newtheorem{remark}{Remark}

\renewcommand{\P}{\mathbb{P}}

\DeclareMathOperator{\E}{\mathbb{E}}

\usepackage{centernot}
\usepackage{tikz}
\usepackage{url}
\usepackage{comment}
\usepackage{xcolor}
\usepackage{wrapfig}
\usepackage{xcite}
\usepackage{xr}
\usepackage{booktabs}
\usepackage{multirow}
\usepackage{siunitx}
\usepackage{amsfonts}
\usepackage{nicefrac}
\usepackage{algorithm}
\usepackage{algorithmicx}
\usepackage{algpseudocode}
\usepackage{adjustbox}
\usepackage{blkarray}
\usepackage{bm}
\usepackage{paralist}
\usepackage{enumitem}

\begin{document}

\title{Causal State-Dependent Local Projections}

\author{Joel M. David, Raffaella Giacomini, Xiyu Jiao, and Weining Wang\thanks{David: Federal Reserve Bank of Chicago (joel.david@chi.frb.org); 
Giacomini: Department of Economics, University College London (r.giacomini@ucl.ac.uk); Jiao: Department of Economics, University of Gothenburg (xiyu.jiao@economics.gu.se); Wang: Department of Economics, University of Bristol (weining.wang@bristol.ac.uk). We would like to thank Tincho Almuzara, Marco Brianti, Lutz Kilian and Wei Cui for useful comments and suggestions and William Pennington for excellent research assistance. The views expressed here are those of the authors and not necessarily those of the Federal Reserve Bank of Chicago or the Federal Reserve System.}}
\date{\today}

\maketitle
\begin{abstract}

State-dependent local projections (LPs) are widely used to study how causal effects vary as a function of economic states, but shock exogeneity alone does not identify this response function. We show that identification follows when the underlying conditional mean is linear in the shock with a state-dependent coefficient, a condition satisfied in canonical micro-macro environments, including first-order perturbation solutions of heterogeneous-agent and macro-finance models. Even then, standard linear-interaction LPs generally recover only a projection of the response function, motivating LPs with nonparametric state dependence. We develop a sieve estimator and establish pointwise and uniform inference for micro-macro panels, where a distinctive challenge is that the estimator can converge at different rates across the state space. Applied to firm investment, the method uncovers a hump-shaped response to monetary policy shocks and shows that standard linear-interaction LPs substantially understate the aggregate role of financial heterogeneity.
\end{abstract}



\noindent
\textbf{JEL Classification:} C12, C14, C23, E32 \\
\noindent
\textbf{Keywords:} Impulse Response Functions, Identification, Nonparametric Estimation, Uniform Inference, Micro-Macro Panels, Heterogeneous-Agent Models

 \bigskip

\newpage
\section{Introduction}

State-dependent local projections (LPs) are widely interpreted as revealing how causal impulse responses vary with an observable state, but this functional interpretation is not generally warranted, even under the causal shock-exogeneity conditions maintained in the recent LP literature.\footnote{\citet{JordaTaylor2025} provide a survey of the empirical literature. State-dependent LP applications span both aggregate and micro-macro settings; \citet{gonccalves2024state} contains extensive references for aggregate applications. In micro-macro settings, state-dependent LPs are widely used to study heterogeneous transmission across a broad range of questions: how the effect of monetary policy shocks on firm investment varies with financial conditions such as distance to default \citep{ottonello2020financial}, balance sheet liquidity \citep{jeenas2023firm}, intangible asset intensity \citep{DottlingRatnovski2023}, or stock market turnover \citep{JeenasLagos2024}; how monetary transmission to household consumption and wealth differs by mortgage status \citep{CloyneFerreiraSurico2020,cloyne2023monetary} or race \citep{BarstscherKuhnSchularickWachtel2022}; and how the growth effects of climate shocks vary with countries' income level or temperature exposure \citep{nath2024global}. } \citet{kolesar2024dynamic} show that scalar LPs, which regress outcomes on shocks,
recover average causal effects even when the data-generating
process (DGP) is nonlinear. 
\citet{winkler2026} and \citet{AlmuzaraSancibrian2025MicroResponses} extend this insight to state-dependent LPs and show that shock-state interaction coefficients can have causal interpretations
as weighted averages or projections of causal effects. At the same time,
\citet{gonccalves2024state} show that state-dependent LPs can fail to recover causal
impulse responses, pointing to feedback from endogenous
states as a key source of the problem. These results reveal a fundamental tension: a shock-state interaction
coefficient can be a causal average or projection without recovering how the
causal response varies with the state, which is what empirical researchers
seek to estimate. Concretely, a positive interaction coefficient need not
imply that the causal response is larger at higher values of the state,
contrary to the usual empirical interpretation.

This paper resolves this tension and provides an organizing framework for a large empirical literature by developing identification and estimation for causal state-dependent impulse responses, understood as horizon-specific functions of the chosen observable state, together with inference tailored to micro-macro panels. We make three main contributions.

First, we establish that, beyond the causal shock-exogeneity conditions maintained in the existing literature, identifying the impulse response as a function of the state requires additional structure on the underlying DGP. We provide a sufficient semiparametric condition: at each horizon, the conditional mean is linear in the shock, with an unrestricted state-dependent coefficient and an otherwise unrestricted additive component. The condition holds in many canonical micro-macro
environments, including cases with endogenous states, but is less plausible in
purely aggregate settings. Moving from scalar to state-dependent LPs therefore
requires thinking about the DGP, but not specifying it fully: once the condition
holds, LPs can recover the causal state-dependent response by leveraging shock
exogeneity.

Second, we clarify the role of LP specification choices when the identifying conditions are satisfied. One need not choose the true state: any predetermined state defines a causal subpopulation impulse response. But functional form matters in a way that has no analogue for scalar LPs. Scalar LPs target an average causal effect, whereas state-dependent LPs are interpreted as describing how the causal response varies with the state. Unless that variation is exactly linear, we show that the standard linear-interaction specification recovers only a projection that does not support this interpretation.  

Third, we close the gap between identification and practice. Because the object of interest is a function, recovering it generally requires nonparametric estimation. We therefore develop a sieve estimator and establish pointwise and uniform inference for micro--macro panels, addressing the distinctive challenge that the estimator can converge at different rates across the state space.

The identification insights extend to purely aggregate applications, but we
focus throughout on micro-macro settings, where state-dependent LPs are used
to study how aggregate shocks are transmitted to heterogeneous households,
firms, and regions. The standard approach is to regress a micro outcome $Y_{i,t+h}$ on an empirical measure of an aggregate shock $X_t$ and the interaction term $Z_{i,t-1} X_t$ between the shock and a state $Z_{i,t-1}$, thereby imposing a linear form of state dependence. 

Our key sufficient condition for identification is that the
horizon-$h$ conditional mean of the micro outcome is linear in the aggregate
shock, with a functional coefficient $g_h(Z_{i,t-1})$:
\begin{equation}
\mathbb E[Y_{i,t+h}\mid \mathcal F_{t-1},X_t]
=
g_h(Z_{i,t-1})X_t+r_{i,h}(\mathcal F_{t-1}),
\label{ass3}
\end{equation}
where $\mathcal F_{t-1}$ is the information set. This is a semiparametric restriction on the underlying DGP: the
functional coefficient $g_h(Z_{i,t-1})$ and the nuisance component
$r_{i,h}(\mathcal F_{t-1})$ need not be modeled. We combine this restriction
with causal shock-exogeneity conditions closely related to those in the
existing causal LP literature, which link observed conditional means to
potential outcomes. We additionally impose an m.d.s.\ condition on $X_t$,
which allows $g_h(\cdot)$ to be identified without modeling $r_{i,h}(\mathcal F_{t-1})$.\footnote{If the empirical
shock is obtained by first removing predictable components, $X_t$ is replaced
by the residualized shock $\widetilde X_t$. If the measured series is treated
as an instrument for an endogenous aggregate innovation $\varepsilon_t$, then
the identifying restrictions apply to $\varepsilon_t$, and $X_t$ in the LP is
replaced by the fitted value of $\varepsilon_t$ from the first stage. In both
cases, the identification argument is unchanged, while our inference theory
treats the shock as observed and abstracts from first-stage uncertainty and
weak instruments.} The state $Z_{i,t-1}$ is predetermined but
may nevertheless be endogenous, in the sense of depending on past shocks.
Under these conditions, $g_h(z)$ equals the total causal impulse response at
horizon $h$ for units with state $z$, combining both direct and indirect
effects of the shock. Thus, the conditional-mean restriction represents the
causal response as a functional coefficient in an LP that is nonparametrically
identified. Importantly, we show that identification does not require observing the exact
state governing heterogeneity: any predetermined state $Z_{i,t-1}$ defines a
well-defined causal response for the subpopulation with $Z_{i,t-1}=z$.

We then characterize economically relevant classes of micro-macro DGPs under which condition \eqref{ass3} is satisfied. The two key restrictions are that the aggregate shock enters linearly with a state-dependent coefficient and that the propagation dynamics are stable over time. This admissible class is broad and includes heterogeneous-agents and macro-finance models in which the state represents fixed characteristics such as age or mortgage status or time-varying endogenous (but predetermined) ones such as wealth, leverage, or balance-sheet positions. In particular, first-order perturbation solutions of canonical heterogeneous-agent
(HA) and heterogeneous-agent New Keynesian (HANK) models satisfy these conditions
within the linearized approximation: a monetary shock may reduce consumption more for highly leveraged households, depress investment more for highly leveraged firms, and lower wealth more for holders of long-duration assets, even though the structural laws governing the dynamics of consumption smoothing, capital accumulation, and wealth remain unchanged. Our findings imply that, in these environments, LPs with exogenous shock measures
can recover causal state-dependent impulse responses directly, without requiring
a full specification of the DGP.

Departures from this structure can arise in a number of ways. One occurs when
the relationship between outcomes and the aggregate shock is nonlinear, as in
models with occasionally binding constraints, regime switching, sign
asymmetries, or global nonlinear
solutions of heterogeneous-agent models. A more subtle violation occurs when
shocks also change propagation dynamics. The micro-macro model of
\citet{almuzara2025nonlinear}, where aggregate conditions affect
household-income persistence, is one such environment.

That the condition holds in canonical micro-macro environments does not,
however, justify the standard LPs with linear state dependence used in empirical
work. Even under identification, the causal state-dependent
impulse response is the function \(g_h(\cdot)\). Unless the true state
dependence is exactly linear, we show that the standard LP specification does not
recover this function, but instead a projection; its interaction coefficient averages local derivatives of \(g_h(\cdot)\) rather than
levels. Thus, contrary to the usual interpretation in applications, a positive
coefficient need not imply that high-state units have larger causal responses,
while a small coefficient may reflect cancellation from hump-shaped state
dependence.

We develop a sieve estimator that directly recovers the causal state-dependent impulse response as a flexible function of the state. Inference in micro-macro panels is technically delicate because the relative importance of aggregate and idiosyncratic uncertainty, and hence the convergence rate of the estimator, can vary across the state space. Where aggregate variation remains relevant, cross-sectional averaging cannot eliminate it and the rate is governed by the time dimension; where it vanishes, averaging across units becomes effective and convergence is faster. The estimator can therefore converge at different rates
across the state space. We characterize these rates and establish valid
pointwise confidence intervals. For uniform inference, we studentize the
process by its state-specific standard error, yielding valid confidence
bands across these regimes under strong common dependence across units and
serial dependence from overlapping LP horizons.

Monte Carlo evidence shows that linear state-dependence specifications can
substantially distort nonlinear causal response functions. The proposed
uniform confidence bands exhibit finite-sample undercoverage in short panels,
with coverage improving as the time dimension grows. The simulations also illustrate the role of common LP-error components in inference. At horizons $h\geq1$, omitting intermediate-shock terms does not affect identification, but it leaves common variation in the LP error that cannot be averaged out across units, resulting in wider confidence bands.

Applying our approach to firm-level investment responses to monetary policy shocks, we revisit the widely used LP in \citet{ottonello2020financial}, which imposes linear state dependence in firms' financial conditions. We find that the limitation identified by our theory is quantitatively important: the linear interaction coefficient averages slopes of the true state-dependent response and thus misses a hump-shaped pattern in which the largest effects are concentrated near the mean of the distance-to-default distribution, rather than in the least risky right tail. Nonlinearity also matters for aggregate transmission. Because many firms lie in regions where the nonlinear response exceeds the
linear approximation, the standard specification underestimates the role of
financial heterogeneity: this channel raises the impact response of aggregate investment
roughly ten times more in our nonparametric estimates than under the standard
linear specification.

Our framework connects and clarifies several strands of the literature. For empirical work using state-dependent LPs, a central implication is that shock exogeneity alone is not enough to recover causal state-dependent impulse responses. An additional sufficient condition is that the total effect of the shock is representable as a coefficient on a shock that enters the true conditional mean linearly at each horizon. This structure is natural in micro-macro settings with heterogeneous exposure and stable propagation, including cases with endogenous predetermined states. It rules out nonlinear shock effects and shock-dependent propagation. Thus, the critique of \citet{gonccalves2024state} remains relevant in environments where shocks can change the propagation mechanism, especially aggregate time-series settings. Even within the admissible class, standard linear-interaction LPs recover the causal state-dependent response only if state dependence is truly linear - a strong functional-form restriction with no general justification. One should therefore generally estimate state dependence nonparametrically.

Our results build on the recent literature on causal interpretations of LPs
\citep{kolesar2024dynamic,AlmuzaraSancibrian2025MicroResponses,Bousquet2025_UncoveringNonlinearities,winkler2026}, which shows that LP coefficients can be interpreted as weighted averages or
projections of causal effects. We ask a different question: when do LPs recover the causal state-dependent response function itself? We show that this requires additional structure on the DGP and a shift in focus from interpreting a finite-dimensional LP coefficient to estimating a causal response function. This reveals the limitations of existing linear specifications and motivates our nonparametric estimator and pointwise and uniform inference procedures.

Our findings also complement the LP - VAR equivalence literature. \citet{PlagborgMollerWolf2021} and \citet{Ludwig2024} characterize the equivalence between LP and VAR representations of impulse responses. We establish a related equivalence for impulse responses as functions of observable states: we characterize a class of DGPs under which the functional coefficient in a state-dependent LP coincides with the causal state-dependent response.

\citet{gonccalves2024state} show that state-dependent LPs need not recover finite-shock impulse responses when the state is endogenous, even if they recover marginal responses to infinitesimal shocks. We show that the deeper obstacle is not state endogeneity, but shock-dependent propagation or nonlinearity in the shock. In micro-macro settings with linear shock exposure and stable propagation, finite-shock and marginal responses coincide, and LPs with nonparametric state dependence recover causal responses even when the predetermined state is endogenous.

\citet{gonccalves2024_wp} address the general nonlinear case by
nonparametrically estimating the conditional mean as a function of both the
shock and the state, and then constructing impulse responses as differences
across shock realizations. Our approach is also nonparametric, but only in the
state: under our conditions, the causal impulse response is
directly the LP functional coefficient, so no second-step transformation
is required. The two approaches also differ in their shock-exogeneity
assumptions. \citet{gonccalves2024_wp} impose conditional independence of
potential outcomes from the shock given past information, whereas we require
only the corresponding conditional-mean restriction, together with the standard
m.d.s.\ condition on the shock.

From an inference perspective, our setting combines the functional inference problem studied in the sieve literature \citep[e.g.][]{Chernozhukov2014gaussian,ZW15gaussian,lu2020kernel,chen2024inference} with the micro--macro dependence structure emphasized by \citet{AlmuzaraSancibrian2025MicroResponses} for finite-dimensional LP coefficients. Existing results do not cover the interaction of these two features: because
the target is a function, the loading on aggregate uncertainty can vary with
the state, so the effective sample size, and hence the convergence rate of the
estimator, can vary across the state space. We develop inference that
remains valid across these rate regimes.

Our empirical application is related to \citet{paranhos2025firms}, who uses random forests to flexibly estimate heterogeneous impulse responses in the setting of \citet{ottonello2020financial}. Her approach relaxes functional-form restrictions on heterogeneity, while taking identification of the impulse response as given. We characterize when the impulse response is identified; within those settings, random forests provide an alternative estimator, whereas our sieve implementation remains closer to standard LP practice and delivers uniform inference across the state space. \citet{drechsel2026investment} also study firm-level investment responses to
monetary policy in the same setting. Their clustering approach targets
unobserved heterogeneity in responsiveness under a finite-mixture structure,
whereas we study identification and inference for the causal response as a
function of a chosen observed state.

Figure~\ref{fig:sdlp-decision-tree} provides a summary and practical guide to when LP
specifications recover causal state-dependent impulse responses and when
nonparametric state dependence is needed.

\begin{figure}[t]
\centering
\resizebox{0.88\textwidth}{!}{%
\begin{tikzpicture}[
    every node/.style={font=\scriptsize, align=center},
    q/.style={draw, rounded corners, inner sep=2.8pt, text width=2.90cm},
    result/.style={draw, rounded corners, inner sep=2.8pt,
                   text width=3.65cm, font=\scriptsize\bfseries},
    arrow/.style={-{Latex[length=1.4mm]}, thick}
]

\node[q]      (q1)    at (0,0)       {Causal shock-exogeneity?};
\node[q]      (q2)    at (0,-1.35)   {DGP linear\\in the shock?};
\node[q]      (q3)    at (0,-2.70)   {Stable propagation\\in the DGP?};

\node[result] (id)    at (0,-4.15)
{Causal state-dependent\\IRF identified};

\node[q]      (q4)    at (0,-5.65)
{DGP implies linear\\state dependence?};

\node[result] (fail)  at (5.00,-2.00)
{Causal state-dependent IRF\\not guaranteed};

\node[result] (linok) at (-3.25,-8.15)
{Standard linear-interaction LPs recover IRF};

\node[result] (np) at (3.25,-8.15)
{Standard linear-interaction LPs do not recover IRF;\\estimate state dependence\\nonparametrically};

\draw[arrow] (q1) -- node[right] {\tiny yes} (q2);
\draw[arrow] (q2) -- node[right] {\tiny yes} (q3);
\draw[arrow] (q3) -- node[right] {\tiny yes} (id);
\draw[arrow] (id) -- (q4);

\draw[arrow] (q1.east) -- node[pos=0.22, above, yshift=6pt] {\tiny no} (fail.north west);
\draw[arrow] (q2.east) -- node[pos=0.30, above, yshift=3pt] {\tiny no} (fail.west);
\draw[arrow] (q3.east) -- node[pos=0.38, above, yshift=3pt] {\tiny no} (fail.south west);

\draw[arrow] (q4.south west) -- node[above left] {\tiny yes} (linok.north);
\draw[arrow] (q4.south east) -- node[above right] {\tiny no} (np.north);

\end{tikzpicture}%
}
\caption{When do LPs recover a causal state-dependent IRF?
Causal shock-exogeneity alone is not enough. If the DGP is linear in the shock
and propagation is stable, the causal state-dependent IRF is identified.
Standard linear-interaction LPs recover that IRF when the DGP implies
linear state dependence. Otherwise, state dependence should be estimated nonparametrically.}
\label{fig:sdlp-decision-tree}
\end{figure}

The rest of the paper is organized as follows.
Section~\ref{sec:toy} provides intuition through simple DGPs;
Section~\ref{identification} gives the general identification results;
Sections~\ref{sec:estimation}-\ref{sec:simulations} develop estimation,
inference, and simulations;
Section~\ref{sec:application} applies the method to monetary policy
transmission across firms; and Section~\ref{sec:conclusion} concludes.
The Online Appendix contains the formal asymptotic theory, proofs, and
additional simulation results.

\section{A Simple Framework: DGPs and LP Specifications}\label{sec:toy}

This section uses three stylized DGPs and three LP specifications to illustrate
when state-dependent LPs recover the intended response and when they fail.

Consider the following DGPs for a scalar outcome \(Y_{it}\), a causally exogenous aggregate shock \(X_t\) (as formalized in Assumption 1 below) and a predetermined micro state \(Z_{i,t-1}\):
\begin{align*}
Y_{it}
&= a(Z_{i,t-1})X_t + \rho Y_{i,t-1} + \varepsilon_{it},
&& \text{DGP 1: linear in shock, constant propagation} \\[0.5em]
Y_{it}
&= a(Z_{i,t-1})X_t + \rho_t Y_{i,t-1} + \varepsilon_{it},
&& \text{DGP 2: shock-dependent propagation} \\[0.5em]
Y_{it}
&= f(Z_{i,t-1},X_t) + \rho Y_{i,t-1} + \varepsilon_{it},
&& \text{DGP 3: nonlinear in shock}
\end{align*}
where: $\varepsilon_{it}$ is i.i.d.\ across $i$ and $t$, mean zero, and
independent of $X_t$; $a(\cdot)$ is an unrestricted function of $Z$; $f(\cdot,\cdot)$ is nonlinear in $X$, and in DGP~2 $\rho_t$ is a function of past aggregate shocks, so that perturbations to $X_t$ affect future propagation.

These DGPs isolate three distinct channels through which aggregate shocks affect outcomes: heterogeneous exposure to a shock that enters linearly, but with time-invariant propagation (DGP~1), shock-dependent propagation (DGP~2), and nonlinear dependence on the shock (DGP~3). As we show below, only in the first case do LPs recover causal state-dependent impulse responses.

The key distinction across the three DGPs is how the horizon-$h$ conditional
mean, $\mathbb E[Y_{i,t+h}\mid\mathcal F_{t-1},X_t]$, changes when the
shock is perturbed from $X_t$ to $X_t+\delta$. We focus here on the condition under which a state-dependent LP
recovers the functional response of interest, leaving the standard
causal shock-exogeneity assumptions that give this response a causal interpretation
to Section~\ref{identification}. The condition is that the change in the
conditional mean takes the multiplicative form
\[
g_h(Z_{i,t-1})\delta,
\]
so that the object of interest is the function $g_h(\cdot)$, which gives the state-dependent response per unit change in the shock.

Under DGP~1, increasing $X_t$ by $\delta$ changes the impact conditional mean
by $a(Z_{i,t-1})\delta$. With time-invariant propagation, the condition is satisfied because the horizon-$h$
change is $\rho^h a(Z_{i,t-1})\delta=g_h(Z_{i,t-1})\delta$. 

Under DGP~2, the condition generally fails because the shock also affects
future propagation. At $h=1$, the conditional mean contains
$\mathbb E[\rho_{t+1}Y_{it}\mid\mathcal F_{t-1},X_t]$, and increasing $X_t$
by $\delta$ changes this term through both $\rho_{t+1}$ and $Y_{it}$. The
resulting change is not generally proportional to $\delta$ with a coefficient
depending only on $Z_{i,t-1}$. This DGP is the micro-macro counterpart of that in 
\citet{gonccalves2024state}. Our formulation highlights that
the failure arises from shock-dependent propagation, rather than from state
endogeneity per se.

Under DGP~3, the condition can fail even at $h=0$ because the outcome is
nonlinear in the shock. With sign asymmetry, for example, $Y_{it}
=
\beta_1 Z_{i,t-1}X_t
+
\beta_2 Z_{i,t-1}X_t\mathbf 1\{X_t>0\}
+
\rho Y_{i,t-1}
+
\varepsilon_{it}$,
increasing $X_t$ by $\delta$ changes the shock-dependent part of the
conditional mean by
$\beta_1 Z_{i,t-1}\delta
+
\beta_2 Z_{i,t-1}
\left[
(X_t+\delta)\mathbf 1\{X_t+\delta>0\}
-
X_t\mathbf 1\{X_t>0\}
\right].$
Because the second term depends on $X_t$, the change cannot generally be
written as $g_0(Z_{i,t-1})\delta$.

Now consider different LP specifications:
\begin{subequations}
\begin{align}
Y_{i,t+h}
&= \beta_h^{\mathrm{scalar}} X_t + u_{i,t+h},
&& \text{Scalar LP} \label{slp}\\
Y_{i,t+h}
&= (\alpha_h+ \beta_h^{\mathrm{linearSD}} Z_{i,t-1}) X_t + u_{i,t+h},
&& \text{LP with linear state dependence} \nonumber\\
&&& \text{(standard approach)} \label{llp}\\
Y_{i,t+h}
&= b_h^{\top}\phi(Z_{i,t-1}) X_t + u_{i,t+h},
&& \text{LP with nonparametric state dependence} \nonumber\\
&&& \text{(our proposal)} \label{nlp}
\end{align}
\end{subequations}
where \(\phi(\cdot)\) is a vector of basis functions, allowing for flexible state dependence.

\citet{kolesar2024dynamic} show that scalar LPs as in \eqref{slp} retain an
average causal interpretation even when the DGP is nonlinear. By contrast, we show that the standard LP with linear state dependence in \eqref{llp} does not
generally recover the object of interest \(g_h(\cdot)\) (unless \(g_h(\cdot)\) is linear) and that \(\beta_h^{\mathrm{linearSD}}\) is
a weighted average of local derivatives of \(g_h(\cdot)\). Thus, a positive \(\beta_h^{\mathrm{linearSD}}\) need not imply larger causal responses for higher-state units, contrary to the usual empirical interpretation.

This reveals a fundamental asymmetry. Once interaction terms are introduced, the robustness of scalar LPs no longer carries over and functional-form misspecification matters. Standard state-dependent LPs therefore generally do not recover $g_h(\cdot)$; flexible nonparametric dependence on $Z_{i,t-1}$, as in \eqref{nlp}, permits the LP to recover it directly under DGP~1.

In DGP~2 and DGP~3, by contrast, the impulse response does not admit a multiplicative representation in the first place and state-dependent LPs generally give projections and not the object of interest. For DGP~3, note that including nonlinear shock terms in the LP does not by itself solve the problem because, even if \(X_t\) is an m.d.s., transformations such as \(X_t\mathbf 1\{X_t>0\}\) need not be. Hence the nonlinear shock regressors are not guaranteed to be orthogonal to the LP residual. This applies to scalar and state-dependent LPs alike: nonlinear-in-shock LPs therefore require stronger assumptions.

Table 1 summarizes what each LP specification recovers in the three stylized
DGPs. 

\begin{table}[t]
\centering
\caption{What LPs recover under three stylized DGPs}
\vspace{0.5em}
\label{tab:toyDGPs}
\renewcommand{\arraystretch}{1.4}
\begin{tabular}{L{0.28\textwidth} C{0.18\textwidth} C{0.18\textwidth} C{0.18\textwidth}}
\hline\hline
& \textbf{DGP 1} & \textbf{DGP 2} & \textbf{DGP 3} \\
\hline
\textbf{Scalar LP} &
Average causal effect &
Average causal effect &
Average causal effect \\[2em]

\textbf{LP with linear state dependence (standard approach)} &
Projection (IRF if state dependence is exactly linear)&
Projection &
Projection \\[4.2em]

\textbf{LP with nonparametric state dependence (our proposal)} &
Causal state-dependent IRF &
Projection &
Projection  \\
\hline\hline
\end{tabular}
\end{table}

In the remainder of the paper, we: (i) formalize conditions under which LPs recover causal state-dependent impulse responses; (ii) characterize when the conditions hold or fail, admitting environments with state-dependent exposure and constant propagation (such as DGP~1) while excluding those with shock-dependent propagation or nonlinear-in-shock effects (such as DGPs~2-3); and (iii) show that standard LPs with linear state dependence do not generally recover the object of interest, motivating a nonparametric approach. We argue that the conditions plausibly hold in many canonical micro-macro DGPs and develop estimation and inference of causal state-dependent impulse responses via LPs with nonparametric state dependence in these settings.


\section{Identification of Causal State-Dependent IRFs}\label{identification}
We now formally discuss identification and the environments where it holds or fails.
\subsection{Identification and the Limits of Standard State-Dependent LPs}

This section provides a general characterization of when LPs recover causal state-dependent impulse responses. Rather than specifying a full structural model or DGP, we impose a sufficient condition on the horizon-specific conditional mean. We then show that, even under this condition, standard LPs with linear state dependence generally fail to recover the causal state-dependent impulse response, motivating our nonparametric approach.

We observe a panel $\{Y_{it},X_t,Z_{it},W_{it}\}$ for $i=1,\dots,N$ and
$t=1,\dots,T$, where $Y_{it}$ is a micro outcome, $X_t$ is an aggregate shock,
$Z_{it}$ is a micro state, and $W_{it}$ collects control variables that may be
included in the LP specification. Let $\mathcal F_{t-1}$ denote the pre-shock
information set,
$\mathcal F_{t-1}
=
\sigma\big(\{Y_{is},Z_{is},W_{is}\}_{i,s\le t-1},
           \{X_s\}_{s\le t-1}\big).$

          Following the potential-outcome formulation used in the causal LP literature,
let $Y_{i,t+h}(x)$ denote the horizon-$h$ outcome when the date-$t$ shock is
set to $x$. Our object of interest is the finite-$\delta$ state-dependent
impulse response
\[
\mathrm{IRF}_h(z;\delta)
=
\mathbb E\!\left[
Y_{i,t+h}(X_t+\delta)-Y_{i,t+h}(X_t)
\mid Z_{i,t-1}=z
\right].\footnote{This is the micro-macro analogue of the state-dependent response in
\citet{gonccalves2024state}. See their paper for a discussion of alternative definitions.}
\]

For each horizon $h\geq0$, we consider the LP
\begin{equation}
Y_{i,t+h}
=
g_h(Z_{i,t-1})X_t
+
W_{i,t-1}^{\top}\gamma_h
+
u_{i,t+h},
\label{eq:LP}
\end{equation}
where $g_h(\cdot)$ is an unknown function of the state. We seek conditions under which
$g_h(\cdot)$ coincides with the causal state-dependent response per unit
shock.

The following assumptions suffice.

\noindent\begin{assumption}(\textbf{Causal shock-exogeneity}).\label{assum1}
For each horizon $h\geq0$ and every $x$:
\begin{enumerate}
\item[(i)]
$\mathbb E[Y_{i,t+h}(x)\mid\mathcal F_{t-1},X_t]
=
\mathbb E[Y_{i,t+h}(x)\mid\mathcal F_{t-1}];$
\item[(ii)]
$
\mathbb E[X_t\mid\mathcal F_{t-1}]=0.
$
\end{enumerate}
\end{assumption}

\begin{assumption}(\textbf{Predetermined state and controls}).\label{assum2}
$(Z_{i,t-1},W_{i,t-1})$ are $\mathcal F_{t-1}$-measurable.
\end{assumption}

\begin{assumption}(\textbf{Conditional mean condition}).\label{assum3}
For each horizon $h\geq0$,
\begin{equation}
\mathbb E[Y_{i,t+h}\mid\mathcal F_{t-1},X_t]
=
g_h(Z_{i,t-1})X_t
+
r_{i,h}(\mathcal F_{t-1}).
\label{eq:A3}
\end{equation}
\end{assumption}

Assumption~\ref{assum1}(i) provides the causal link between observed
conditional means and potential outcomes.\footnote{We maintain consistency:
$Y_{i,t+h}=Y_{i,t+h}(x)$ when $X_t=x$. Hence, for the shock values under
consideration, Assumption~\ref{assum1}(i) implies
$\mathbb E[Y_{i,t+h}(x)\mid\mathcal F_{t-1}]
=\mathbb E[Y_{i,t+h}\mid\mathcal F_{t-1},X_t=x]$.
For finite perturbations, we restrict attention to $\delta$ such that
$X_t+\delta$ lies in the support of the conditional distribution of $X_t$
given $\mathcal F_{t-1}$.}
Some version of such a shock-exogeneity condition is standard in the causal
LP literature, typically requiring the shock to be independent, or conditionally
independent, of the remaining determinants of future outcomes.
In the structural-function framework of \citet{kolesar2024dynamic}, for
example, shock independence implies
Assumption~\ref{assum1}(i). With pre-shock information
$\mathcal F_{t-1}$, the analogous conditional-independence restriction would
likewise imply our assumption. We require only this conditional-mean
implication, rather than full independence, because
Assumption~\ref{assum3} restricts the true conditional mean.
Assumption~\ref{assum1}(ii) is the usual m.d.s.\ condition with respect to
the pre-shock information set and provides the orthogonality used by the LP
moment conditions. Assumption~\ref{assum2} is the standard requirement that state and controls be predetermined.
Notice that the state may nevertheless be endogenous, in the sense of
depending on past shocks.

Assumption~\ref{assum3} is the additional restriction. It requires that the
horizon-$h$ conditional mean be linear in the shock, with a slope that depends
on the predetermined state. By Assumption~\ref{assum1}(i), consistency, and
Assumption~\ref{assum3} we have that 
$\mathbb E[Y_{i,t+h}(x)\mid\mathcal F_{t-1}]
=
g_h(Z_{i,t-1})x+r_{i,h}(\mathcal F_{t-1})$ for every relevant $x$. 
It follows from Assumption~\ref{assum2} and iterated expectations that the
finite-$\delta$ response satisfies
\[
\mathrm{IRF}_h(z;\delta)
=
\mathbb E\!\left[
Y_{i,t+h}(X_t+\delta)-Y_{i,t+h}(X_t)
\mid Z_{i,t-1}=z
\right]
=
g_h(z)\delta.
\]
Thus, $g_h(z)$ is the total causal response per unit shock, including both
direct effects and indirect effects operating through the subsequent
endogenous evolution of the system. The nuisance component
$r_{i,h}(\mathcal F_{t-1})$ is unrestricted and need not coincide with the
linear projection onto $W_{i,t-1}$ in \eqref{eq:LP}; it may contain arbitrary
lagged dynamics, fixed effects, and nonlinear dependence on past aggregate
and idiosyncratic variables.

Finally, Assumptions~\ref{assum1}(ii), \ref{assum2}, and~\ref{assum3} imply
\[
g_h(z)
=
\frac{\mathbb E[X_tY_{i,t+h}\mid Z_{i,t-1}=z]}
     {\mathbb E[X_t^2\mid Z_{i,t-1}=z]},
\qquad z\in\mathcal Z,
\]
so the causal state-dependent response per unit shock is nonparametrically
identified and coincides with the LP estimand without controls.\footnote{We assume the relevant moments are finite and $\mathbb E[X_t^2\mid Z_{i,t-1}=z]>0$ for all $z\in\mathcal Z$. Controls $W_{i,t-1}$ may be included for efficiency without affecting identification.}

The following result clarifies that a causal interpretation is still available
even when the chosen state variable $Z_{i,t-1}$ does not coincide with the true
state.

\bigskip
\noindent\textbf{Proposition 1 (State choice preserves causal interpretation).}
Suppose Assumptions~\ref{assum1}-\ref{assum2} hold and that, for some
$\mathcal F_{t-1}$-measurable state $s_{i,t-1}$,
\[
\E[Y_{i,t+h}\mid \mathcal F_{t-1},X_t]
=
\tilde g_h(s_{i,t-1})X_t+r_{i,h}(\mathcal F_{t-1}).
\]
Let $Z_{i,t-1}$ be any $\mathcal F_{t-1}$-measurable state. For any $z$ such
that $\E[X_t^2\mid Z_{i,t-1}=z]>0$, consider the functional LP estimand
\[
g_h(z)
=
\frac{\E[X_tY_{i,t+h}\mid Z_{i,t-1}=z]}
{\E[X_t^2\mid Z_{i,t-1}=z]}.
\]
Then
\[
g_h(z)
=
\E\!\left[
w_h(s_{i,t-1};z)\tilde g_h(s_{i,t-1})
\mid Z_{i,t-1}=z
\right],
\qquad
w_h(s;z)
=
\frac{\E[X_t^2\mid s_{i,t-1}=s,Z_{i,t-1}=z]}
{\E[X_t^2\mid Z_{i,t-1}=z]}.
\]
The weights satisfy $w_h(s;z)\ge0$ for all $s$ and
$\E[w_h(s_{i,t-1};z)\mid Z_{i,t-1}=z]=1$.
Thus, $g_h(z)$ is a convex variance-weighted average of the underlying
causal responses within the subpopulation $Z_{i,t-1}=z$.\footnote{With
predetermined controls in the LP, the same result holds after partialling
them out.}

\begin{proof}[Proof of Proposition~1]
Using the conditional-mean restriction and iterated expectations,
\[
\E[X_tY_{i,t+h}\mid Z_{i,t-1}=z]
=
\E[\tilde g_h(s_{i,t-1})X_t^2\mid Z_{i,t-1}=z],
\]
because, by Assumption~\ref{assum1}(ii), and since
$r_{i,h}(\mathcal F_{t-1})$ and $Z_{i,t-1}$ are
$\mathcal F_{t-1}$-measurable,
\[
\E[X_t r_{i,h}(\mathcal F_{t-1})\mid Z_{i,t-1}=z]
=
\E\!\left[
r_{i,h}(\mathcal F_{t-1})
\E[X_t\mid\mathcal F_{t-1}]
\mid Z_{i,t-1}=z
\right]
=0.
\]
Dividing by $\E[X_t^2\mid Z_{i,t-1}=z]$ and applying iterated expectations
gives
\[
g_h(z)
=
\E\!\left[
w_h(s_{i,t-1};z)\tilde g_h(s_{i,t-1})
\mid Z_{i,t-1}=z
\right].
\]
Moreover, $w_h(s;z)\ge0$, and
\[
\E[w_h(s_{i,t-1};z)\mid Z_{i,t-1}=z]
=
\frac{
\E[\E[X_t^2\mid s_{i,t-1},Z_{i,t-1}=z]
\mid Z_{i,t-1}=z]
}{
\E[X_t^2\mid Z_{i,t-1}=z]
}
=1.
\]
\end{proof}

\begin{remark}[Interpretation and robustness to state choice]
Proposition~1 shows that a functional LP based on any predetermined state
$Z_{i,t-1}$ identifies a variance-weighted average of the underlying causal
responses within the subpopulation $Z_{i,t-1}=z$, even if $Z_{i,t-1}$ does
not span all relevant heterogeneity. Hence $g_h(z)$ remains a well-defined
subpopulation causal response. In particular, if $g_h(z)$ is increasing,
subpopulations with higher values of $Z_{i,t-1}$ have larger causal responses.
\end{remark}

\begin{remark}[Heterogeneity]
The notation $g_h(\cdot)$ denotes a common estimand, not a common structural
effect. If the underlying responses are unit-specific,
$\tilde g_{h,i}(s_{i,t-1})$, then $g_h(z)$ is a variance-weighted average of
these responses within the subpopulation $Z_{i,t-1}=z$.
\end{remark}

The causal state-dependent impulse response is the function $g_h(\cdot)$. A natural estimator is therefore nonparametric. By contrast, the applied literature uses LPs with linear state dependence and interprets this linear function as a causal state-dependent impulse response. The next result shows that this interpretation is generally incorrect.

\bigskip
\noindent\textbf{Proposition 2 (LPs with linear state dependence do not generally recover state-dependent IRFs).}
\label{prop:coarse2}
Suppose Assumptions~\ref{assum1}-\ref{assum3} hold, the regularity
conditions of Lemma~1 in Appendix A are satisfied, and $g_h(\cdot)$ is absolutely
continuous on $\mathcal Z$. The slope coefficient in the LP with linear state dependence in
\eqref{llp} can be written as
\[
\beta_h^{\mathrm{linearSD}}
=
\int_{\mathcal Z} \omega_h(z)\,g_h'(z)\,dz,
\]
where \(g_h'(z)\) is the derivative of the state-dependent impulse response and
\(\omega_h(z)\) is a weighting function determined by the joint distribution of the data, with \(\omega_h(z) \ge 0\) for \(z \in \mathcal Z\) and
\(\int_{\mathcal Z}\omega_h(z)\,dz=1\). Lemma~\ref{misspecification beta} in Appendix~\ref{app:linear_misspec} provides the general result underlying this proposition and its derivation.
\bigskip

Recent work has shown that LP coefficients can retain causal content even under
misspecification: they can be interpreted as weighted averages or projections of
causal responses
\citep{kolesar2024dynamic,Bousquet2025_UncoveringNonlinearities,
winkler2026,AlmuzaraSancibrian2025MicroResponses}. Proposition~2 clarifies what this means in state-dependent applications, where
the object of interest is not a scalar average effect but the response function
over the state. The distinction is clearest in the
conditionally homoskedastic case,
\(\mathbb E[X_t^2\mid Z_{i,t-1}=z]= \mathbb E[X_t^2]\). For the scalar LP in \eqref{slp}, the result of
\citet{kolesar2024dynamic}, applied to our setting, implies
\[
\beta_h^{\mathrm{scalar}}
=
\mathbb E[g_h(Z_{i,t-1})]
=
\int_{\mathcal Z} g_h(z)\,dF_Z(z).
\]
Thus the scalar LP coefficient averages response levels: it is an average causal
effect.

The coefficient in the standard LP with linear state dependence is different. 
Appendix~\ref{app:linear_misspec} shows that, under conditional homoskedasticity,
$\beta_h^{\mathrm{linearSD}}
=
\frac{\operatorname{Cov}\!\left(Z_{i,t-1},g_h(Z_{i,t-1})\right)}
     {\operatorname{Var}(Z_{i,t-1})} ,$
so the coefficient is the slope in the linear projection of \(g_h(Z_{i,t-1})\) on a constant and \(Z_{i,t-1}\). Proposition~2
unpacks this projection slope further by showing that it is a weighted average of local derivatives of
\(g_h(\cdot)\), not of response levels. Consequently, unless \(g_h(\cdot)\) is linear, LPs
with linear state dependence do not generally recover the causal
state-dependent response function that empirical applications typically seek to
estimate. In particular, a positive \(\beta_h^{\mathrm{linearSD}}\) need not
imply larger responses at higher values of \(Z_{i,t-1}\), and a small coefficient may reflect cancellation across regions of the state space 
where the function has opposite slopes.

\medskip



\subsection{Economic Content of the Identifying Condition}

Assumption~\ref{assum3} is a restriction on the conditional mean of the outcome, rather than a full characterization of the DGP. It is not generally testable, but it is implied by economically meaningful classes of DGPs. Here we provide an economic interpretation of this condition and highlight classes of canonical micro-macro DGPs in which it is satisfied. Importantly, this characterization is not used for estimation; it clarifies when LPs can recover causal state-dependent impulse responses.

\subsubsection{Admissible DGPs.}

Assumption~\ref{assum3} holds in a broad class of micro-macro environments in which the aggregate shock enters linearly and propagation is governed by time-invariant laws of motion. 

\emph{(i) First-order perturbation solutions of heterogeneous-agent models.}

Assumption~\ref{assum3} is naturally satisfied within first-order perturbation solutions of HA/HANK
models. For any individual (grid point) \(i\) with predetermined state \(s_{i,t-1}\)
that may be endogenously affected
by past shocks, the linearized equilibrium law of motion implies that the conditional
mean of a micro outcome is linear in the aggregate innovation, with a
coefficient that may depend flexibly on the micro state.
Mapping to our notation, let $Y_{it}$ be any household-level outcome (e.g.\ consumption or earnings) and $X_t$ be the aggregate shock. At impact, the linearized solution implies
$\mathbb{E}\!\left[Y_{i,t}\mid \mathcal{F}_{t-1},X_t\right]
=
g_0(s_{i,t-1})X_t+r_{i,0}(\mathcal{F}_{t-1}).$
Because all subsequent propagation is governed by the same fixed linear law of motion and policy rules, the effect of the date-$t$ shock remains linear as the system is iterated forward. Hence, for each horizon $h$,
\[
\mathbb{E}\!\left[Y_{i,t+h}\mid \mathcal{F}_{t-1},X_t\right]
=
g_h(s_{i,t-1})X_t+r_{i,h}(\mathcal{F}_{t-1}).
\]
Thus, Assumption~\ref{assum3} holds exactly for the linearized model: the shock affects outcomes through a linear system whose propagation coefficients are fixed. Linearization does not make \(g_h(\cdot)\) linear in the micro state because the function reflects the underlying nonlinear household problem across the state space.

This structure can be given a microfoundation using the sufficient-statistics representation of \citet{Auclert2019}, in which an individual's consumption response to an aggregate shock decomposes, to first order, into a substitution effect and a wealth effect equal to the product of the marginal propensity to consume (MPC) and a balance-sheet exposure term - both unrestricted, potentially nonlinear functions of the micro state. This characterizes the cross-sectional structure of the response, while the linearized dynamics standard in perturbation-based HANK models ensure propagation is governed by time-invariant coefficients, so the horizon-$h$ effect is summarized by $g_h(s_{i,t-1})$. This also motivates the empirical choice of a low-dimensional state $Z_{i,t-1}$: while the full micro state \(s_{i,t-1}\) is high-dimensional, responses depend primarily on MPCs and balance-sheet exposures, which can often be summarized by low-dimensional predetermined characteristics such as liquid wealth.

When the linearized system is used as an approximation to a nonlinear economy, the relevant requirement is that the first-order approximation remain uniformly accurate over the range of shock realizations being compared. This is the same requirement that underlies impulse-response analysis in linearized heterogeneous-agent models: our condition does not impose restrictions beyond those needed to interpret impulse responses computed from first-order perturbation solutions \citep[e.g.][]{BoppartKrusellMitman2018,BayerLuetticke2020}. Moderate perturbations may therefore be well approximated when the induced equilibrium path remains in a region where the equilibrium mapping is smooth and close to linear. The approximation may become unreliable when the perturbation moves the economy into regions where nonlinearities become quantitatively important, for example because borrowing constraints bind for many agents, aggregate constraints such as the zero lower bound become active, or the shock induces large reallocations across heterogeneous agents \citep{AhnKaplanMollWinberryWolf2018,david2022risk}.\footnote{If $\mathbb E[Y_{i,t+h}\mid \mathcal F_{t-1},X_t]$ is twice continuously differentiable in $X_t$, the difference between the nonlinear finite-shock response and its first-order approximation is governed by a second-order remainder over the relevant range of shock realizations.}

\medskip
\emph{(ii) Macro and macro-finance models.}

Consider a panel VAR for the joint process $(Y_{it},W_{it})^{\top}$ with state-dependent exposure to the aggregate shock and time-invariant propagation:
\begin{equation}
\begin{pmatrix}
Y_{it}\\
W_{it}
\end{pmatrix}
=
\alpha_i+\lambda_t
+
A_i(L)
\begin{pmatrix}
Y_{i,t-1}\\
W_{i,t-1}
\end{pmatrix}
+
\begin{pmatrix}
\lambda_Y(Z_{i,t-1})\\
\lambda_W(Z_{i,t-1})
\end{pmatrix}
X_t
+
\begin{pmatrix}
\eta_{it}\\
\nu_{it}
\end{pmatrix},
\label{eq:admissible_panel_VAR}
\end{equation}
where $A_i(L)$ is a lag polynomial with possibly heterogeneous but time-invariant coefficients. The functions $\lambda_Y(\cdot)$ and $\lambda_W(\cdot)$ govern exposure of all variables to the aggregate shock and may be constant or arbitrary measurable functions of the predetermined state $Z_{i,t-1}$.

Because the system is linear in $X_t$ and the propagation coefficients are fixed, iterating \eqref{eq:admissible_panel_VAR} forward implies that, for each horizon $h$,
$\mathbb{E}\!\left[Y_{i,t+h}\mid \mathcal F_{t-1},X_t\right]
=
r_{i,h}(\mathcal F_{t-1})
+
g_h(Z_{i,t-1})\,X_t,$
for some measurable function $g_h(\cdot)$. When the relevant exposure and propagation heterogeneity is captured by the
predetermined state, the function \(g_h(Z_{i,t-1})\) summarizes both the direct
effect of \(X_t\) on \(Y_{i,t+h}\) through \(\lambda_Y\), and the indirect effects
operating through the endogenous post-\(t\) path of \(W\) and \(Y\), through
\(\lambda_W\) and the time-invariant propagation \(A_i(L)\). If some propagation
heterogeneity is not included in \(Z_{i,t-1}\), the exact response depends on a
richer state, and \(g_h(Z_{i,t-1})\) should instead be interpreted as the
subpopulation average response characterized in Proposition~1.

This structure is consistent with leading heterogeneous-firm monetary models. For example, in \cite{ottonello2020financial}, monetary policy shocks shift the intertemporal price system and firms' financing conditions. The model generates nonlinearity in exposure because firms' marginal propensity to invest depends on default risk and can vary  over the financial-state distribution (see their Figure~2), so the function \(\lambda(\cdot)\) need not be linear. Importantly, while the financial state evolves endogenously after the shock, the structural forces governing propagation (e.g., depreciation and adjustment costs) remain time-invariant. As a result, the shock shifts the level of investment without altering its propagation dynamics.

The mechanism in \cite{cui2025risk} also generates nonlinear exposure: the effect of an interest-rate shock on risk-taking depends on a predetermined leverage constraint, and the threshold for adopting risky projects can be hump-shaped in the interest rate. The nonlinearity arises from equilibrium choice conditions, not from changes in the propagation dynamics following the shock.

\subsubsection{Inadmissible DGPs.}

\medskip
\emph{(i) Nonlinear dependence on the shock.}
Assumption~\ref{assum3} excludes environments in which the conditional mean is nonlinear in the shock, e.g.,
\[
\mathbb{E}[Y_{i,t+h}\mid \mathcal F_{t-1},X_t]
=
f_h(Z_{i,t-1},X_t)+r_{i,h}(\mathcal F_{t-1}).
\]
Examples include regime-switching policy, zero lower bound nonlinearities, terms such as $1(X_t>0)$ or large discrete interventions. 

\medskip
\emph{(ii) Shock-dependent propagation.}
Assumption~\ref{assum3} also fails when aggregate shocks change the propagation of the outcome itself. A useful example is \citet{almuzara2025nonlinear}, whose nonlinear income process allows business-cycle conditions to affect both income exposures and the persistence and riskiness of future income dynamics. If such a model is the true DGP, then even observing exogenous aggregate shocks is not enough for state-dependent LPs of the form considered here to recover the model-implied impulse responses: the response depends on the fully nonlinear propagation mechanism, including feedback through endogenous micro states affected by past aggregate shocks, rather than only on the initial state.

\medskip
\emph{(iii) Nonlinear HA/HANK models.}
The previous mechanisms arise naturally in nonlinear heterogeneous-agent environments beyond first-order linearized approximations. In such models, aggregate shocks can move the economy across regions of the state space where local dynamics differ, for example because of occasionally binding constraints, kinked adjustment costs, or endogenous regime shifts. Shocks then affect not only the level of \(Y_{it}\), but also its subsequent propagation. The horizon-specific response depends on the nonlinear state path induced by the shock, and the conditional mean generally cannot be written in the multiplicative form of Assumption~\ref{assum3}.

\medskip
Such violations of Assumption~\ref{assum3} are economically plausible in aggregate time-series settings, where shocks can alter persistence. Our condition targets a different empirical setting: micro-macro panels in which cross-sectional heterogeneity operates through state-dependent exposure to common aggregate shocks that enter linearly, while the propagation of individual outcomes is governed by time-invariant structural primitives (e.g., depreciation, adjustment costs, or idiosyncratic risk processes). In these environments, LPs can recover causal state-dependent impulse responses (provided that state dependence is estimated nonparametrically, which we discuss next).

\section{Estimation and Inference}
\label{sec:estimation}
This section develops sieve estimation and inference for the state-dependent impulse response function \(g_h(\cdot)\). Although the procedures use familiar ingredients - sieve approximation, HAC covariance estimation, and Gaussian simulation - their joint validity requires additional analysis in our setting. In micro-macro panels, aggregate components of LP errors may survive cross-sectional averaging. Because the importance of these components can vary with the state, the effective sample size, and hence the convergence rate of the estimator, can differ across the state space. Overlapping LP horizons additionally induce serial dependence. The Online Appendix characterizes these rate regimes, establishes feasible HAC consistency, and proves pointwise and state-uniform inference.

\subsection{Sieve Estimator}
\label{sec:sieve}
For each horizon $h \in \{0,1,\ldots,H\}$, we estimate $g_h(\cdot)$ using a
sieve approximation. Let $\{\phi_{j,J}(z)\}_{j=1}^J$ denote a
basis of cubic B-splines defined on the support of $Z_{i,t-1}$, where interior
knots are located at equally spaced empirical quantiles and with a constant term included. For a
given sieve dimension $J$, approximate
\begin{equation}
g_h(z) \approx \phi_J(z)^{\top} b_h,
\qquad
\phi_J(z) = \bigl(\phi_{1,J}(z),\ldots,\phi_{J,J}(z)\bigr)^{\top}.
\label{eq:sieve_rep}
\end{equation}

Substituting \eqref{eq:sieve_rep} into the LP yields the
estimation equation
\begin{equation}
Y_{i,t+h}
=
\sum_{j=1}^{J}
b_{h,j}\,\bigl[\phi_{j,J}(Z_{i,t-1})\,X_t\bigr]
+
W_{i,t-1}^{\top}\gamma_h
+
u_{i,t+h}.
\label{eq:lp_estimation}
\end{equation}

\medskip

Let $D_{it}(J)$ denote the vector of regressors in \eqref{eq:lp_estimation}. The
OLS estimator is
\begin{equation}
\widehat{\theta}_h(J)
=
\left(
\sum_{i=1}^{N}\sum_{t=1}^{T-h}
D_{it}(J)\,D_{it}(J)^{\top}
\right)^{-1}
\left(
\sum_{i=1}^{N}\sum_{t=1}^{T-h}
D_{it}(J)\,Y_{i,t+h}
\right),
\label{eq:ols_new}
\end{equation}
with $\widehat{b}_h(J)$ denoting the subvector corresponding to the sieve
coefficients. The associated estimator of the impulse response is
\begin{equation}
\widehat{g}_{h,J}(z) = \phi_J(z)^{\top}\,\widehat{b}_h(J).
\label{eq:ghat_J}
\end{equation}

We select the sieve dimension $J$ using the Akaike information criterion (additional selection criteria are examined in the simulation section and in the Appendix and perform similarly):
\begin{equation}
\mathrm{AIC}_h(J)
=
N(T-h) \log\!\left(
\frac{1}{N(T-h)}
\sum_{i=1}^{N}\sum_{t=1}^{T-h}
\widehat{u}_{i,t+h}(J)^2
\right)
+
2K_J,
\label{eq:aic_new}
\end{equation}
where
$\widehat{u}_{i,t+h}(J)$ denote the OLS residuals from \eqref{eq:lp_estimation} and $K_J = J + \dim(W_{i,t-1})$ is the total number of estimated parameters.
The selected sieve dimension is
\begin{equation}
\widehat{J}_h = \arg\min_{J \in \mathcal{J}}\,\mathrm{AIC}_h(J).
\label{eq:Jhat}
\end{equation}
In practice, we take the candidate set $\mathcal J=\{4,\ldots,20\}$, with the upper bound serving as a finite-sample safeguard against an overly rich basis. Here $J=4$ gives a global cubic polynomial, while larger values of $J$ add
interior knots to the cubic B-spline basis and allow for more flexible nonlinearities. The final estimator is
\begin{equation}
\widehat{g}_h(z) = \phi_{\widehat{J}_h}(z)^{\top}\,\widehat{b}_h(\widehat{J}_h).
\label{eq:ghat_final}
\end{equation}
The following algorithm summarizes the steps for estimating the impulse response.
\paragraph{Algorithm 1 (Sieve LP estimator of the impulse response).}
For each horizon $h = 0,\ldots,H$:
\begin{enumerate}
  \item Construct cubic B-spline basis functions for $Z_{i,t-1}$ with a constant term included and interior
        knots equally placed at empirical quantiles.
  \item For each $J \in \mathcal{J}$, form the $J$ interaction regressors
        $\phi_{j,J}(Z_{i,t-1})\,X_t$, $j = 1,\ldots,J$, and stack them with
        $W_{i,t-1}$ to form $D_{it}(J)$.
  \item Estimate \eqref{eq:lp_estimation} by OLS and compute residuals
        $\widehat{u}_{i,t+h}(J)$.
  \item Compute $\mathrm{AIC}_h(J)$ for each $J \in \mathcal{J}$ and select
        $\widehat{J}_h$ via \eqref{eq:Jhat}.
  \item Set $\widehat{g}_h(z) = \phi_{\widehat{J}_h}(z)^{\top}\,\widehat{b}_h(\widehat{J}_h)$.
\end{enumerate}

\begin{remark}[Multivariate states]\label{multi}
The estimator can be implemented with a multivariate state \(Z_{i,t-1}\in\mathbb R^d\)
by replacing the univariate basis with tensor-product basis functions and interacting
these basis functions with \(X_t\). The remainder of the framework extends analogously. As in any nonparametric procedure, however, increasing the dimension of
the state worsens the bias-variance trade-off, so the approach is most useful for low-dimensional states. 
\end{remark}

\subsection{Inference}
\label{sec:inference_main}

We consider both pointwise and uniform inference for the impulse response $g_h(\cdot)$. 

Uniform inference across the state space is important because the empirical object is often the entire state-dependent response function, rather than the response at a single preselected state. Economic conclusions frequently depend on features of the whole function - such as whether responses {appear} monotone, hump-shaped, concentrated in the tails, or {differ} across regions of the state space. Uniform confidence bands {provide simultaneous coverage of the response function over the relevant support, allowing such features to be assessed while accounting for sampling uncertainty jointly across states}.

Let $D_{it}$, $\widehat b_h$, $\widehat u_{i,t+h}$ denote, respectively, the vector of regressors, the sieve coefficient estimator, the OLS residual in \eqref{eq:lp_estimation} corresponding to the AIC-selected number of basis functions $\widehat{J}_h$. Consider the sample second moment matrix
\[
\frac{1}{N(T-h)}
\sum_{i=1}^N\sum_{t=1}^{T-h}
D_{it}D_{it}^{\top}
=
\begin{pmatrix}
\widehat A_{11,h} & \widehat A_{12,h}\\
\widehat A_{21,h} & \widehat A_{22,h}
\end{pmatrix},
\]
where the first block corresponds to
$X_t\phi_{\widehat J_h}(Z_{i,t-1})$ and the second to $W_{i,t-1}$.
Form the partialled-out sieve regressor
\begin{equation}
\widetilde P_{h,i,t}
=
X_t\phi_{\widehat J_h}(Z_{i,t-1})
-
\widehat A_{12,h}\widehat A_{22,h}^{-1}W_{i,t-1},
\label{eq:Ptilde_main}
\end{equation}
and the score process
\begin{equation}
\widehat s_{h,t}
=
\frac{1}{\sqrt N} \sum_{i=1}^N
\widetilde P_{h,i,t}\,\widehat u_{i,t+h}.
\label{eq:score_main}
\end{equation}
We estimate the long-run covariance matrix of the score process by
\begin{equation}
\widehat \Omega_h
=
\widehat\Gamma_h(0)
+
\sum_{k=1}^{L_h}
w_k
\left\{
\widehat\Gamma_h(k)+\widehat\Gamma_h(k)^{\top}
\right\},
\qquad
\widehat\Gamma_h(k)
=
\frac{1}{T-h}
\sum_{t=k+1}^{T-h}
\widehat s_{h,t}\widehat s_{h,t-k}^{\top},
\label{eq:hac_main}
\end{equation}
where $w_k=1-k/(L_h+1)$. We use the
standard Newey-West bandwidth based on the effective time dimension,
\(L_h=\lfloor 4[(T-h)/100]^{2/9}\rfloor\).\footnote{As a robustness check, we also use \(L_h=h\), which is natural in LPs because overlapping horizons induce MA(\(h\))-type dependence in the errors. Results are similar across the two HAC bandwidth choices, so we report the standard Newey-West bandwidth. The alternative lag-augmentation approach of \citet{MontielOleaPlagborgMoller2021}, adapted to micro-macro panels by \citet{AlmuzaraSancibrian2025MicroResponses}, is less suitable in our setting because applying it to all shock-sieve interactions would substantially increase dimensionality relative to the available time dimension.}

The covariance matrix of $\widehat b_h$ is estimated by
\begin{equation}
\widehat V_h
=
\frac{1}{N(T-h)}
\widehat{\widetilde A}_{11,h}^{-1}
\widehat \Omega_h
\widehat{\widetilde A}_{11,h}^{-1},
\label{eq:Vhat_main}
\end{equation}
where $\widehat{\widetilde A}_{11,h}
=\widehat A_{11,h}-\widehat A_{12,h}\widehat A_{22,h}^{-1}\widehat A_{21,h}$.
\label{eq:schur_main}

The pointwise variance of $\widehat g_h(z)$ is
\begin{equation}
\widehat\sigma_h^2(z)
=
\phi_{\widehat J_h}(z)^{\top}
\widehat V_h
\phi_{\widehat J_h}(z).
\label{eq:ptwise_var}
\end{equation}
Thus, for fixed $z$ on the empirical support of $Z_{i,t-1}$, the $(1-\alpha)$ pointwise confidence interval is
\begin{equation}
\mathcal I_h(z)
=
\left[
\widehat g_h(z)
\pm
z_{1-\alpha/2}\widehat\sigma_h(z)
\right],
\label{eq:pointwise_ci_main}
\end{equation}
where $z_{1-\alpha/2}$ is the standard normal critical value.

For uniform inference, let $\mathcal Z_G$ be a fine grid covering the empirical
support of $Z_{i,t-1}$. For each bootstrap draw $b=1,\ldots,B$, generate $\xi_h^{(b)}\sim \mathcal N(0,I_{\widehat J_h})$, and form the Gaussian process
\begin{equation}
G_h^{(b)}(z_m)
=
\phi_{\widehat J_h}(z_m)^{\top}
\widehat V_h^{1/2}
\xi_h^{(b)},
\qquad z_m\in\mathcal Z_G.
\label{eq:bootstrap_process}
\end{equation}
Compute the studentized supremum statistic
\begin{equation}
T_h^{(b)}
=
\max_{z_m\in\mathcal Z_G}
\left|
\frac{G_h^{(b)}(z_m)}
{\widehat\sigma_h(z_m)}
\right|,
\label{eq:bootstrap maximum}
\end{equation}
and let
$c_{h,1-\alpha}^{\mathrm{stu}}$
be the empirical $(1-\alpha)$ quantile of
$\{T_h^{(b)}\}_{b=1}^B$. The $(1-\alpha)$ uniform confidence band is
\begin{equation}
\mathcal C_h(z)
=
\left[
\widehat g_h(z)
\pm
c_{h,1-\alpha}^{\mathrm{stu}}\widehat\sigma_h(z)
\right],
\qquad z\in\mathcal Z.
\label{eq:ucb_main}
\end{equation}
$\mathcal I_h(z)$ gives pointwise coverage at a fixed $z$, whereas
$\mathcal C_h(\cdot)$ gives simultaneous coverage over
$\mathcal Z$.\footnote{For cumulative specifications where the outcome is
$Y_{i,t+h}-Y_{i,t}$, the corresponding estimand is
$g_h(z)-g_0(z)$. Inference proceeds analogously by estimating the LP with
the transformed outcome.}

Studentization is important in our setting because the response estimator can converge at different rates across the state space. Scaling by the state-specific standard error keeps the Gaussian process nondegenerate across these rate regimes and allows a single uniform band to provide simultaneous coverage over the state space.

\subsection{Asymptotic Properties}
\label{sec:asymptotics}

We summarize the main theoretical results in the paper; formal conditions and proofs are provided in the Online Appendix.

To state these results, we maintain
Assumptions~\ref{assum1}-\ref{assum3}, so causal identification
remains as established in Section~\ref{identification}, and impose the
additional regularity conditions collected in Online Appendix
Assumption~A.1. The sampling conditions allow cross-sectional
dependence generated by aggregate innovations. Conditional on the full
path of these innovations, the unit-level covariate and LP-error
processes are independent across units. This conditional independence
does not, however, require the long-run covariance of the
\(N^{-1/2}\)-normalized effective score to remain bounded: aggregate
components of the LP error may generate nonzero conditional score means
and an effective-score covariance that grows with \(N\).

The conditions also accommodate time-varying conditional variance of
the aggregate shock and serial correlation in LP errors, including
correlation induced by overlapping horizons. The remaining requirements
concern moments and short-range temporal dependence, regressor-error
orthogonality, the possibly \(N\)-dependent long-run covariance of the
effective score, regularity of the sieve basis, and well-conditioned
population design matrices.

A key feature of these asymptotics is that \(N\) and \(T\) play
different roles. Cross-sectional averaging reduces idiosyncratic noise,
whereas strong aggregate components of the LP error are reduced only by
time-series averaging. Consequently, the effective sample size
can range from order \(NT\) under weak effective-score dependence to order
\(T\) under strong effective-score dependence. For the growing-dimensional
sieve coefficient vector, this corresponds to stochastic error of order
\(\sqrt{J/(NT)}\) and \(\sqrt{J/T}\), respectively. Because
the importance of aggregate uncertainty can vary across the
state space and may disappear at particular states, the pointwise convergence rate may
also vary with \(z\). Pointwise inference uses the corresponding
state-specific standard error, while uniform inference studentizes the
process so that the Gaussian approximation remains nondegenerate across
these different rate regimes.

We assume that $g_h(\cdot)$ is H\"older smooth on the support $\mathcal Z$ with smoothness parameter $\kappa>1/2$. This smoothness yields a sieve approximation error of order $J^{-\kappa}$ and the coefficient-bias term $J^{1/2-\kappa}$ appearing in Theorem~\ref{thm:pointwise_main}. Uniform inference additionally requires the stronger joint conditions stated in Online Appendix Theorem~A.3. Throughout the asymptotic analysis, the horizon $h$ is fixed, while $N$, $T$, and $J$ grow jointly subject to the rate restrictions stated below. We use $\Rightarrow$ to denote convergence in distribution.

Let \(s_{h,t}\) denote the population counterpart of
\(\widehat s_{h,t}\) in \eqref{eq:score_main}, and define its
finite-\((J,N)\) long-run covariance by
$\Omega_{h,J,N}
\coloneqq
\sum_{k\in\mathbb Z}
\E\!\left[
s_{h,t}s_{h,t-k}^{\top}
\right]$.
We retain the \(N^{-1/2}\) normalization of \(s_{h,t}\), but do not
require \(\Omega_{h,J,N}\) to remain bounded. Define
$\chi_{h,J,N}
\coloneqq
1\vee\lambda_{\max}(\Omega_{h,J,N})$. Let $A_{11,h}$ denote the population counterpart of
$\widehat A_{11,h}$ and define
$
V_h^\star
\coloneqq
A_{11,h}^{-1}
\Omega_{h,J,N}
A_{11,h}^{-1},
\sigma_h^{\star 2}(z)
\coloneqq
\phi_J(z)^\top
V_h^\star
\phi_J(z).
$
Accordingly, $\widehat\sigma_h(z)$ estimates
$\sigma_h^\star(z)/\sqrt{N(T-h)}$.
The HAC estimator
\(\widehat\Omega_h\) estimates the long-run covariance at its actual scale,
so no estimate of \(\chi_{h,J,N}\) is required for implementation.

\begin{theorem} \label{thm:pointwise_main}
Under Online Appendix Assumption~A.1, including $J\to\infty$, $J^2 / (NT)\to0$, $J\chi_{h,J,N}/(NT)\to0 $ and $J/T\to0$ as $N,T \to \infty$, the sieve coefficient estimator satisfies
\[
\|\widehat b_h-b_h\|_2
=
O_p\!\left(J^{1/2-\kappa}\right)
+
O_p\!\left(
\sqrt{\frac{J\chi_{h,J,N}}{NT}}
\right).
\]
If, in addition, the state-specific undersmoothing and
matrix-linearization conditions in Online Appendix
Theorem~A.1 hold, together with the feasible directional
covariance-consistency conditions in Online Appendix Theorem~A.2,
then, for each fixed \(z\in\mathcal Z\),
\[
\frac{\widehat g_h(z)-g_h(z)}{\widehat\sigma_h(z)}
\Rightarrow
\mathcal N(0,1).
\]
Consequently, the pointwise confidence interval $\mathcal I_h(z)$ in \eqref{eq:pointwise_ci_main} satisfies
\[
\P\!\left\{g_h(z)\in\mathcal I_h(z)\right\}
\to 1-\alpha.
\]
\end{theorem}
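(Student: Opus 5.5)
The proof follows the standard sieve-OLS decomposition, adapted to the partialled-out panel design. By Frisch–Waugh–Lovell, $\widehat b_h$ is the OLS coefficient of $Y_{i,t+h}$ on the partialled-out regressor $\widetilde P_{h,i,t}$. Let $b_h$ be the sieve coefficient attaining the $J^{-\kappa}$ sup-norm Hölder approximation, so that $g_h(Z_{i,t-1})X_t=\phi_J(Z_{i,t-1})^\top b_h X_t + R_{J}(Z_{i,t-1})X_t$ with $\sup_z|R_J(z)|=O(J^{-\kappa})$. This gives
\[
\widehat b_h-b_h
=\widehat{\widetilde A}_{11,h}^{-1}\frac{1}{N(T-h)}\sum_{i,t}\widetilde P_{h,i,t}\bigl(R_J(Z_{i,t-1})X_t+u_{i,t+h}\bigr),
\]
where $u_{i,t+h}$ is now the population LP error relative to the projection on $W$. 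The nuisance $r_{i,h}(\mathcal F_{t-1})$ minus its projection on $W_{i,t-1}$ is part of this error. It is orthogonal to $X_t\phi_J(Z_{i,t-1})$ by Assumption~\ref{assum1}(ii), Assumption~\ref{assum2} and iterated expectations, exactly as in the proof of Proposition~1.

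\textbf{Rate.} Three bounds combine.
\begin{enumerate}
\item[(a)] $\|\widehat{\widetilde A}_{11,h}-\widetilde A_{11,h}\|_{op}=o_p(1)$. This follows from a matrix concentration bound under the conditional-independence-given-aggregates structure and short-range dependence, using $J^2/(NT)\to0$ and $J/T\to0$; the latter handles the aggregate factor $X_t^2$ that averages only over $t$. With eigenvalues of $\widetilde A_{11,h}$ bounded away from zero, $\widehat{\widetilde A}_{11,h}^{-1}$ is $O_p(1)$ in operator norm.
\item[(b)] The approximation term is bounded by $\lambda_{\max}$ of the design times $\|R_J\|_\infty$. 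Treating it as a projection of a bounded function in the $J$-dimensional space gives $O_p(J^{1/2-\kappa})$, matching the statement. The crude bound via $\|\phi_J\|_2\lesssim\sqrt J$ gives the same order.
\item[(c)] The score term $(NT)^{-1/2}\sum_t s_{h,t}$ has second moment $\operatorname{tr}(\Omega_{h,J,N})\le J\chi_{h,J,N}$ up to finite-$T$ HAC corrections from overlapping horizons. Chebyshev then gives $O_p(\sqrt{J\chi_{h,J,N}/(NT)})$. The replacement of population by estimated partialling ($\widehat A_{12}\widehat A_{22}^{-1}$ versus its limit) is of smaller order because $\dim W$ is fixed.
\end{enumerate}

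\textbf{Pointwise normality.} Write
\[
\widehat g_h(z)-g_h(z)=\phi_J(z)^\top(\widehat b_h-b_h)-R_J(z).
\]
The linearized leading term is
\[
\ell_{NT}(z)=\phi_J(z)^\top A_{11,h}^{-1}(N(T-h))^{-1}\sum_t\sqrt N s_{h,t},
\]
with exact standard deviation $\sigma_h^\star(z)/\sqrt{N(T-h)}$. The remainders are controlled after normalizing by this standard deviation:
\begin{enumerate}
\item[(i)] the bias, $\sqrt{NT}\,\bigl(J^{1/2-\kappa}\|\phi_J(z)\|+J^{-\kappa}\bigr)/\sigma_h^\star(z)\to0$, which is the state-specific undersmoothing condition;
\item[(ii)] the matrix-linearization error $\phi_J(z)^\top(\widehat{\widetilde A}^{-1}-A^{-1})\times\text{score}$, bounded by the product of the rates in (a) and (c) and required to be $o(\sigma_h^\star(z))$.
\end{enumerate}
For the leading term, project onto the direction $v_{z}=A_{11,h}^{-1}\phi_J(z)/\|\Omega_{h,J,N}^{1/2}A_{11,h}^{-1}\phi_J(z)\|$. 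This reduces it to a scalar sum $T^{-1/2}\sum_t v_z^\top s_{h,t}$ of a weakly dependent time series with unit long-run variance, and a CLT for mixing (or $h$-dependent martingale-difference-plus-MA) arrays gives $\mathcal N(0,1)$. Because the normalization is directional, the argument is indifferent to whether the direction loads on the aggregate component (rate $T$) or not (rate $NT$). This is the point of studentization.

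\textbf{Feasibility and main obstacle.} To finish, show $\widehat\sigma_h(z)/(\sigma_h^\star(z)/\sqrt{N(T-h)})\to_p1$ via the directional HAC consistency of Theorem~A.2, i.e. $v^\top(\widehat\Omega_h-\Omega_{h,J,N})v/v^\top\Omega_{h,J,N} v\to_p0$ along $v=A_{11,h}^{-1}\phi_J(z)$. Slutsky then yields the displayed convergence, and the coverage statement follows immediately. The main obstacle is the Lindeberg/CLT step together with HAC consistency when $\Omega_{h,J,N}$ may diverge with $N$ in some directions but not others. Residual estimation error in $\widehat u$ enters $\widehat s_{h,t}$ multiplied by $\sqrt N$, so it must be shown negligible relative to the possibly small directional variance.

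I would handle this by conditioning on the aggregate path. The score then splits into a conditional mean, which is an aggregate time series treated by a standard time-series CLT/HAC, and a conditionally independent cross-sectional fluctuation treated by a Lyapunov argument. The directional variance is the sum of the two pieces, and each piece is normalized by the total.
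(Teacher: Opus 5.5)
Your architecture matches the paper's proof: the Frisch--Waugh--Lovell representation of $\widehat b_h-b_h$, a coefficient bias of order $J^{1/2-\kappa}$ from the envelope bound, a Markov bound on the score, standardization along $A_{11,h}^{-1}\phi(z)$, the undersmoothing and matrix-linearization remainders, and Slutsky. The Markov step uses $\mathbb E\|(T-h)^{-1/2}\sum_t s_{h,t}\|_2^2\le J\sum_k\|\Gamma_{h,J,N}(k)\|\lesssim J\chi_{h,J,N}$. This relies on the $\chi$-normalized absolute summability in Assumption A.1(2), not just $\lambda_{\max}(\Omega_h)\le\chi_{h,J,N}$; that summability is what controls your ``finite-$T$ corrections''. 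Two steps need repair.

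\textbf{Missing partialling term.} Your pointwise expansion drops the control-partialling term $\phi(z)^\top\widehat{\widetilde A}_{11,h}^{-1}\widehat A_{12,h}\widehat A_{22,h}^{-1}W_{h,T}$, where $W_{h,T}=[N(T-h)]^{-1/2}\sum_t W_{t-1}^\top u_{h,t+h}=O_p(\sqrt{\chi_{h,J,N}})$. Fixed $\dim W$ makes it negligible for the $\ell_2$ rate. After studentization, however, it is bounded by $O_p\bigl(\|\phi(z)\|_2\|\widehat A_{12,h}\|\sqrt{\chi_{h,J,N}}/\sigma_h^\star(z)\bigr)$. What makes this small is $A_{12,h}=0$ (m.d.s.\ plus predetermination), which gives $\|\widehat A_{12,h}\|=O_p(\sqrt{J/T})$. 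Even then, at states with $\sigma_h^\star(z)\asymp\|\phi(z)\|_2$ and $\chi_{h,J,N}\asymp N$, the bound is $O_p(\sqrt{JN/T})$, which Assumption A.1(5) does not force to zero. The paper disposes of it via the matrix-linearization condition, which implies it because $\bar a_{A,T}\ge T^{-1/2}$. Relatedly, the $J/T$ in $\bar a_{A,T}$ is this same $\|\widehat A_{12,h}\|^2$ entering the Schur complement, not the $X_t^2$ factor you cite in (a).

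\textbf{Feasibility.} $\widehat\sigma_h(z)$ replaces both $\Omega_h$ and $A_{11,h}^{-1}$. Your directional ratio condition must therefore also cover the sandwich error from $\widehat{\widetilde A}_{11,h}^{-1}-A_{11,h}^{-1}$, relative to $\sigma_h^{\star2}(z)$. The small-variance-direction obstacle you flag is not solved in the paper. Theorem A.2 gives $\chi$-scaled operator-norm consistency of $\widehat V_h^\star$, and hence the ratio, only at states with $\phi(z)^\top V_h^\star\phi(z)\gtrsim\chi_{h,J,N}\|\phi(z)\|_2^2$. The statement takes directional consistency as a hypothesis, so you can invoke it rather than prove it.

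\textbf{The CLT step is where you genuinely differ.} The paper does not split the score. It applies one scalar CLT to $a_h(z)^\top\Omega_h^{-1/2}s_{h,t}$, which has unit long-run variance by construction. Lindeberg comes from the uniform $L^{q_0}$ bound on the standardized score. Dependence is handled by an $m_T$-dependent approximation plus blocking, based on the coupling bound $\min\{1,L_J\bar\rho^{(k-h)_+}\}$. Choosing $m_T=h+O(\log J)$ absorbs the growth of the dependence constants through the basis Lipschitz constant $L_J$. A generic mixing-array CLT does not address this growth. Moreover, A.1 is a functional-dependence framework, and the score is not $h$-dependent, since $u_{i,t+h}$ is MA($\infty$).

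Your conditioning on $\mathcal A$ is a workable alternative. It splits the score into an $\mathcal A$-measurable conditional mean plus a conditionally independent fluctuation, whose long-run variances are orthogonal, and combines them through characteristic functions. This makes the $\sqrt T$ versus $\sqrt{NT}$ regimes and the switching points transparent. It needs ingredients not in A.1, however:
\begin{itemize}
\item a conditional Lyapunov condition across $i$;
\item convergence in probability of the fluctuation's conditional variance, normalized by $\sigma_h^{\star2}(z)$, to a constant (otherwise the limit is mixed normal);
\item a separate CLT for the aggregate piece.
\end{itemize}
The paper's route economizes on assumptions; yours buys interpretability.

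Finally, your projection idea in (b) is stronger than you state. Bound by $\lambda_{\min}(\widehat{\widetilde A}_{11,h})^{-1/2}$ (not $\lambda_{\max}$) times the empirical $L^2$ norm of $X_tR_h(Z_{i,t-1})$. This gives an $O_p(J^{-\kappa})$ coefficient bias, which lies within the stated rate and would weaken the undersmoothing requirement by a factor $\sqrt J$.
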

\bigskip
The two terms in the coefficient rate make the bias-variance trade-off explicit. Increasing $J$ reduces the sieve approximation bias $J^{1/2-\kappa}$, where $\kappa > 1/2$, but raises the sampling uncertainty $\sqrt{J\chi_{h,J,N}/(NT)}$.
{The dependence scale \(\chi_{h,J,N}\) implies an effective sample size of order \(NT/\chi_{h,J,N}\): bounded \(\chi_{h,J,N}\) corresponds to the root-\(NT\) regime, whereas \(\chi_{h,J,N}\asymp N\) corresponds to the root-\(T\) regime.}
Subject to the upper growth restrictions on $J$, the exact
state-specific undersmoothing condition is
$
\sqrt{N(T-h)}\,J^{1-\kappa}/\sigma_h^\star(z)\to0.
$
At states for which
$\sigma_h^\star(z)\asymp\sqrt{J\chi_{h,J,N}}$,
this reduces to
$
\sqrt{NT/\chi_{h,J,N}}\,J^{1/2-\kappa}\to0.
$
At other states, including switching points, the exact
state-specific condition must be used.

The HAC result applies to the feasible score used in practice: the Online Appendix shows that replacing the population errors and population partialling terms by their estimated counterparts does not affect the long-run covariance asymptotically.

\begin{theorem} \label{thm:uniform_main}
Under Online Appendix Assumption~A.1 and the additional conditions for uniform convergence and Gaussian approximation in Online Appendix Theorem~A.3,
\[
\sup_{z\in\mathcal Z}
|\widehat g_h(z)-g_h(z)|
=
O_p\!\left(J^{1-\kappa}\right)
+O_p\!\left(
J\sqrt{\frac{\chi_{h,J,N}}{NT}}
\right).
\]
If the feasible covariance-kernel consistency condition \textup{(U-Feasible)} in Online Appendix Remark~3 also holds, the uniform confidence band in \eqref{eq:ucb_main} satisfies
\[
\P\!\left\{
g_h(z)\in\mathcal C_h(z)
\text{ for all }z\in {\mathcal Z}
\right\}
\to 1-\alpha.
\]
\end{theorem}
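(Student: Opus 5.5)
The proof splits the centered estimator into a sieve-bias term and a stochastic term, bounds each uniformly over $\mathcal Z$, and then transfers a Gaussian approximation from the stochastic term to the studentized process.

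\textbf{Decomposition.} Let $b_h$ be the best sieve approximation, with $\sup_z|g_h(z)-\phi_J(z)^\top b_h|=O(J^{-\kappa})$ by H\"older smoothness. Write
\[
\widehat g_h(z)-g_h(z)=\phi_J(z)^\top(\widehat b_h-b_h)+\bigl(\phi_J(z)^\top b_h-g_h(z)\bigr).
\]
The coefficient error splits as
\[
\widehat b_h-b_h=\widehat{\widetilde A}_{11,h}^{-1}\frac{1}{N(T-h)}\sum_{i,t}\widetilde P_{h,i,t}\,u_{i,t+h}+\widehat{\widetilde A}_{11,h}^{-1}\frac{1}{N(T-h)}\sum_{i,t}\widetilde P_{h,i,t}\,R_{i,t+h},
\]
where $R$ is the sieve remainder. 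The first piece equals $(T-h)^{-1}N^{-1/2}\sum_t s_{h,t}$ premultiplied by the inverse Schur complement.

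\textbf{Rate.} I reuse the matrix-linearization step from Theorem~\ref{thm:pointwise_main}: $\|\widehat{\widetilde A}_{11,h}-A_{11,h}\|=o_p(1)$, and the minimum eigenvalue is bounded away from zero, so the inverse has $O_p(1)$ norm. Taking a projection of the remainder, as in Theorem~\ref{thm:pointwise_main}, gives a bias contribution of $O_p(J^{1/2-\kappa})$ in $\ell_2$. The score term has second moment $\operatorname{tr}(\Omega_{h,J,N})/(NT)\le J\chi_{h,J,N}/(NT)$. I then use $\sup_z\|\phi_J(z)\|_2\lesssim\sqrt J$ for B-splines and apply Cauchy--Schwarz. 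This turns the $\ell_2$ bounds into $J^{1-\kappa}$ and $J\sqrt{\chi_{h,J,N}/(NT)}$ uniformly, and adding the $J^{-\kappa}$ approximation error yields the stated sup rate.

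\textbf{Gaussian approximation.} The studentized process is
\[
\frac{\phi_J(z)^\top A_{11,h}^{-1}\sum_t s_{h,t}}{\sqrt{T-h}\,\sigma_h^\star(z)},
\]
which is a normalized sum over $t$ of a short-range dependent vector indexed by a class of linear functionals. The unit-norm directions $v(z)=A_{11,h}^{-1}\phi_J(z)/\sigma_h^\star(z)$ make every coordinate have unit variance, whatever the local rate regime, so the process stays nondegenerate. I would apply a high-dimensional Gaussian approximation for suprema of dependent sums, in the style of Chernozhukov et al.\ or Zhang--Wu, to the grid $\mathcal Z_G$. Blocking over $t$ handles the MA$(h)$ and HAC-type dependence. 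Conditional on the aggregate path, units are independent, so the cross-sectional sum inside $s_{h,t}$ contributes only through its time-series moments. The grid is then extended to all of $\mathcal Z$ by Lipschitz continuity of $\phi_J/\sigma^\star_h$.

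\textbf{Remainders and feasibility.} The bias and the linearization remainders, divided by $\sigma^\star_h(z)/\sqrt{N(T-h)}$, must be $o_p(1/\sqrt{\log J})$ uniformly in $z$; this is the uniform undersmoothing condition of Theorem~A.3. For feasibility, (U-Feasible) gives
\[
\sup_z\bigl|\widehat\sigma_h(z)\sqrt{N(T-h)}/\sigma^\star_h(z)-1\bigr|=o_p(1/\log J),
\]
together with closeness of the covariance kernels of the bootstrap process $G_h^{(b)}$ and the true Gaussian process. A Gaussian comparison inequality combined with anti-concentration of the maximum of a unit-variance Gaussian process then shows that $c^{\mathrm{stu}}_{h,1-\alpha}$ is consistent for the true quantile. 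This delivers the coverage statement.

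\textbf{Main obstacle.} The hardest step is the Gaussian approximation when $\sigma^\star_h(z)$ switches between the $\sqrt J$ and $\sqrt{J\chi}$ scales across $z$. Near switching points the directions $v(z)$ load on both idiosyncratic and aggregate components of the score, which makes moment and dependence bounds delicate. My first attempt would decompose $s_{h,t}$ into its conditional mean given the aggregate path and the conditionally independent remainder. I would approximate each part separately: a time-series GAR for the aggregate part and a conditional Lindeberg-type argument for the idiosyncratic part. The two would then be combined using their orthogonality.
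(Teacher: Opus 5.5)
Your outline follows the paper's proof of Theorem A.3, together with the oracle-band corollary and Remark 3. The shared steps are:
\begin{itemize}
\item the exact block normal equations;
\item the $\ell_2$ bounds $\|\widetilde B_{h,T}\|_2=O_p(J^{1/2-\kappa})$ and $\E\|U_{h,T}\|_2^2\lesssim J\chi_{h,J,N}$, turned into sup-norm rates via $\sup_z\|\phi(z)\|_2\lesssim\sqrt J$;
\item a studentized linear representation in the directions $\psi_h(z)=A_{11,h}^{-1}\phi(z)/\sigma_h^\star(z)$;
\item a blocked Zhang--Wu-type Gaussian approximation for $\bigl(\psi_h(z_\ell)^\top s_{h,t}\bigr)_{\ell\le K_T}$ on a shrinking $\varepsilon_T$-net;
\item Lipschitz and Dudley control to pass to the continuum, then anti-concentration.
\end{itemize}

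There are two small points. First, the score piece is not exactly $N^{-1/2}(T-h)^{-1}\sum_t s_{h,t}$. The sample partialling term $-\widehat A_{12,h}\widehat A_{22,h}^{-1}[N(T-h)]^{-1}\sum_t W_{t-1}^\top u_{h,t+h}$ survives. The paper keeps it and shows it is $O_p(\sqrt{J\chi_{h,J,N}/T})$ on the studentized scale, so it is absorbed in the $\bar a_{A,T}\sqrt{J\chi_{h,J,N}}$ part of (U-Growth). Second, your feasibility step is more explicit than the paper, which simply asserts the conclusion in Remark 3. That step uses uniform relative consistency of $\widehat\sigma_h$ from (U-Feasible), then Gaussian comparison and anti-concentration for the quantile.

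The real divergence is the step you flag as the main obstacle; the paper handles it without any decomposition. Under the hypotheses of Theorem A.3, two conditions are imposed directly on the studentized score: the moment bound $\sup_{\|a\|_2=1}\|a^\top\Omega_h^{-1/2}s_{h,0}\|_{L^{q_0}}<\infty$ from Assumption A.1(2), and the functional-dependence condition (U-FD). Each net coordinate also has long-run variance $\psi_h(z_\ell)^\top\Omega_h\psi_h(z_\ell)=1$. So the Gaussian approximation applies in one shot, uniformly across the root-$NT$ and root-$T$ regimes and at switching points, with all $N$-dependence absorbed in $\Omega_h$.

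Your aggregate/idiosyncratic split would instead verify those high-level conditions from primitives. That buys generality but is not needed for the stated theorem. If you pursue it, orthogonality of the two pieces does not let you simply add their Gaussian approximations. The conditional covariance of the idiosyncratic part given $\mathcal A$ is random. You would need a conditional Gaussian approximation given the aggregate path, plus concentration of that conditional covariance uniformly over the net, before the two limits combine as independent components.
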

\bigskip
Uniform inference requires stronger conditions than pointwise inference because it controls the entire response function rather than the response at a fixed state. 
{A central complication is that a common normalization need not remain nondegenerate when the convergence rate changes across states. Studentizing by the state-specific standard error resolves this problem and yields a valid Gaussian approximation to the supremum of the normalized estimation error.}
The proof combines a high-dimensional Gaussian approximation for the dependent sieve score with {a finite-net approximation and modulus-of-continuity bounds} to pass from a finite net to the continuum; the general Gaussian approximation result used in this step is stated in the Online Appendix, with its proof in the accompanying Technical Note. 
{The grid $\mathcal Z_G$ used in implementation is the numerical counterpart of this finite-net approximation, while the asymptotic coverage result applies to the full state space $\mathcal Z$.}

It is important to note that the formal asymptotic theory takes $J$ as a sequence satisfying the stated rate conditions and thus abstracts from uncertainty introduced by data-driven selection. The AIC-selected implementation is treated as a practical tuning rule and evaluated in the Monte Carlo experiments, together with alternative selectors.
\section{Simulation Evidence} \label{sec:simulations}

All simulations are based on a DGP satisfying Assumption 3. We run $300$ Monte Carlo replications of data generated as
\begin{equation} \label{Simu: DGP}
Y_{it} = g(Z_{i,t-1})X_t + 0.8\, Y_{i,t-1} + \varepsilon_{it}, \quad i=1,\ldots,N;\ t=1,\ldots,T,
\end{equation}
where $X_t \sim \mathcal{N}(0,1)$ and $\varepsilon_{it} \sim \mathcal{N}(0,1)$. The state variable evolves as
\[
Z_{it} = \mu_i + \xi_{it}, \quad \mu_i \sim \mathcal{N}(0,3), \quad \xi_{it} = 0.8\, \xi_{i,t-1} + \sqrt{1-0.8^2}\, v_{it}, \quad v_{it} \sim \mathcal{N}(0,1).
\]
We discard the first $500$ observations and retain $T$. Results are robust to including fixed effects (not reported). The sequences $\{X_t\}$, $\{\varepsilon_{it}\}$, and $\{v_{it}\}$,
and the unit effects $\{\mu_i\}$, are mutually independent, with the
unit-level variables independent across $i$ and the innovations
independent over $t$.

We focus on the cubic specification
\begin{equation} \label{Simu: true cubic g function}
g(z) = 0.5 z + 0.3 z^2 - 0.25 z^3.
\end{equation}
The corresponding horizon-$h$ causal impulse response is
\begin{equation} \label{true IRF} 
\mathrm{IRF}_Y^{\mathrm{true}}(h,\delta \mid Z_{i,t-1}=z) = 0.8^h g(z)\,\delta.
\end{equation}

We estimate the LP with nonparametric state dependence
\[
Y_{i,t+h} = g_h (Z_{i,t-1})X_t + \gamma_h Y_{i,t-1} + u_{i,t+h}, \quad h = 0,\ldots,H,
\]
using the spline sieve estimator described in Section \ref{sec:estimation}. The estimated impulse response is
\begin{equation} \label{sieve IRF} 
\mathrm{IRF}_Y^{\mathrm{sieve}}(h,\delta \mid Z_{i,t-1}=z) = \widehat{g}_h(z)\delta.
\end{equation}

We select the sieve dimension $J$ using AIC, GCV, and LASSO, alongside an oracle choice ($J=4$). For AIC and GCV, we search over $\mathcal{J}=\{4,\ldots,20\}$, with GCV given by
\[
\mathrm{GCV}_h(J) = 
\frac{\sum_{i,t}\widehat u_{i,t+h}(J)^2}{N(T-h)\left(1-\frac{K_J}{N(T-h)}\right)^2},
\quad 
\widehat J_h = \arg\min_{J\in\mathcal J}\mathrm{GCV}_h(J),
\]
where $K_J = J + 1$. For LASSO, we set a maximal dimension $J_{\mathrm{lasso}}=50$ and
estimate
\[
(\widehat b_h,\widehat\gamma_h)
=
\arg\min_{b_h,\gamma_h}
\left\{
\sum_{i,t}
\left(
Y_{i,t+h}
-
X_t\phi_{J_{\mathrm{lasso}}}(Z_{i,t-1})^\top b_h
-
\gamma_hY_{i,t-1}
\right)^2
+
\lambda_h\|b_h\|_1
\right\},
\]
where $\lambda_h$ is chosen by cross-validation. The selected dimension
$\widehat J_h$ is the number of nonzero elements of $\widehat b_h$.

\subsection{Sieve Estimation: Gains over Linear LPs}

We evaluate the performance of the sieve estimator for the state-dependent impulse response and its gains relative to standard LPs with linear state dependence. Under the DGP, which satisfies Assumption 3, the causal impulse response is $g_h(z)$, so the simulation isolates the role of functional-form restrictions. We set $N=500$, $T=200$, and select the number of basis functions using AIC.

For comparison, we estimate the LP with linear state dependence
\[
Y_{i,t+h} = (\alpha_h + \beta_h Z_{i,t-1})X_t + \gamma_h Y_{i,t-1} + u_{i,t+h},
\]
which implies
\begin{equation} \label{linear IRF} 
\mathrm{IRF}_Y^{\mathrm{linear}}(h,\delta \mid Z_{i,t-1}=z) = (\widehat{\alpha}_h + \widehat{\beta}_h z)\delta.
\end{equation}

Figure \ref{fig:spline-sieve-vs-linear} reports Monte Carlo averages and 10\% to 90\% percentile bands for the sieve and linear IRFs, together with the true response, for a unit shock ($\delta=1$) over a grid of $500$ points in $[-4.65,4.65]$.

\begin{figure}[htbp]
\centering
\includegraphics[width=\textwidth]{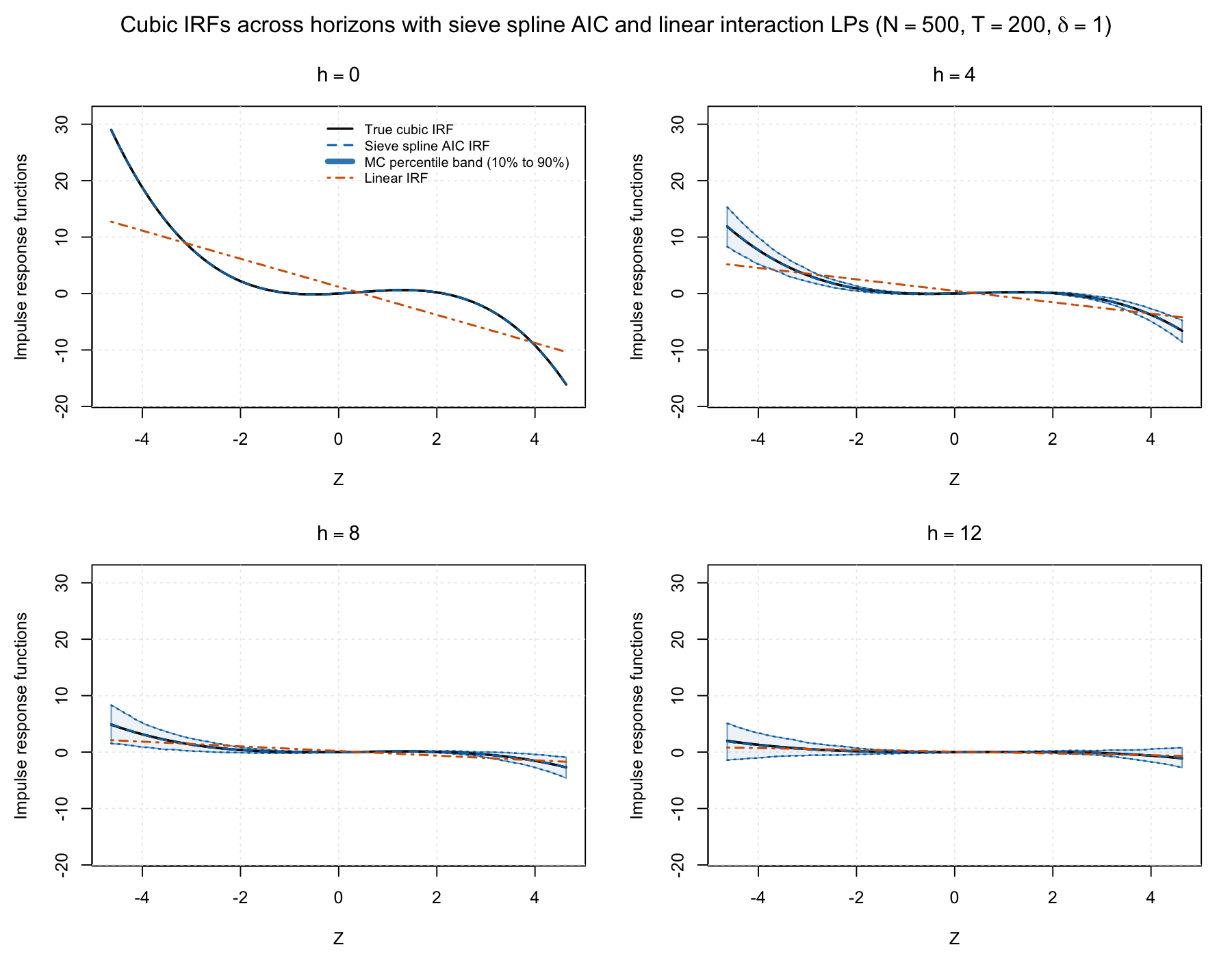}
\caption{Monte Carlo comparison of sieve and linear IRFs under the cubic DGP}
\label{fig:spline-sieve-vs-linear}
\end{figure}

Figure~\ref{fig:spline-sieve-vs-linear} shows that the linear specification is substantially biased, especially in the tails. The spline sieve instead tracks the true IRF closely across the support of $z$, with tight Monte Carlo dispersion.
The linear and sieve estimators can imply different, even opposite, economic conclusions. The discrepancy narrows at longer horizons as the true response is attenuated toward zero, but the key distinction remains one of estimands: the sieve targets the causal state-dependent IRF, whereas the linear specification generally recovers only a projection of it.

These results are not driven by the choice of basis. Additional simulation results in the Online Appendix show that the same conclusions hold when the true $g(z)$ is generated by a Fourier series but estimated using splines.

We next quantify the gains of nonparametric estimation using the root integrated mean squared error (RIMSE). Table \ref{tab:rimse_linear_vs_spline_aic} compares the sieve IRF to the linear IRF for different horizons $h$ and impulse sizes $\delta \in \{0.5\sigma_X,\sigma_X,2\sigma_X\}$, where $\sigma_X$ is the standard deviation of the shock $X_t$ (here equal to 1).

\begin{table}[htbp]
\centering
\caption{RIMSE comparison of sieve and linear IRFs}
\label{tab:rimse_linear_vs_spline_aic}
\setlength{\tabcolsep}{5pt}
\renewcommand{\arraystretch}{1.05}
\begin{tabular}{c *{3}{S[table-format=1.3] S[table-format=1.3]}}
\toprule
\multicolumn{1}{c}{$\delta$} & \multicolumn{2}{c}{$0.5\sigma_X$} & \multicolumn{2}{c}{$\sigma_X$} & \multicolumn{2}{c}{$2\sigma_X$} \\
\cmidrule(lr){2-3}\cmidrule(lr){4-5}\cmidrule(lr){6-7}
\multicolumn{1}{c}{$h$} & {Sieve} & {Linear} & {Sieve} & {Linear} & {Sieve} & {Linear} \\
\midrule
0  & 0.004 & 2.253 & 0.009 & 4.506 & 0.018 & 9.011 \\
4  & 0.418 & 0.981 & 0.836 & 1.962 & 1.672 & 3.923 \\
8  & 0.402 & 0.499 & 0.804 & 0.997 & 1.609 & 1.995 \\
12 & 0.397 & 0.342 & 0.795 & 0.683 & 1.589 & 1.366 \\
\bottomrule
\end{tabular}
\end{table}

The RIMSE results mirror the visual evidence: the sieve estimator improves on the linear specification when nonlinearities are pronounced, but its advantage vanishes as the response becomes nearly linear. At \(h=12\), the linear specification performs slightly better. At horizons where nonlinearities matter, the gains also widen with $\delta$, indicating that the cost of misspecifying nonlinear state dependence becomes more pronounced for larger shocks.  Overall, the findings confirm that accounting for nonlinearities is important for accurately recovering the impulse response.

\subsection{Sieve Inference: Uniform-band Performance}

We evaluate the finite-sample performance of sieve-based uniform confidence bands across sample sizes $(N,T) \in \{1,100,500\} \times \{40,120,200\}$ and horizons $h \in \{0,4,8,12\}$. For the spline sieve, we consider an oracle specification ($J=4$) and data-driven choices of $J$ based on AIC, GCV, and LASSO. The main text focuses on AIC; results for all selectors are reported in the Online Appendix.

Given $J$, we construct uniform confidence bands for $g_h(z)\delta$ (with $\delta=1$) using \eqref{eq:bootstrap_process}-\eqref{eq:ucb_main} with $B=2000$ bootstrap draws and nominal coverage $1-\alpha=0.95$. Performance is evaluated over a grid of $500$ points spanning the 5th to 95th percentiles of $\{ Z_{it} \}$ using coverage (fraction of replications in which the true IRF lies within the band over the grid) and average band width.\footnote{Coverage is lower when evaluated over the full empirical support, reflecting the greater difficulty of nonparametric uniform inference in sparsely populated boundary regions.}



Table \ref{tab:mc_dynamic_cubic_spline_aic} reports results for
$N \in \{100,500\}$ and $T \in \{40,120,200\}$ at horizons
$h \in \{0,4\}$ under AIC selection; additional results are reported in
the Online Appendix. Coverage generally improves with $T$, while band
widths decline. Differences across $N$ are limited, with the larger $N$
mainly producing narrower bands. Without intermediate-shock controls, the bands at $h=4$ are substantially
wider than at impact because intermediate aggregate shocks remain in the LP error.

\begin{table}[htbp]
\centering
\caption{Uniform-band performance under specifications without and with intermediate shocks for the sieve with AIC selection at nominal coverage $1-\alpha=0.95$}
\label{tab:mc_dynamic_cubic_spline_aic}
\vspace{0.2cm}
\setlength{\tabcolsep}{3.5pt}
\renewcommand{\arraystretch}{1.05}
\begin{tabular}{l *{4}{S} *{4}{S}}
\toprule
& \multicolumn{4}{c}{No Intermediate Shocks}
& \multicolumn{4}{c}{Intermediate Shocks} \\
\cmidrule(lr){2-5}\cmidrule(lr){6-9}
& \multicolumn{2}{c}{$N=100$}
& \multicolumn{2}{c}{$N=500$}
& \multicolumn{2}{c}{$N=100$}
& \multicolumn{2}{c}{$N=500$} \\
\cmidrule(lr){2-3}\cmidrule(lr){4-5}
\cmidrule(lr){6-7}\cmidrule(lr){8-9}
& {Coverage} & {Width}
& {Coverage} & {Width}
& {Coverage} & {Width}
& {Coverage} & {Width} \\
\midrule
\multicolumn{9}{l}{\textit{Panel A: Horizon $h=0$}} \\
\midrule
$T = 40$
& 0.73 & 0.14
& 0.72 & 0.06
& 0.73 & 0.14
& 0.72 & 0.06 \\
$T = 120$
& 0.83 & 0.08
& 0.84 & 0.04
& 0.83 & 0.08
& 0.84 & 0.04 \\
$T = 200$
& 0.88 & 0.06
& 0.86 & 0.03
& 0.88 & 0.06
& 0.86 & 0.03 \\
\midrule
\multicolumn{9}{l}{\textit{Panel B: Horizon $h=4$}} \\
\midrule
$T = 40$
& 0.73 & 3.13
& 0.75 & 2.56
& 0.60 & 0.22
& 0.67 & 0.10 \\
$T = 120$
& 0.82 & 2.01
& 0.87 & 1.64
& 0.83 & 0.13
& 0.81 & 0.06 \\
$T = 200$
& 0.92 & 1.66
& 0.89 & 1.30
& 0.86 & 0.10
& 0.83 & 0.05 \\
\bottomrule
\end{tabular}
\end{table}

Overall, the AIC-based bands exhibit finite-sample undercoverage, particularly when $T$ is small. Additional simulation results in the Online Appendix show that coverage is closer to nominal for the oracle selector, indicating an important finite-sample role for tuning-parameter selection, although some undercoverage remains even under the oracle. Coverage generally improves as the time dimension increases.

The table also illustrates the role of common LP-error components emphasized by the asymptotic theory in Section~\ref{sec:asymptotics}. For $h\geq1$, omitting intermediate terms driven by future aggregate shocks does not affect identification of $g_h(\cdot)$, but leaves common variation in the LP error that is not eliminated by cross-sectional averaging. Including these interactions therefore makes increases in $N$ more effective at reducing sampling uncertainty, as reflected by the sharp decline in band widths in Table~\ref{tab:mc_dynamic_cubic_spline_aic}. We use this comparison to illustrate the efficiency implications of common LP-error components, but retain the specification without intermediate-shock controls as the baseline elsewhere. Interacting these additional shocks with the sieve basis substantially increases dimensionality and can lead to numerical instability.

\subsection{Sieve Robustness: Sensitivity to Selector Choice}

We assess the sensitivity of sieve estimation to the choice of selector for the number of basis functions, using the design with $N = 500$ and $T = 200$ at horizon $h=4$. Estimation results are qualitatively similar across selectors and are not reported. Table \ref{tab:mc_dynamic_h4_n500_t200} reports the corresponding inference results. The oracle specification delivers coverage closest to the nominal level, while AIC, GCV, and LASSO perform similarly, with modest differences in coverage and width. Additional results are reported in the Online Appendix. Overall, the results are fairly robust to selector choice, although data-driven selection entails some finite-sample coverage loss relative to the oracle.

\begin{table}[htbp]
\centering
\caption{Uniform-band performance across selector choices at horizon $h=4$ for $N=500$ and $T=200$ under nominal coverage $1-\alpha=0.95$}
\label{tab:mc_dynamic_h4_n500_t200}
\setlength{\tabcolsep}{5pt}
\renewcommand{\arraystretch}{1.05}
\begin{tabular}{l S S}
\toprule
Selector & {Coverage} & {Width} \\
\midrule
Oracle & 0.92 & 1.17 \\
AIC    & 0.89 & 1.30 \\
GCV    & 0.89 & 1.30 \\
LASSO  & 0.90 & 1.42 \\
\bottomrule
\end{tabular}
\end{table}

\section{Monetary Policy and Firm Investment} \label{sec:application}

To demonstrate the empirical relevance of our approach, we revisit firm-level investment responses to monetary policy shocks, as in \citet{ottonello2020financial}, \citet{cloyne2023monetary}, and \citet{jeenas2023firm}, among others. Our nonparametric state-dependent LP recovers a causal state-dependent response to the extent that the monetary policy shock satisfies the causal exogeneity conditions in Assumption~1 and Assumption~\ref{assum3} is satisfied: monetary policy affects firms through heterogeneous balance-sheet exposure, while propagation dynamics are well approximated as stable over the relevant horizon. The application also illustrates the limitation of the standard linear specification used in this literature: even under these conditions, it estimates a projection rather than the causal state-dependent response.

Following \citet{ottonello2020financial} (OW), we use distance to default
(D2D) - an estimate of the probability of default using leverage ratios and
equity price volatility - as our measure of firm financial conditions. We
compute D2D following the approach in \citet{gilchrist2012credit}.\footnote{
D2D is equal to the log ratio of firm value to debt, adjusted for the volatility
of firm value. It can be interpreted as the number of standard deviations by
which the log ratio of firm value to debt must fall below its mean for the firm
to default. Although OW also study leverage, we focus on D2D for two main
reasons: first, there is a large empirical literature documenting the
performance of D2D as an indicator of default risk
\citep{duffie2007multi,schaefer2008structural,duffie2011measuring,atkeson2017measuring};
second, OW find weaker results using leverage, perhaps because it is a less
precise indicator of firms' financial soundness.} Firms with high D2D are at
lower risk of default and hence safer, i.e., more financially sound. We follow
OW in using monetary policy shocks identified from high-frequency changes in
Federal Funds rate futures in a narrow window around FOMC announcements,
measuring firm investment with Compustat quarterly data as the change in the log
book value of the firm's capital stock, \(k_{i,t}\equiv \log K_{i,t}\),
standardizing D2D in units of standard deviations relative to the sample mean,
and applying the same sample selection criteria. We estimate the following LP specification:
\begin{equation}
k_{i,t+h} - k_{i,t-1}
=
\alpha_{sth}
+
g_h\left(Z_{i,t-1}\right)X_t
+
\rho_h \Delta k_{i,t-1}
+
 W_{i,t-1}^{\top}\gamma_h
+
u_{i,t+h}.
\label{eq:main_spec}
\end{equation}
The dependent variable is the firm's cumulative net investment from quarter
\(t\) through \(t+h\), so \(g_h(\cdot)\) is interpreted as a cumulative
investment response at horizon \(h\). Here, \(\alpha_{sth}\) denotes
sector-time-horizon fixed effects; \(Z_{i,t-1}\) is the firm's pre-shock
financial condition, measured by D2D; \(X_t\) is the monetary policy shock;
\(\Delta k_{i,t-1}\) is lagged investment; \(W_{i,t-1}\) is a vector of
firm-level controls; and \(u_{i,t+h}\) is an error term.\footnote{As in OW, we include as controls
\(Z_{i,t-1}\), total assets, sales growth, current assets as a share of total
assets, a fiscal quarter indicator, and the interaction of \(Z_{i,t-1}\) with
lagged GDP growth. Our specification differs from OW in three ways: we control
for the firm's lagged investment rate; they use the firm's demeaned financial
position relative to its average level; they include firm fixed effects. We have
verified that these differences do not qualitatively affect our conclusions.} Because the
specification includes sector-time-horizon fixed effects, \(g_h(\cdot)\)
captures differential investment responses across firms with different
pre-shock financial conditions, net of the component common to firms in the
same sector and period. Accordingly, in this specification the constant component of $g_h(\cdot)$
is absorbed by the sector-time fixed effects, so we normalize the response
function relative to this common component.

We estimate two versions of \eqref{eq:main_spec}: a linear specification,
\(g_h(Z_{i,t-1})=\alpha_h+\beta_h Z_{i,t-1}\), and the flexible nonlinear
specification developed in Section~\ref{sec:estimation}. Figure~\ref{fig:IRF_1}
plots the resulting differential investment responses to a 25 basis point surprise interest
rate cut on impact and at horizons of four, eight, and 12 quarters. Each panel
shows the response as a function of firms' pre-shock financial conditions, with
uniform confidence bands around the nonlinear estimate constructed using \eqref{eq:bootstrap_process}-\eqref{eq:ucb_main}. Because the bands are studentized, their width varies with the state-specific
standard error. This state-specific scaling allows the construction to remain
valid when convergence rates differ across the state space.

\subsection{Nonlinear response heterogeneity}

Panel A of Figure \ref{fig:IRF_1} sharply illustrates our main findings. Imposing a linear specification yields a positive impact coefficient, which implies a monotonic upward-sloping relationship between a firm's investment response and financial condition - less risky, more financially sound firms with higher D2D are more responsive to monetary policy shocks. The result corroborates the influential findings in OW, who also estimate a positive coefficient. In contrast, the nonlinear IRF shows quite a different pattern: a non-monotonic hump-shaped response that peaks at approximately the mean of the distribution, rather than at the least risky right tail. Thus, although responsiveness is increasing for firms that are close to default, as found in previous work, the relationship turns negative around the mean, implying that there is a large portion of the distribution where riskier, lower distance to default firms are more responsive to monetary policy shocks, not less.\footnote{Due to right-skewness in the distribution, the median D2D is slightly below the mean.}


Although our econometric model does not reveal the precise economic forces driving these patterns, one interpretation is that firms close to default are investment constrained and respond little, firms near the middle of the distribution are most sensitive to interest rates, and the safest firms less so. Thus, our method uncovers a richer, more nuanced relationship between shock sensitivity and financial conditions that is masked by the linear specification, underscoring the importance of allowing for flexible forms of state-dependence when analyzing microeconomic responses to macroeconomic shocks.\footnote{Our results are qualitatively unchanged under various modifications to our main specification; for example, including firm fixed-effects somewhat dampens responsiveness across the distribution, but the hump-shaped pattern we uncover remains.}


\begin{figure}[htbp]
	\centering
    \includegraphics[scale=0.53]{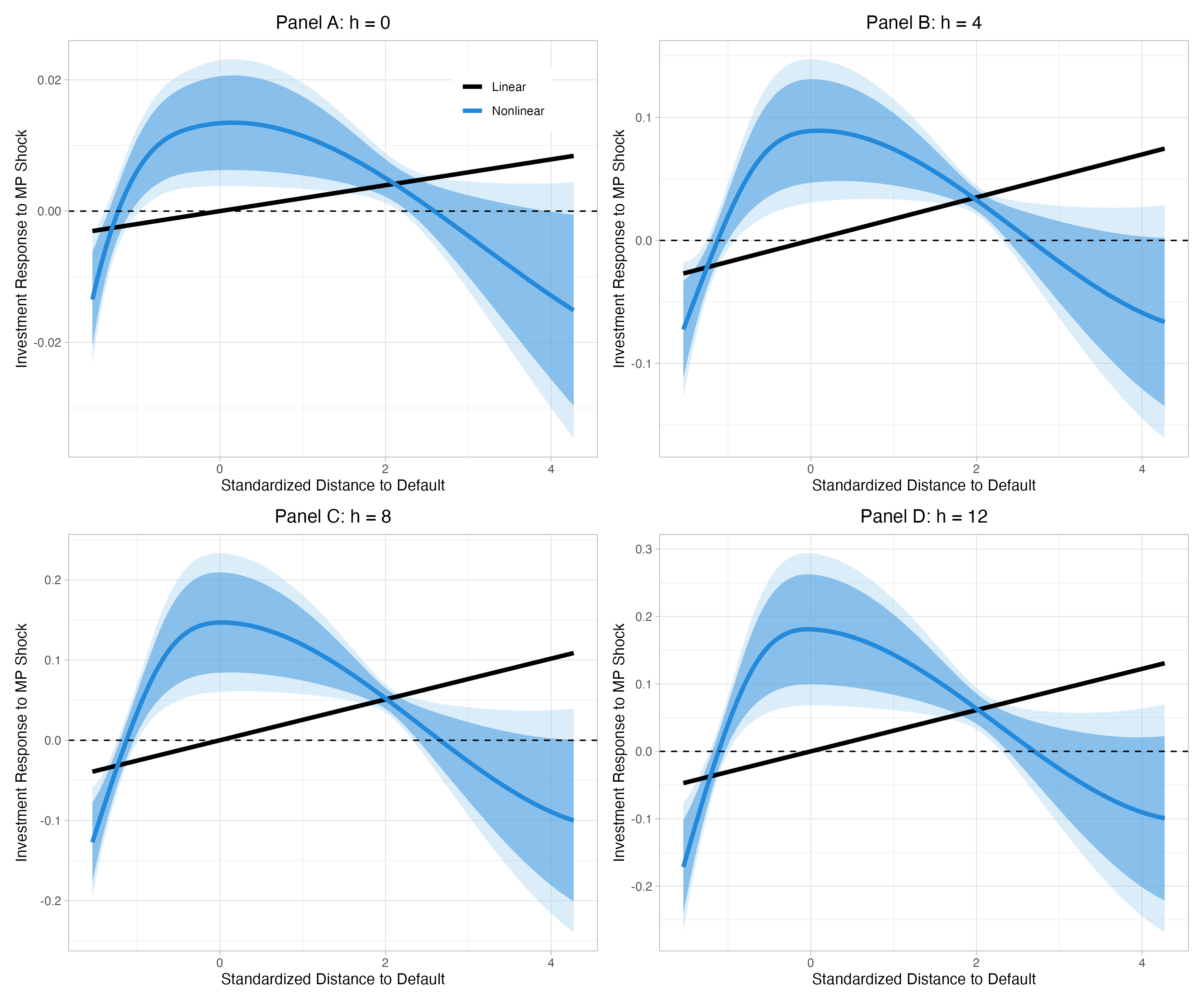}
	\caption{Impulse Response of Firm-Level Investment to a Monetary Policy Shock}
	\label{fig:IRF_1}
	\flushleft	\scriptsize \textit{Notes}: This figure displays local projection impulse responses to a 25 b.p. monetary policy shock across firms with different levels of distance to default. The units of distance to default are standard deviations relative to the sample mean. The linear specification imposes a linear interaction between the shock and distance to default. The nonlinear specification implements the nonlinear function of distance to default. Dark (light) shaded regions report 68\% (90\%) uniform confidence bands. 
\end{figure}
Panels B-D of Figure \ref{fig:IRF_1} report IRFs at four-quarter intervals up to 12 quarters after the shock. The non-monotonicity becomes more pronounced over time: firms near the center of the D2D distribution exhibit increasingly larger cumulative investment responses than those at either tail.\footnote{Although not shown, the cumulative IRF changes little beyond 12 quarters, indicating that the effects of the shock have largely dissipated.} In contrast, the linear specification misses these dynamics and suggests that the response of high-D2D firms continues to grow relative to that of firms closer to default.

To illustrate the economic implications of response heterogeneity, Figure \ref{fig:IRF_2} plots the differential in cumulative impulse responses across selected points of the D2D distribution under the linear and nonlinear specifications, which directly captures how the transmission of shocks varies across firms over the post-shock horizon. We report pointwise confidence intervals using \eqref{eq:ptwise_var} and \eqref{eq:pointwise_ci_main}, since the objects of interest are differences in impulse responses at specific states.\footnote{Specifically, we consider $\widehat g_h(Z_a) - \widehat g_h(Z_b)$, where $Z_a$ and $Z_b$ denote two fixed points of the state distribution, such as the 50th and 5th percentiles, or the 95th and 50th percentiles, of $Z_{i,t-1}$. Accordingly, instead of using the pointwise variance estimator in \eqref{eq:ptwise_var} directly, we use the variance
$( \phi_{\widehat J_h}(Z_a) - \phi_{\widehat J_h}(Z_b) )^{\top}
\widehat V_h
( \phi_{\widehat J_h}(Z_a) - \phi_{\widehat J_h}(Z_b) )$ and construct pointwise confidence intervals as in \eqref{eq:pointwise_ci_main}.} In Panel A, we compare a firm at the 50th percentile of the D2D distribution to one at the 5th percentile. Although both sets of estimates predict that the firm at the 50th percentile will invest more than the firm at the 5th percentile in response to the monetary policy shock, the linear estimates substantially underpredict the difference; for example, after 16 quarters, the linear estimates yield a cumulative investment differential of just under 5\%, whereas the nonlinear estimates suggest a cumulative differential exceeding 20\%. In Panel B we similarly compare a firm at the 95th percentile of the state distribution to the 50th; in this case, the two sets of estimates yield virtually opposite results: the linear estimates predict a positive differential of almost 10\% at 16 quarters following the shock, whereas the nonlinear estimates show that this difference is in fact negative and close to -10\%. Thus, the linear specification underestimates the strength of the positive relationship between responsiveness and D2D in the left tail of the distribution, and misses entirely on the negative relationship in the right tail. Similar results hold at the shorter horizons as well. In short, imposing a linear specification can lead to misleading conclusions regarding the investment response of firms across the financial distribution.


\begin{figure}[htbp]
	\centering
	\includegraphics[scale=0.53]{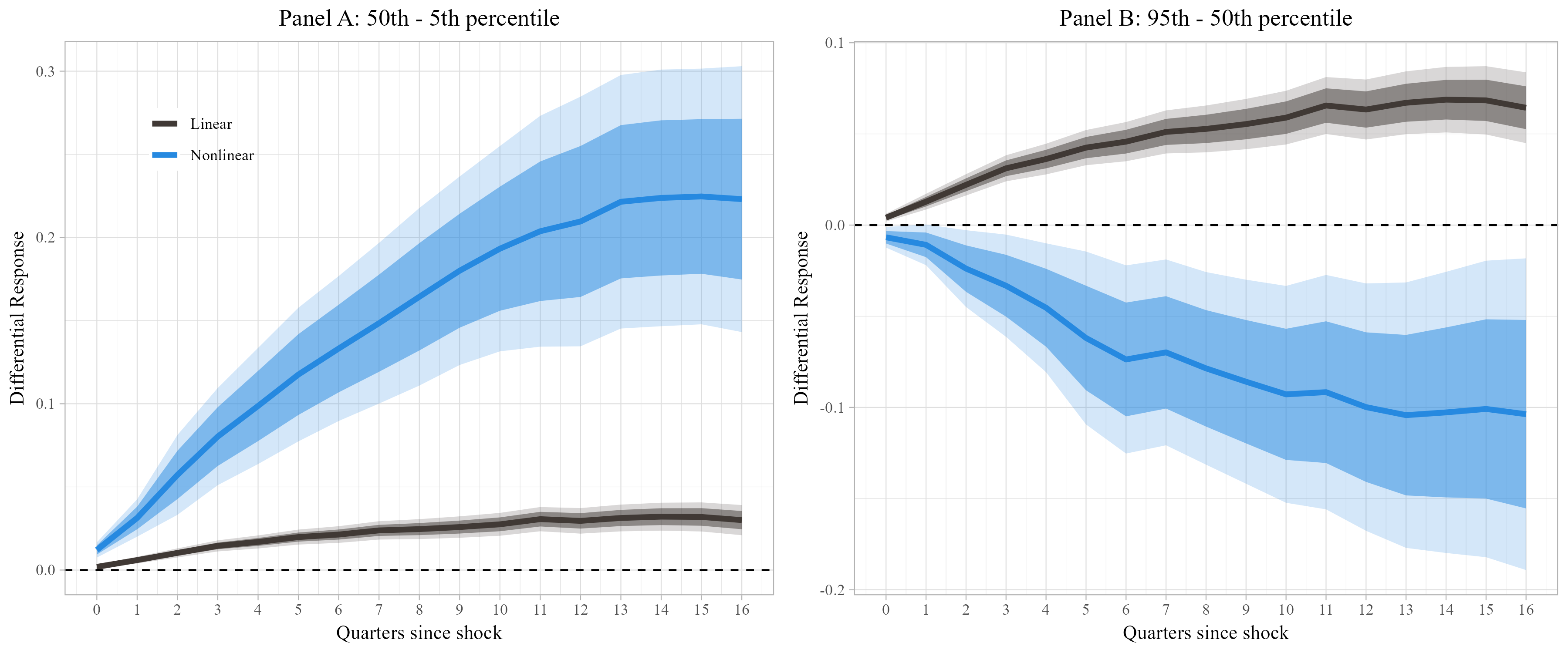}
	\caption{Differential Investment Responses to a Monetary Policy Shock}
	\label{fig:IRF_2}
	\flushleft	\scriptsize \textit{Notes}: This figure displays the differential in impulse responses (IRF) to a 25 b.p. monetary policy shock for firms with different levels of distance to default. The linear specification imposes a linear interaction between the shock and distance to default. The nonlinear specification implements the nonlinear function of distance to default. Panel A plots the difference in IRFs for a firm at the 50th percentile of the distance to default distribution relative to a firm at the 5th percentile; Panel B plots the difference in IRFs for a firm at the 95th percentile relative to a firm at the 50th percentile. Dark (light) shaded regions report 68\% (90\%) pointwise confidence intervals.
\end{figure}

\subsection{Aggregate implications of nonlinear responsiveness }

Our results have important quantitative implications for the role of financial heterogeneity in the transmission of monetary policy: because most firms are concentrated in regions of the distribution where the nonlinear response exceeds the linear approximation, the latter effectively averages across states and attenuates the estimated impact of monetary policy shocks.


We use the estimated cross-sectional response function to summarize aggregate
responsiveness through heterogeneity in firms' financial conditions. Specifically, we
average the estimated firm-level responses using firms' shares of aggregate capital at
the time of the shock:
\begin{equation*}
\Delta k_{t+h}
\ = \
\underbrace{
\sum_i
\frac{K_{i,t-1}}{\sum_j K_{j,t-1}}
g_h\left(Z_{i,t-1}\right)
}_{\text{aggregate responsiveness}}
\ X_t \;,
\end{equation*}
where \(\Delta k_{t+h}=\log K_{t+h}-\log K_{t-1}\) and
\(K_t=\sum_i K_{it}\) denotes the aggregate capital stock.\footnote{This calculation captures only the incremental component of the response associated with cross-sectional heterogeneity in firms' financial conditions. Because equation \eqref{eq:main_spec} includes sector-time fixed effects, the component of the monetary-policy response common to firms within a sector and period is absorbed by the fixed effects. The calculation therefore aggregates the estimated responses relative to the sector-time benchmark implicit in equation \eqref{eq:main_spec}, rather than the level aggregate response to monetary policy.}  As the expression highlights, aggregate responsiveness depends not only on individual responses, but the joint distribution of those responses with firm shares of aggregate capital.\footnote{A similar result is used in \cite{david2023rise} to decompose the impact of monetary policy on aggregate investment across tangible and intangible capital.}

We calculate aggregate responsiveness upon shock impact under both the linear and nonlinear specifications of $g_h\left(\cdot\right)$ using the estimates from Panel A of Figure \ref{fig:IRF_1} and the distribution of firm capital on a quarter-by-quarter basis. We plot the results in Figure \ref{fig:agg_inv}. There are two main takeaways: first, the linear specification substantially underestimates the sensitivity of aggregate investment to monetary policy. On average, the linear specification implies that financial heterogeneity adds about 0.09 percentage points (p.p.) to the responsiveness of aggregate investment. In contrast, the nonlinear specification implies that financial heterogeneity adds about 1.0 p.p., roughly an order of magnitude larger.\footnote{The result is consistent with previous work finding that aggregate non-residential private fixed investment tends to be highly sensitive to monetary policy relative to other categories of GDP \citep{david2023rise}.}  The finding implies that not only does the linear specification miss on important aspects of the distribution of micro-level responses, but also that those micro-level biases affect ``bottom-up'' calculations of the implications of micro-level responses for the macro-level response. The result stems largely from the fact that most firms are concentrated in regions of the distribution where the nonlinear response exceeds the linear approximation and hence the linear specification attenuates the estimated aggregate impact of the macroeconomic shock.

The second takeaway from Figure \ref{fig:agg_inv} is that aggregate responsiveness differs sharply over time across the two specifications, with a correlation of about -0.7. The linear specification implies weaker monetary policy precisely when the nonlinear specification implies stronger effects. As a stark example, consider the Great Financial Crisis (GFC) and its aftermath from late 2007 through 2009. As firms’ financial conditions deteriorated and more firms moved closer to default, the positive coefficient from the linear LP predicts that monetary policy would be less effective in stimulating investment. In contrast, because pre-recession financial health was unusually strong (mean D2D was high), worsening conditions moved firms toward and then below the long-run mean, where our estimates imply responsiveness is highest. Thus, the two approaches yield very different implications regarding the states of micro-level financial conditions when monetary policy is most effective: the linear estimates imply monetary policy is most effective at times when firms tend to be financially the strongest (responsiveness from the linear estimates also tends to be high when dispersion in D2D is high), whereas the nonlinear estimates imply monetary policy is most effective when firms tend to be closer to the long-run average level of D2D (responsiveness from the nonlinear estimates also tends to be high when dispersion in D2D is low). In sum, allowing for a flexible approach to estimating the impact of macro shocks on micro variables can be critical for properly assessing the macro-level implications of micro-level heterogeneity.

\begin{figure}[htbp]
	\centering
	\includegraphics[scale=0.45]{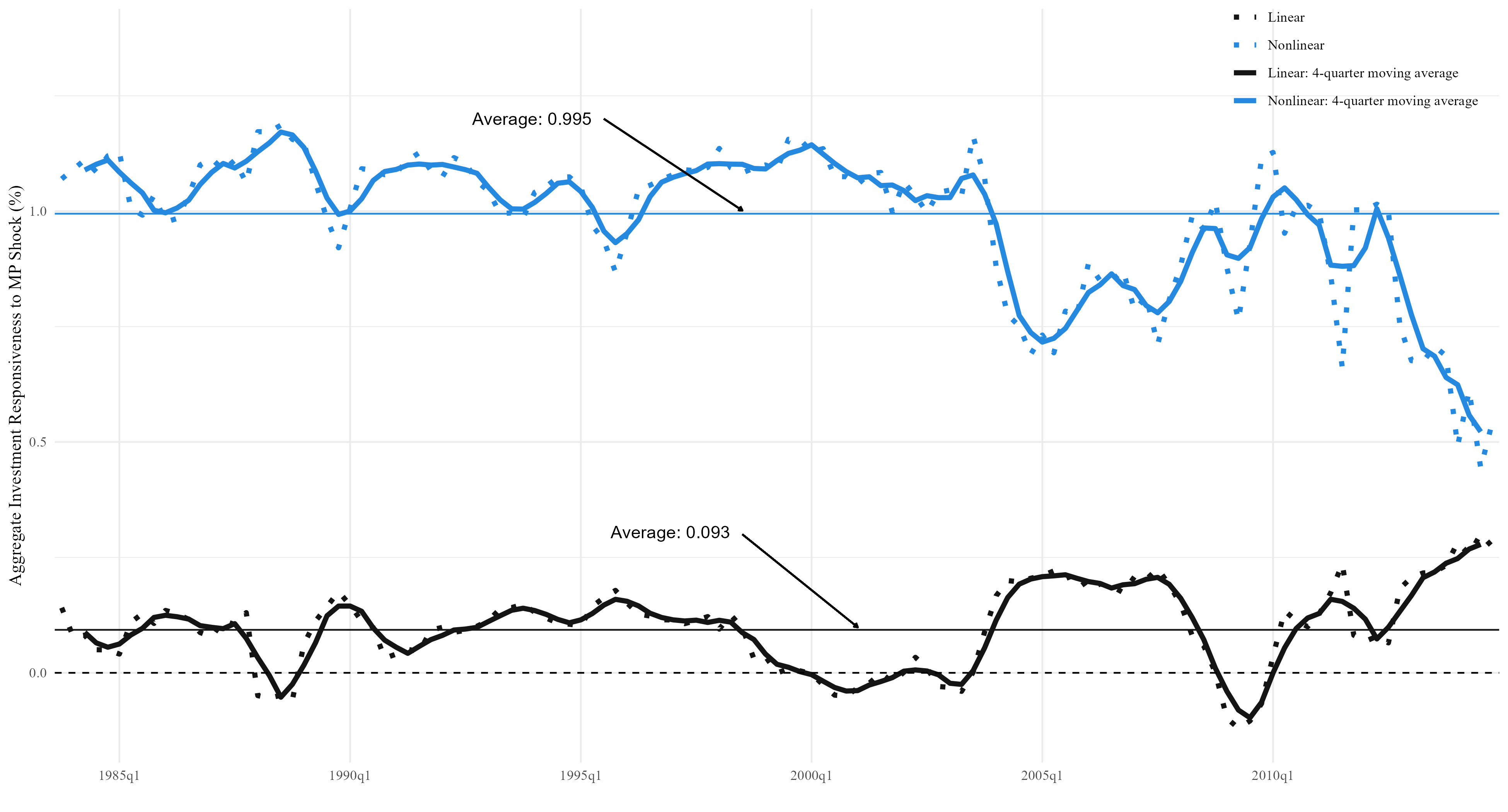}
	\caption{Aggregate Investment Response to a Monetary Policy Shock}
	\label{fig:agg_inv}
\end{figure}

\section{Conclusion}\label{sec:conclusion}

State-dependent LPs are widely used to study how causal responses vary across economic states, but that interpretation is not generally valid. The broader lesson is that state-dependent LPs can recover causal response functions in micro-macro environments with linear shock exposure and stable propagation, but doing so generally requires estimating state dependence nonparametrically rather than imposing a linear interaction. Our framework therefore recasts state-dependent LPs from interpreting shock-state interaction coefficients to identifying, estimating, and conducting valid inference on causal response functions in micro-macro panels.

\appendix

\section{Appendix: What LPs with Linear State Dependence Estimate}
\label{app:linear_misspec}

To characterise the consequences of misspecifying \(g_h(z)\) as linear, consider the simplified setup where the researcher estimates the LP with linear state dependence
\begin{equation}
\label{linearlemma1}
Y_{i,t+h}
=
(\alpha_h +\beta_h Z_{i,t-1})X_t+u_{i,t+h},
\end{equation}
while the true state-dependence is nonlinear:
\begin{equation}
\label{mainlemma1}
Y_{i,t+h}
=
g_h(Z_{i,t-1})X_t+u_{i,t+h}.
\end{equation}

We analyse the resulting misspecification in Lemma~\ref{misspecification beta}, which characterizes the coefficient $\beta_h$.

\begin{assumption}
\label{assumption DGP}
\leavevmode \\
\((a)\) \(\{(Z_{i,t-1},X_t)\}\) satisfies: \(\{Z_{i,t-1}\}\) is independent and identically distributed across \(i\) and stationary in \(t\); \(\{X_t\}\) is stationary; \((Z_{i,t-1},X_t)\) has joint distribution \(F_{Z,X}\).  \\[4pt]
\((b)\) \(Z_{i,t-1}\) has a continuous distribution with compact support \(\mathcal Z=[\underline z,\bar z]\subset\mathbb R\). \\[4pt]
\((c)\) \(g_h(z)\) is continuously differentiable on \(\mathcal Z\). \\[4pt]
\((d)\) The regression error satisfies \(\mathbb E[X_tu_{i,t+h}]=0\) and \(\mathbb E[Z_{i,t-1}X_tu_{i,t+h}]=0\).
\end{assumption}

\begin{assumption}
\label{assumption moments}
\leavevmode \\
\((a)\) \(Z_{i,t-1}\) and \(X_t\) have finite fourth moments; the second-moment matrix of \((X_t,Z_{i,t-1}X_t)\) is nonsingular. \\[4pt]
\((b)\)
\( \mathbb E\!\left[ |g_h(Z_{i,t-1})X_t| \{1+|X_t|+|Z_{i,t-1}X_t|\} \right]<\infty. \) \\[4pt] 
\((c)\) \(\int_{\mathcal Z}|\omega_h(z)g_h'(z)|\,dz<\infty\), where \(\omega_h(z)\) is defined in Lemma~\ref{misspecification beta}.
\end{assumption}

The conditions imposed here are local to Lemma~\ref{misspecification beta}.
They weaken the main identifying restrictions while retaining the moment and
smoothness conditions needed to characterize the population linear projection
coefficient. Assumption~\ref{assumption DGP}\textup{(d)} replaces the
conditional mean restriction with the orthogonality conditions required for
the OLS probability limit, while Assumption~\ref{assumption moments} ensures
that the relevant moments and integrals are well defined. These conditions
characterize the population projection coefficient but do not
nonparametrically identify \(g_h(\cdot)\). Assumption~\ref{assumption DGP}\textup{(a)} imposes a stronger sampling condition used only for the argument here; the formal asymptotic theory in the Online Appendix allows weaker conditions. By contrast, Assumption~\ref{assumption DGP}\textup{(c)} requires only continuous differentiability, while the stronger smoothness conditions in the Online Appendix imply the local condition used here.

\begin{lemma}[Estimand of LPs with linear state dependence]
\label{misspecification beta}
Suppose Assumptions~\ref{assumption DGP} and~\ref{assumption moments} hold,
and the DGP is~\eqref{mainlemma1}. Consider the LP with linear state
dependence~\eqref{linearlemma1}. Let
\[
\mu_{Z,X}
=
\frac{\mathbb E[X_t^2Z_{i,t-1}]}{\mathbb E[X_t^2]},
\qquad
D_{Z,X}
=
\mathbb E\!\left[
X_t^2(Z_{i,t-1}-\mu_{Z,X})^2
\right].
\]
Then the population coefficient in \eqref{linearlemma1} on \(Z_{i,t-1}X_t\) satisfies
\[
\beta_h
=
\frac{
\mathbb E\!\left[
X_t^2(Z_{i,t-1}-\mu_{Z,X})g_h(Z_{i,t-1})
\right]
}{
D_{Z,X}
}.
\]
Under the additional conditional homoskedasticity condition
\(\mathbb E[X_t^2\mid Z_{i,t-1}]= \mathbb E[X_t^2]\), then \(\mu_{Z,X}= \mathbb E[Z_{i,t-1}]\) and \( D_{Z,X} = \mathbb E[X_t^2] \mathbb E \left[(Z_{i,t-1}-\mathbb E[Z_{i,t-1}])^2 \right]  \), and the expression simplifies to
\[
\beta_h
=
\frac{
\mathbb E\!\left[(Z_{i,t-1}- \mathbb E[Z_{i,t-1}])g_h(Z_{i,t-1})\right]
}{
\mathbb E\!\left[(Z_{i,t-1}- \mathbb E[Z_{i,t-1}])^2\right]
}
=
\frac{\operatorname{Cov}(Z_{i,t-1},g_h(Z_{i,t-1}))}
{\operatorname{Var}(Z_{i,t-1})}.
\]

Moreover, the population coefficient admits the representation
\[
\beta_h
=
\int_{\mathcal Z}\omega_h(z)g_h'(z)\,dz,
\]
where
\[
\omega_h(z)
=
\frac{
\mathbb E\!\left[
X_t^2(Z_{i,t-1}-\mu_{Z,X})\mathbf 1\{Z_{i,t-1}\ge z\}
\right]
}{
D_{Z,X}
}
\]
with $\omega_h(z)\ge 0$
for all $z\in\mathcal Z$,
$\int_{\mathcal Z}\omega_h(z)\,dz=1$.
\end{lemma}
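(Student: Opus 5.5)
The argument has three steps: read $\beta_h$ off the normal equations, specialise to the homoskedastic case, and convert the covariance-type numerator into an integral of $g_h'$ by Fubini.

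\emph{Normal equations.} Write $P_t=(X_t,Z_{i,t-1}X_t)^\top$. By Assumption~\ref{assumption DGP}(d), the population projection coefficients in \eqref{linearlemma1} solve
\[
\mathbb E[P_tP_t^\top](\alpha_h,\beta_h)^\top=\mathbb E[P_t\,g_h(Z_{i,t-1})X_t].
\]
These moments are finite by Assumption~\ref{assumption moments}(a)--(b), and the Gram matrix is invertible by~(a). Solving the $2\times2$ system, equivalently partialling out $X_t$ à la Frisch--Waugh, gives the following. The residual of $Z_{i,t-1}X_t$ after projection on $X_t$ is $X_t(Z_{i,t-1}-\mu_{Z,X})$, with $\mu_{Z,X}=\mathbb E[X_t^2Z]/\mathbb E[X_t^2]$. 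Its squared norm is $D_{Z,X}$, which is positive by nonsingularity. Its covariance with $g_h(Z)X_t$ is $\mathbb E[X_t^2(Z-\mu_{Z,X})g_h(Z)]$. Dividing yields the first display.

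\emph{Homoskedastic case.} Under $\mathbb E[X_t^2\mid Z]=\mathbb E[X_t^2]$, iterated expectations factor $\mathbb E[X_t^2]$ out of every moment. This gives $\mu_{Z,X}=\mathbb E Z$ and $D_{Z,X}=\mathbb E[X_t^2]\operatorname{Var}(Z)$. The common factor then cancels, leaving $\operatorname{Cov}(Z,g_h(Z))/\operatorname{Var}(Z)$.

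\emph{Derivative representation.} The key identity is $\mathbb E[X_t^2(Z-\mu_{Z,X})]=0$, which holds by the definition of $\mu_{Z,X}$. It lets us replace $g_h(Z)$ by $g_h(Z)-g_h(\underline z)$ in the numerator. By continuous differentiability (Assumption~\ref{assumption DGP}(c)),
\[
g_h(Z)-g_h(\underline z)=\int_{\mathcal Z}g_h'(z)\mathbf 1\{Z\ge z\}\,dz .
\]
Substituting and exchanging expectation and integral via Fubini gives $\beta_h=\int_{\mathcal Z}\omega_h(z)g_h'(z)\,dz$ with the stated $\omega_h$. Fubini is justified because $g_h'$ is bounded on compact $\mathcal Z$ and $\mathbb E[X_t^2|Z-\mu_{Z,X}|]<\infty$ by the fourth moments; Assumption~\ref{assumption moments}(c) covers the integrability of the resulting integrand.

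\emph{Properties of $\omega_h$.} Write $m(z)=\mathbb E[X_t^2(Z-\mu_{Z,X})\mathbf 1\{Z\ge z\}]$, so $\omega_h=m/D_{Z,X}$. We have $m(\underline z)=0$ by the key identity and $m(z)=0$ for $z>\bar z$.
\begin{itemize}
\item For $z\ge\mu_{Z,X}$, every term in $m(z)$ has $Z-\mu_{Z,X}\ge 0$, so $m(z)\ge0$.
\item For $z<\mu_{Z,X}$, the zero-mean identity gives $m(z)=-\mathbb E[X_t^2(Z-\mu_{Z,X})\mathbf 1\{Z<z\}]$. Every term here has $Z<z<\mu_{Z,X}$, so again $m(z)\ge0$.
\end{itemize}
This sign argument, the Kolesár--Plagborg-Møller style weight argument, is the main step to get right. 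Note that it relies only on the $X_t^2$-weighted mean $\mu_{Z,X}$, not on homoskedasticity. For normalisation, apply the same Fubini swap with $g(z)=z$:
\[
\int_{\mathcal Z}m(z)\,dz=\mathbb E[X_t^2(Z-\mu_{Z,X})(Z-\underline z)]=\mathbb E[X_t^2(Z-\mu_{Z,X})^2]=D_{Z,X},
\]
where the middle equality again uses the zero-mean identity. Hence $\int_{\mathcal Z}\omega_h(z)\,dz=1$.
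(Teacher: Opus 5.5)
Your proof is correct. For the normal equations, the homoskedastic simplification, the fundamental-theorem-of-calculus/Fubini representation, and the normalisation $\int_{\mathcal Z}\omega_h(z)\,dz=1$, it follows the paper's proof essentially line by line.

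The one genuine difference is the proof that $\omega_h\ge0$. The paper writes $\mathbf 1=A_z+A_z^c$ with $A_z=\mathbf 1\{Z\ge z\}$ and derives, in its shorthand $X=X_t$, $Z=Z_{i,t-1}$, the identity
\[
\mathbb E[X^2(Z-\mu_{Z,X})A_z]
=
\frac{\mathbb E[X^2(Z-z)A_z]\,\mathbb E[X^2A_z^c]+\mathbb E[X^2(z-Z)A_z^c]\,\mathbb E[X^2A_z]}{\mathbb E[X^2]}.
\]
Both summands are nonnegative for every $z$, so no case distinction is needed.

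You instead split at $z=\mu_{Z,X}$. For $z<\mu_{Z,X}$ you use the zero-mean identity to pass to the lower tail, where $Z-\mu_{Z,X}<0$. Your version is shorter and shows slightly more. In each case the integrand has a fixed sign, so $m(z)$ is nondecreasing below $\mu_{Z,X}$ and nonincreasing above it. The weights are therefore unimodal, peaking at the $X_t^2$-weighted mean of the state. The paper's identity instead gives a single closed form for $m(z)$ in terms of one-sided moments, valid uniformly in $z$.

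You also justify the Fubini interchange explicitly: $g_h'$ is bounded on the compact $\mathcal Z$, and $\mathbb E[X_t^2|Z-\mu_{Z,X}|]<\infty$. The paper only asserts this interchange.
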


\begin{proof}[Proof of Lemma~\ref{misspecification beta}]
By Assumption~\ref{assumption DGP}\((a)\), we can let \(Z=Z_{i,t-1}\), \(X=X_t\), and \(g_h(Z)=g_h(Z_{i,t-1})\). Then, due to Assumption~\ref{assumption DGP}\((d)\) and the true DGP in~\eqref{mainlemma1}, the population coefficients in~\eqref{linearlemma1} solve
\[
\mathbb E[X^2g_h(Z)]
=
\alpha_h \mathbb E[X^2]+\beta_h \mathbb E[X^2Z],
\qquad
\mathbb E[X^2Zg_h(Z)]
=
\alpha_h \mathbb E[X^2Z]+\beta_h \mathbb E[X^2Z^2].
\]
Solving these normal equations gives
\[
\beta_h
=
\frac{
\mathbb E[X^2] \mathbb E[X^2Zg_h(Z)] - \mathbb E[X^2Z] \mathbb E[X^2g_h(Z)]
}{
\mathbb E[X^2] \mathbb E[X^2Z^2]-\{ \mathbb E[X^2Z]\}^2
}
=
\frac{
\mathbb E[X^2(Z-\mu_{Z,X})g_h(Z)]
}{
D_{Z,X}
},
\]
where \(\mu_{Z,X}= \mathbb E[X^2Z]/ \mathbb E[X^2]\) and \(D_{Z,X} = \mathbb E[X^2(Z - \mu_{Z,X})^2]\). Assumption~\ref{assumption moments}\((a)\) ensures that \(\mathbb E[X^2] > 0\) and \(D_{Z,X} > 0\). This proves the first claim.

By Assumption~\ref{assumption DGP}\((b,c)\), we can write
\[
g_h(Z)
=
g_h(\underline z)
+
\int_{\underline z}^{\bar z}
\mathbf 1\{Z\ge z\}g_h'(z)\,dz .
\]
Then, it holds
\begin{align*}
\mathbb E[X^2(Z-\mu_{Z,X})g_h(Z)] & =_{(1)} \mathbb E \!\left[ X^2(Z-\mu_{Z,X}) \int_{\underline z}^{\bar z} \mathbf 1\{Z\ge z\}g_h'(z)\,dz \right] \\
& =_{(2)} \int_{\underline z}^{\bar z} \mathbb E \!\left[ X^2(Z-\mu_{Z,X})\mathbf 1\{Z\ge z\} \right]
g_h'(z)\,dz .
\end{align*}
where \(=_{(1)}\) is by \(\mathbb E[X^2(Z-\mu_{Z,X})]=0\) and \(=_{(2)}\) by Fubini's theorem. Dividing by \(D_{Z,X}\) yields
\[
\beta_h
=
\int_{\mathcal Z}\omega_h(z)g_h'(z)\,dz, \quad 
\omega_h(z)
=
\frac{
\mathbb E\!\left[
X^2(Z - \mu_{Z,X})\mathbf 1\{Z \ge z\}
\right]
}{
D_{Z,X}
}.
\]

It remains to verify the properties of \(\omega_h(z)\). We first show that \(\omega_h(z)\ge 0\) for any \(z\in\mathcal Z\). Let \(A_z=\mathbf 1\{Z\ge z\}\) and \(A_z^c=\mathbf 1\{Z<z\}\). As \(A_z+A_z^c=1\), we have
\[
\mathbb E[X^2]=\mathbb E[X^2A_z]+\mathbb E[X^2A_z^c],
\qquad
\mathbb E[X^2Z]=\mathbb E[X^2ZA_z]+\mathbb E[X^2ZA_z^c].
\]
Substituting these decompositions and simplifying gives
\[
\mathbb E[X^2(Z-\mu_{Z,X})A_z]
=
\frac{\mathbb E[X^2ZA_z] \mathbb E[X^2A_z^c]
-
\mathbb E[X^2ZA_z^c] \mathbb E[X^2A_z]}{\mathbb E[X^2]}.
\]
Using
\[
\begin{aligned}
\mathbb E[X^2ZA_z]
&=
\mathbb E[X^2\{z+(Z-z)\}A_z] 
=
z\mathbb E[X^2A_z]+\mathbb E[X^2(Z-z)A_z],
\\[4pt]
\mathbb E[X^2ZA_z^c]
&=
\mathbb E[X^2\{z-(z-Z)\}A_z^c]
=
z\mathbb E[X^2A_z^c]-\mathbb E[X^2(z-Z)A_z^c],
\end{aligned}
\]
it follows that
\[
\mathbb E[X^2(Z-\mu_{Z,X})A_z]
=
\frac{\mathbb E[X^2(Z-z)A_z]\mathbb E[X^2A_z^c]
+
\mathbb E[X^2(z-Z)A_z^c]\mathbb E[X^2A_z]}
{\mathbb E[X^2]}.
\]
The numerator is nonnegative, since \(X^2\ge 0\), \(A_z,A_z^c\in\{0,1\}\),
\(Z-z\ge 0\) on \(\{Z\ge z\}\), and \(z-Z > 0\) on \(\{Z<z\}\). Hence, \(\omega_h(z) \ge 0\). 

Second, we show that the weights integrate to one. Using the definition of \(\omega_h(z)\) and Fubini's theorem, we obtain
\[
\begin{aligned}
\int_{\underline z}^{\bar z}\omega_h(z)\,dz
&=
\frac{1}{D_{Z,X}}
\mathbb E\!\left[
X^2(Z-\mu_{Z,X})
\int_{\underline z}^{\bar z}\mathbf 1\{Z\ge z\}\,dz
\right] \\
&=_{(1)}
\frac{
\mathbb E[X^2(Z-\mu_{Z,X})(Z-\underline z)]
}{
D_{Z,X}
}
=_{(2)}
\frac{
\mathbb E[X^2(Z-\mu_{Z,X})^2]
}{
D_{Z,X}
}
=
1,
\end{aligned}
\]
where \(=_{(1)}\) is by \(\int_{\underline z}^{\bar z}\mathbf 1\{Z\ge z\}\,dz=Z-\underline z\) and \(=_{(2)}\) by \(\mathbb E[X^2(Z-\mu_{Z,X})]=0\).
\end{proof}

Lemma~\ref{misspecification beta} shows that \(\beta_h\) averages local
derivatives \(g_h'(z)\), not response levels \(g_h(z)\). Hence
\(\beta_h>0\) does not imply that higher-\(z\) units have larger
causal responses. The fitted linear response \(\alpha_h+\beta_h z\)
coincides with the causal state-dependent response \(g_h(z)\) only when
\(g_h(\cdot)\) is linear.

\section*{Data Availability Statement}

Firm-level data are from Compustat and require a subscription. The monetary
policy shock series and replication code for the empirical analysis and
simulations will be made publicly available.
\section*{Conflict of interest}
The authors declare no conflict of interest.
\section*{AI use}
We used OpenAI ChatGPT (GPT-5.6 Sol) for editorial assistance in shortening and polishing
the manuscript to meet the journal’s submission-length requirements. All substantive content, analysis, and
conclusions are the authors’ own responsibility.

\bibliographystyle{apalike}
\bibliography{nonlinearLP_bib}


\end{document}


\title{Online Appendix to ``Causal State-Dependent Local Projections''}

\author{
Joel M. David, Raffaella Giacomini, Xiyu Jiao, and Weining Wang%
\thanks{David: Federal Reserve Bank of Chicago
(joel.david@chi.frb.org);
Giacomini: Department of Economics, University College London
(r.giacomini@ucl.ac.uk);
Jiao: Department of Economics, University of Gothenburg
(xiyu.jiao@economics.gu.se);
Wang: Department of Economics, University of Bristol
(weining.wang@bristol.ac.uk).}
}

\date{\today}
\maketitle

\setcounter{tocdepth}{2}

\section{Technical Details and Asymptotic Theory}
\label{app:sieve_tech}

This Online Appendix provides the formal setup and asymptotic theory for the
estimation and inference procedures in the paper, including regularity
conditions, consistency, asymptotic normality, pointwise and uniform inference,
proofs, and additional simulation results.

\subsection{Estimation and Inference}
\label{app:sieve_estimation_tech}

For each horizon $h\in\{0,1,\ldots,H\}$, recall the LP
$Y_{i,t+h}=g_h(Z_{i,t-1})X_t+W_{i,t-1}^{\top}\gamma_h+u_{i,t+h}$.
We estimate and conduct inference on the potentially nonlinear function
$g_h(\cdot)$ using a sieve approximation. Let
$\{\phi_{1,J},\ldots,\phi_{J,J}\}$ denote a sieve basis; the main text uses
cubic B-splines, though the notation also covers, for example, Fourier and
wavelet bases. Suppressing the dependence on $J$ when clear, write
$\phi(z)=(\phi_1(z),\ldots,\phi_J(z))^{\top}$ and
$g_h(z)=\phi(z)^{\top}b_h+R_h(z)$, where
$b_h=(b_{h,1},\ldots,b_{h,J})^{\top}$ and $R_h(z)$ is the approximation error.

The sieve dimension $J$ is selected in the main text by AIC over a finite
candidate set. The simulations also consider generalized cross-validation
(GCV) and LASSO-based selection to assess finite-sample sensitivity to the
selection rule.

Substituting the sieve approximation yields estimation equation
\begin{equation}
Y_{i,t+h}
=
X_t\phi(Z_{i,t-1})^{\top}b_h
+
W_{i,t-1}^{\top}\gamma_h
+
R_h(Z_{i,t-1})X_t
+
u_{i,t+h}.
\label{eq:lp_sieve_est}
\end{equation}
For notation stacked across individuals $i$, define
\begin{align*}
Y_{t+h}
& =
(Y_{1,t+h},\ldots,Y_{N,t+h})^{\top},
&
Z_{t-1}
& =
(Z_{1,t-1},\ldots,Z_{N,t-1})^{\top}, \\
W_{t-1}
& =
\begin{pmatrix}
W_{1,t-1}^{\top}\\
\vdots\\
W_{N,t-1}^{\top}
\end{pmatrix}
\in\mathbb R^{N\times Q},
&
u_{h,t+h}
& =
(u_{1,t+h},\ldots,u_{N,t+h})^{\top}.
\end{align*}
Let
\[
\Phi(Z_{t-1})
=
\begin{pmatrix}
\phi(Z_{1,t-1})^{\top}\\
\vdots\\
\phi(Z_{N,t-1})^{\top}
\end{pmatrix}
\in\mathbb R^{N\times J},
\qquad
R_h(Z_{t-1})
=
(R_h(Z_{1,t-1}),\ldots,R_h(Z_{N,t-1}))^{\top}.
\]
Then the stacked estimation equation is
\begin{equation}
Y_{t+h}
=
X_t\Phi(Z_{t-1})b_h
+
W_{t-1}\gamma_h
+
R_h(Z_{t-1})X_t
+
u_{h,t+h}.
\label{eq:lp_sieve_vector}
\end{equation}

Define
$v_{h,i,t+h}=R_h(Z_{i,t-1})X_t+u_{i,t+h}$ and
$v_{h,t+h}=R_h(Z_{t-1})X_t+u_{h,t+h}$.
Here $v_{h,i,t+h}$ is the composite sieve-regression error; the corresponding
error and residual are denoted simply by $u_{i,t+h}$ and $\widehat u_{i,t+h}$
in the finite-dimensional regression in the main text.
Let $D_{it}^\top=(X_t\phi(Z_{i,t-1})^\top,W_{i,t-1}^\top)$,
$\theta_h^\top=(b_h^\top,\gamma_h^\top)$, and $D_t=
\bigl(X_t\Phi(Z_{t-1}),\,W_{t-1}\bigr)
\in\mathbb R^{N\times(J+Q)}$.
Then \eqref{eq:lp_sieve_est}--\eqref{eq:lp_sieve_vector} become
\begin{equation}
Y_{i,t+h}=D_{it}^{\top}\theta_h+v_{h,i,t+h},
\qquad
Y_{t+h}=D_t\theta_h+v_{h,t+h}.
\label{eq:linear_form_est}
\end{equation}

Estimating \eqref{eq:linear_form_est} by OLS gives
$\widehat\theta_h
=
\left(\sum_{t=1}^{T-h}D_t^\top D_t\right)^{-1}
\left(\sum_{t=1}^{T-h}D_t^\top Y_{t+h}\right).$ Let $\widehat\theta_h^\top=(\widehat b_h^\top,\widehat\gamma_h^\top)$,
$\widehat g_h(z)=\phi(z)^\top\widehat b_h$, with residuals $\widehat v_{h,i,t+h}
=
Y_{i,t+h}-D_{it}^\top\widehat\theta_h,
\widehat v_{h,t+h}
=
Y_{t+h}-D_t\widehat\theta_h.$
When $\widehat J_h$ is data-driven, all quantities are evaluated at
$J=\widehat J_h$.

To isolate the sieve coefficients, define
\[
\widehat A_h
=
\frac{1}{N(T-h)}
\sum_{t=1}^{T-h}D_t^\top D_t
=
\begin{pmatrix}
\widehat A_{11,h} & \widehat A_{12,h}\\
\widehat A_{21,h} & \widehat A_{22,h}
\end{pmatrix},
\qquad
\widehat B_h
=
\frac{1}{N(T-h)}
\sum_{t=1}^{T-h}D_t^\top Y_{t+h}
=
\begin{pmatrix}
\widehat B_{1,h}\\
\widehat B_{2,h}
\end{pmatrix}.
\]
The corresponding population matrix is 
$A_h
=
\E\!\left[\frac1N D_t^\top D_t\right]
=
\begin{pmatrix}
A_{11,h} & A_{12,h}\\
A_{21,h} & A_{22,h}
\end{pmatrix},$
with
\[
A_{11,h}
=
\E\!\left[\frac1N X_t^2\Phi(Z_{t-1})^\top\Phi(Z_{t-1})\right],
\qquad
A_{12,h}
=
\E\!\left[\frac1N X_t\Phi(Z_{t-1})^\top W_{t-1}\right],
\]
\[
A_{22,h}
=
\E\!\left[\frac1N W_{t-1}^\top W_{t-1}\right],
\qquad
A_{21,h}=A_{12,h}^\top.
\]

As in the main text, define the sample and population Schur complements
\[
\widehat{\widetilde A}_{11,h}
=
\widehat A_{11,h}
-
\widehat A_{12,h}\widehat A_{22,h}^{-1}\widehat A_{21,h},
\qquad
\widetilde A_{11,h}
=
A_{11,h}
-
A_{12,h}A_{22,h}^{-1}A_{21,h}.
\]
Then $\widehat b_h
=
\widehat{\widetilde A}_{11,h}^{-1}
\left(
\widehat B_{1,h}
-
\widehat A_{12,h}\widehat A_{22,h}^{-1}\widehat B_{2,h}
\right)$,
equivalently the Frisch--Waugh--Lovell coefficient from partialling out
$W_{i,t-1}$. Hence
\begin{equation}
\widehat b_h-b_h
 =
\widehat{\widetilde A}_{11,h}^{-1}
\left[
\frac{1}{N(T-h)}
\sum_{t=1}^{T-h}
\left(
X_t\Phi(Z_{t-1})^{\top}
-
\widehat A_{12,h}\widehat A_{22,h}^{-1}W_{t-1}^{\top}
\right)
v_{h,t+h}
\right].
\label{eq:estimation_error_for_sieve}
\end{equation}

For feasible inference, define the partialled-out sieve regressors
\[
\widetilde P_{h,t}
=
X_t\Phi(Z_{t-1})^\top
-
\widehat A_{12,h}\widehat A_{22,h}^{-1}W_{t-1}^\top
=
\bigl(\widetilde P_{h,1,t},\ldots,\widetilde P_{h,N,t}\bigr),
\]
where $\widetilde P_{h,i,t}
=
X_t\phi(Z_{i,t-1})
-
\widehat A_{12,h}\widehat A_{22,h}^{-1}W_{i,t-1}.$
The score is $\widehat s_{h,t}
=
\frac{1}{\sqrt N}\widetilde P_{h,t}\widehat v_{h,t+h}
=
\frac{1}{\sqrt N}\sum_{i=1}^N
\widetilde P_{h,i,t}\widehat v_{h,i,t+h}.$
Recall the population
effective score
$s_{h,t}
\coloneqq
\frac{1}{\sqrt N}
X_t\Phi(Z_{t-1})^\top u_{h,t+h}$. For notational convenience, write
\(\Omega_h\equiv\Omega_{h,J,N}\).
As in the main text, $\{\widehat s_{h,t}\}_{t=1}^{T-h}$ is used to construct the HAC estimator $\widehat\Omega_h$ of the long-run covariance $\Omega_h$ of the population score 
$\{s_{h,t}\}_{t=1}^{T-h}$, defined in Subsection~\ref{app:sieve_formal}. The covariance estimator for $\widehat b_h$ is $\widehat V_h$, and pointwise and uniform inference for $g_h(\cdot)$ proceeds as in the paper.

\subsection{Asymptotic Theory} \label{app:sieve_formal}
The results below provide the formal basis for the two theorems summarized in Section~4.3 of the main paper. Theorem~1 in the main manuscript combines the coefficient convergence rate and pointwise asymptotic normality established in Theorem~\ref{thm:main_final} with the HAC consistency result in Theorem~\ref{lem:HAC_consistency}, which together justify feasible pointwise inference. Theorem~2 combines the uniform convergence and Gaussian approximation established in Theorem~\ref{thm:uniform_GA}, the oracle uniform-band result in Corollary~\ref{cor:uniform_band}, and the feasible-band result in Remark~\ref{rem:feasible_band}.

Denote by \(\|x\|_p\) the \(\ell^p\) norm of a vector \(x\), for \(p\in\{1,2,\infty\}\). Unless otherwise stated, \(\|x\|=\|x\|_2\), and \(\|A\|\) denotes the operator norm induced by the Euclidean norm (equivalently, the spectral norm) of a matrix \(A\). For a random vector \(Y\), define $\|Y\|_{L^q} \coloneqq \left\{\E\!\left[\|Y\|_2^q\right]\right\}^{1/q}$. Let \(\lambda_{\min}(A)\) and \(\lambda_{\max}(A)\) denote the smallest and largest eigenvalues of a square matrix \(A\), respectively. Throughout the appendix, the horizon \(h\) is held fixed as \(T\to\infty\), so that \((T-h)/T\to1\). Accordingly, we use the effective sample size \(T-h\) in exact finite-sample definitions, while using \(T\) in asymptotic rate statements, growth conditions, and limiting arguments.

Let $V_{it}\coloneqq(Z_{it},W_{it}^{\top})^{\top}$,
$\varepsilon_{it}\in\mathbb R^{d_\varepsilon}$, and
$\xi_t\in\mathbb R^{d_\xi}$, where $d_\varepsilon$ and $d_\xi$ are fixed.
Define $\mathcal A_t\coloneqq\sigma(\xi_s:s\le t)$ and
$\mathcal A\coloneqq\sigma(\xi_s:s\in\mathbb Z)$, and let
$\{\mathcal F_t\}$ denote the filtration maintained in Assumptions~1--3.
Finally, let
$\sigma_{X,t}^2\coloneqq\E[X_t^2\mid\mathcal F_{t-1}]$ and
$U_{it}\coloneqq(\sigma_{X,t}^2,V_{i,t-1}^{\top})^{\top}$.
Let \(\{\varepsilon_{it}^*\}\) and \(\{\xi_t^*\}\) be independent
copies of the corresponding innovation sequences, mutually independent
of each other and of all original innovations. For \(k\geq0\), let
\(U_{it}^{*(\varepsilon,t-1-k)}\) denote the coupled version of
\(U_{it}\) obtained by replacing \(\varepsilon_{i,t-1-k}\) with
\(\varepsilon_{i,t-1-k}^*\), while leaving all other innovations
unchanged. Similarly, let \(U_{it}^{*(\xi,t-1-k)}\) denote the coupled
version obtained by replacing \(\xi_{t-1-k}\) with
\(\xi_{t-1-k}^*\), while leaving all other innovations unchanged.
Define
\[
\delta_{q,U}^{\mathrm{id}}(k)
\coloneqq
\sup_{N\geq1}
\sup_{i\leq N}
\sup_{t\in\mathbb Z}
\left\|
U_{it}
-
U_{it}^{*(\varepsilon,t-1-k)}
\right\|_{L^q},
\]
\[
\delta_{q,U}^{\mathrm{agg}}(k)
\coloneqq
\sup_{N\geq1}
\sup_{i\leq N}
\sup_{t\in\mathbb Z}
\left\|
U_{it}
-
U_{it}^{*(\xi,t-1-k)}
\right\|_{L^q},
\]
and
$\delta_{q,U}(k)
\coloneqq
\delta_{q,U}^{\mathrm{id}}(k)
+
\delta_{q,U}^{\mathrm{agg}}(k)$.
The associated dependence-adjusted norm is defined by
\[
\|U\|_{q,\alpha}
\coloneqq
\sup_{m\geq0}
(m+1)^\alpha\Delta_{m,q,U},
\qquad
\Delta_{m,q,U}
\coloneqq
\sum_{k=m}^{\infty}\delta_{q,U}(k).
\]
$
\delta_{q,X}(k)
\coloneqq
\sup_{t\in\mathbb Z}
\left|
X_t-X_t^{*(\xi,t-k)}
\right|_{L^q}.
$ and $\|X\|_{q,\alpha}\leq C_X^{\mathrm{dep}}$ is similarly defined.
For \(\kappa>1/2\) and \(M<\infty\), define
$\mathcal G_M
\coloneqq
\left\{
g\in\mathcal H^\kappa(\mathcal Z):
\|g\|_{\mathcal H^\kappa}\leq M
\right\}$.
Writing \(r=\lceil\kappa\rceil-1\), let
\[
\|g\|_{\mathcal H^\kappa}
\coloneqq
\sum_{s=0}^{r}
\sup_{z\in\mathcal Z}|D^s g(z)|
+
\sup_{z\neq z'}
\frac{|D^r g(z)-D^r g(z')|}
     {|z-z'|^{\kappa-r}},
\]
where \(D^s g\) denotes the \(s\)th derivative of \(g\) with \(D^0g=g\). For the sieve basis, define
\[
\mathbf S_t
\coloneqq
\frac1N\sum_{i=1}^N
\phi(Z_{i,t-1})\phi(Z_{i,t-1})^\top,
\qquad
Q_{J,N}
\coloneqq
\E[\mathbf S_t]
=
\E\!\left[
\frac1N\sum_{i=1}^N
\phi(Z_{i,t-1})\phi(Z_{i,t-1})^\top
\right].
\]
Suppose that the following autocovariance series converges in operator norm.
Recall that the effective score, its lag-$k$ autocovariance matrix, and its
finite-$(J,N)$ long-run covariance are defined by
\[
s_{h,t}
\coloneqq
\frac1{\sqrt N}
X_t\Phi(Z_{t-1})^\top u_{h,t+h},
\qquad
\Gamma_{h,J,N}(k)
\coloneqq
\E[s_{h,t}s_{h,t-k}^{\top}],
\qquad
\Omega_{h,J,N}
\coloneqq
\sum_{k\in\mathbb Z}\Gamma_{h,J,N}(k).
\]

 Thus, \(\Omega_h\) denotes the
deterministic \(J\times J\) population long-run covariance at the current
\((J,N)\), with its dependence on \((J,N)\) suppressed. It need not
converge to a fixed bounded matrix. Recall from the main text that $\chi_{h,J,N}
\coloneqq
1\vee\lambda_{\max}(\Omega_h)$.  For later use, define
$\ell_J\coloneqq1\vee\log J,
\bar r_{A,T}
\coloneqq
\sqrt{\frac{\ell_J}{T}}
+
\sqrt{\frac{J\ell_J}{NT}}
+
\frac{J\ell_J}{NT},
\bar a_{A,T}
\coloneqq
\bar r_{A,T}+\frac JT$.

\begin{assumption}
\label{ass:C}
Maintain Assumptions~1--3 in the paper. Suppose:

\begin{enumerate}
\item \textbf{(Sampling and temporal dependence)} The array $\{\varepsilon_{it}:i\ge1,t\in\mathbb Z\}$ is i.i.d.\ across
$i,t$, $\{\xi_t:t\in\mathbb Z\}$ is i.i.d.\ over $t$, and the two
innovation sequences are mutually independent. Moreover,
$V_{it}$ is $\mathcal F_t$-measurable,
$X_t=G_X(\ldots,\xi_{t-1},\xi_t)$, and
\[
\sigma_{X,t}^2
=
\E[X_t^2\mid\mathcal F_{t-1}]
=
\E[X_t^2\mid\mathcal A_{t-1}]
=
G_\sigma(\ldots,\xi_{t-2},\xi_{t-1}).
\]
For each $i$,
$V_{it}=G_{V,i}(\ldots,\varepsilon_{i,t-1},\varepsilon_{it};
\ldots,\xi_{t-1},\xi_t)$.
For every $N\ge1$,
$\{(X_t,V_{1t},\ldots,V_{Nt})\}_{t\in\mathbb Z}$ is strictly stationary.
Moreover, $U_{it}$ is $\mathcal F_{t-1}$-measurable, with first component
depending only on the aggregate innovations.
For some $q>4$, $\rho\in(0,1)$, and finite constants
$C_U$ and $C_{\mathrm{dep}}$,
$
\sup_{N\geq1}\sup_{i\leq N}\sup_{t\in\mathbb Z}
|U_{it}|_{L^q}
\leq C_U,
\delta_{q,U}(k)+\delta_{q,X}(k)
\leq
C_{\mathrm{dep}}\rho^k,
 k\geq0.
$
Each $Z_{it}$ has a continuous distribution on the common compact interval
$\mathcal Z$, and
$0<\underline{\sigma}_X^2\le\sigma_{X,t}^2
\le\overline{\sigma}_X^2<\infty$ a.s.
For the same $q$,
$\E[|X_t|^q\mid\mathcal F_{t-1}]\le C_X$ a.s.
\item \textbf{(Error process and orthogonality)}
For every \(i\) and \(h\leq H\),
$
\E\!\left[
X_t\phi(Z_{i,t-1})u_{i,t+h}
\right]=0,
\E\!\left[
W_{i,t-1}u_{i,t+h}
\right]=0$.
Conditional on the full aggregate-innovation path \(\mathcal A\), the
individual processes
$\left\{
\bigl(V_{it},(u_{i,t+h})_{h=0}^{H}\bigr)
\right\}_{t\in\mathbb Z}$
are independent across \(i\).
For each \(i\) and \(h\leq H\),
$
u_{i,t+h}
=
\sum_{\ell=0}^{\infty}
a_{i,h,\ell}\eta_{i,t+h-\ell}.
$ For some $\rho_u\in(0,1)$ and $C_u<\infty$,
$
\sup_{i,h}
\sum_{\ell=m}^{\infty}|a_{i,h,\ell}|
\leq
C_u\rho_u^m,
 m\geq0$. Conditional on $\mathcal A$, the joint innovation array
$(\eta_{it},\varepsilon_{it}):i\geq1,\ t\in\mathbb Z\}$
is independent across $(i,t)$, with
$\E[\eta_{it}\mid\mathcal A]=0,
\sup_{i,t}\|\eta_{it}\|_{L^q}<\infty$.
Moreover,
$\lambda_{\min}(\Omega_h)\geq c_\Omega>0,
1\leq\chi_{h,J,N}\leq C_\chi N,$
uniformly over \(h\leq H\) and along the asymptotic sequence. Let
$q_{h,t}^{W}
\coloneqq
N^{-1/2}W_{t-1}^{\top}u_{h,t+h}$.
For every $h\leq H$ and each $(J,N)$, suppose that
$\{s_{h,t}\}_{t\in\mathbb Z}$ and
$\{q_{h,t}^{W}\}_{t\in\mathbb Z}$ are strictly stationary and 
$
\sup_{h\leq H}
\sup_{J,N}
\frac{1}{\chi_{h,J,N}}
\sum_{k\in\mathbb Z}
\left\|
\E\!\left[
q_{h,t}^{W}q_{h,t-k}^{W\top}
\right]
\right\|
<\infty.
$ Suppose also that
$
\sup_{h\leq H}
\sup_{J,N}
\frac{1}{\chi_{h,J,N}}
\sum_{k\in\mathbb Z}
\|\Gamma_{h,J,N}(k)\|
<\infty.
$
Moreover, for some $q_0\in(2,q/2]$,
$
\sup_{h\leq H}
\sup_{J,N}
\sup_{\|a\|_2=1}
\left\|
a^\top\Omega_h^{-1/2}s_{h,0}
\right\|_{L^{q_0}}
<\infty.
$ For some $\rho_s\in(0,1)$ and $C_s<\infty$, the effective score
satisfies
$\sup_{h\leq H}
\sup_{J,N}
\max_{j\leq J}
\sup_{r\in\mathbb Z}
\frac{1}{\sqrt{\chi_{h,J,N}}}
\left\|
s_{h,r-h,j}
-
s_{h,r-h,j}^{*(r-m)}
\right\|_{L^{q_0}}
\leq
C_sL_J\rho_s^{(m-h)_+}, m\geq0$.

\item \textbf{(Smoothness)} For some \(\kappa>1/2\) and \(M<\infty\),
$g_h\in\mathcal G_M$
uniformly over \(h\leq H\).

\item \textbf{(Sieve basis and matrix regularity)}
The sieve basis satisfies
$\sup_{z\in\mathcal Z}\|\phi(z)\|_2\le C_\phi\sqrt J$ and
$\|\phi(z)-\phi(z')\|_2\le L_J|z-z'|$ for all $z,z'\in\mathcal Z$.
There exist $0<c_Q<C_Q<\infty$ such that
$c_Q\le\lambda_{\min}(Q_{J,N})
\le\lambda_{\max}(Q_{J,N})\le C_Q$
uniformly in $J,N$.
The eigenvalues of $A_{22,h}$ are uniformly bounded away from zero and infinity
over $h\le H$, and
$\sup_{i,t,J}\|
\E[\phi(Z_{i,t-1})\phi(Z_{i,t-1})^\top\mid\mathcal A_{t-1}]
\|\le C_Q^{\mathcal A}<\infty$ a.s. In addition, the growing-dimensional sample Gram matrix satisfies
$
\sup_{h\leq H}
\left\|
\frac{1}{T-h}
\sum_{t=1}^{T-h}
\left\{
X_t^2\mathbf S_t
-
\E[X_t^2\mathbf S_t]
\right\}
\right\|
=
O_p\!\left(
\sqrt{\frac{\ell_J}{T}}
+
\sqrt{\frac{J\ell_J}{NT}}
+
\frac{J\ell_J}{NT}
\right).
$

\item \textbf{(Sieve growth)}
As $N,T\to\infty$, $J\to\infty$,
$J^2/(NT)\to0$, $\frac{J\chi_{h,J,N}}{NT}\to0$, $J/N\to0$ and $J/T\to0$.

\end{enumerate}
\end{assumption}

Assumptions~1--3 in the paper provide the structural identification conditions:
shock exogeneity, predetermined states and controls, and a state-dependent linear
conditional mean identifying the causal response $g_h(\cdot)$.
Assumptions~\ref{ass:C}(1)--(2) allow cross-sectional dependence through
aggregate innovations: conditional on the path \(\mathcal A\), the joint
covariate and error processes
$\left\{
\bigl(V_{it},(u_{i,t+h})_{h=0}^{H}\bigr)
\right\}_{t\in\mathbb Z}$
are independent across \(i\).
Because the aggregate innovation may be multidimensional, the maintained
representation allows multiple observed or latent common components. Recall that
$\chi_{h,J,N}
=
1\vee\lambda_{\max}(\Omega_h)$
measures the strength of cross-sectional dependence in the effective score.
The formal theory allows
\(1\leq\chi_{h,J,N}\leq C_\chi N\), so the worst-direction effective sample
size is of order \(NT/\chi_{h,J,N}\), and the stochastic component of the
sieve coefficient rate is
$O_p\!\left(
\sqrt{\frac{J\chi_{h,J,N}}{NT}}
\right)$.
Bounded \(\chi_{h,J,N}\) gives the root-\(NT\) regime, whereas
\(\chi_{h,J,N}\asymp N\) gives the root-\(T\) regime emphasized by
\citet{AlmuzaraSancibrian2025MicroResponses}; intermediate values produce
rates between these cases. Because strong common components may affect only
a subset of directions, this rate is a worst-direction upper bound.
Pointwise inference adapts to direction-specific rates through the full
long-run covariance matrix \(\Omega_h\). The HAC estimator estimates
\(\Omega_h\) at its actual scale, so the implementation is unchanged;
feasible inference requires directional relative consistency of the
estimated long-run variance. Temporal dependence in $\{U_{it}\}_{t\in\mathbb Z}$ is controlled by a functional-dependence condition, providing the short-memory
restrictions needed for laws of large numbers, root-$T$ bounds, and
concentration of sample moment matrices. Conditional heteroskedasticity is permitted because \(\sigma_{X,t}^{2}=\E[X_t^2\mid\mathcal F_{t-1}]\) need not be constant. 

Assumption~\ref{ass:C}(2) imposes exogeneity, weak dependence, finite moments,
and long-run variance nondegeneracy for the structural error. The moving-average
representation accommodates serial correlation from overlapping horizons, while
weighted summability, together with Assumption~\ref{ass:C}(1), ensures
short-range dependence of the effective score. 
The stronger score-level dependence conditions required for
uniform Gaussian approximation are imposed separately in
Theorem~\ref{thm:uniform_GA}. The moment and eigenvalue
conditions support the central limit theorem, consistent long-run variance estimation, and a
nondegenerate Gaussian limit for pointwise inference.

Assumption~\ref{ass:C}(3) imposes smoothness of the impulse response
function \(g_h(\cdot)\). For standard spline or wavelet bases, classical
approximation theory
\citep[e.g.,][]{Schumaker1981,DeVoreLorentz1993} gives
$\sup_{z\in\mathcal Z}|R_h(z)|
=
O(J^{-\kappa})$,
uniformly over \(h\leq H\). The sieve approximation induces a coefficient bias of order
\(J^{1/2-\kappa}\). Together with the growth conditions
$
\frac{J^2}{NT}\to0,
\frac JT\to0,
\frac{J\chi_{h,J,N}}{NT}\to0$, $J/N\to0$
in Assumption~\ref{ass:C}(5), these bounds yield
$\|\widehat b_h-b_h\|_2
=
O_p(J^{1/2-\kappa})
+
O_p\!\left(\sqrt{\frac{J\chi_{h,J,N}}{NT}}\right).$
Formally balancing the two components gives
$J^{-\kappa}
\asymp
\sqrt{\frac{\chi_{h,J,N}}{NT}}$,
provided this choice also satisfies the growth conditions in
Assumption~\ref{ass:C}(5). Thus, bounded \(\chi_{h,J,N}\) gives
\(J\asymp(NT)^{1/(2\kappa)}\), whereas
\(\chi_{h,J,N}\asymp N\) gives \(J\asymp T^{1/(2\kappa)}\).

Assumption~\ref{ass:C}(4) imposes regularity conditions on the sieve
basis and the population moment matrices. The envelope and Lipschitz
conditions are standard requirements for sieve approximation and
uniform inference, while the eigenvalue bounds on
$Q_{J,N}=\E[\mathbf S_t]$
ensure that the basis remains uniformly well conditioned as \(J\)
grows. Moreover,
$A_{11,h}
=
\E[X_t^2\mathbf S_t]
=
\E[\sigma_{X,t}^2\mathbf S_t].$
Therefore, the conditional-variance bounds in
Assumption~\ref{ass:C}(1) imply
$\underline{\sigma}_X^2Q_{J,N}
\le
A_{11,h}
\le
\overline{\sigma}_X^2Q_{J,N},$ 
so \(A_{11,h}\) is uniformly nonsingular. The conditional Gram bound
controls fluctuations induced by the aggregate history. Finally, the
nonsingularity of
$A_{22,h}
=
\E\!\left[\frac1N W_{t-1}^{\top}W_{t-1}\right]$
ensures that the partialling-out step is well defined.
This high-level concentration condition can be verified from primitive
functional-dependence conditions together with the local-support,
bounded-overlap, and envelope properties of the spline basis. Because
the verification requires a lengthy concentration argument for the
growing-dimensional transformed Gram process but provides no additional
insight for the main results, we impose the condition directly here.

\begin{theorem}[Rate of convergence and pointwise asymptotic normality]
\label{thm:main_final}
Under Assumption~\ref{ass:C}, the sieve estimator
$\widehat g_h(z)=\phi(z)^\top\widehat b_h$ satisfies the coefficient rate
$\|\widehat b_h-b_h\|_2
=
O_p(J^{1/2-\kappa})
+
O_p\!\left(\sqrt{J\chi_{h,J,N}/(NT)}\right)$.
If, in addition,
$\frac{
\sqrt{N(T-h)}\,J^{1-\kappa}
}{
\sigma_{h}^\star(z)
}
\to0$ and $\bar a_{A,T}
\frac{
\|\phi(z)\|_2\sqrt{J\chi_{h,J,N}}
}{
\sigma_{h}^\star(z)
}
\to0$, then, for each fixed $z\in\mathcal Z$,
\[
\frac{\sqrt{N(T-h)}\{\widehat g_h(z)-g_h(z)\}}
{\sigma_h^\star(z)}
\Rightarrow\mathcal N(0,1),
\]
where
$\sigma_h^{\star 2}(z)
=
\phi(z)^\top A_{11,h}^{-1}\Omega_h
A_{11,h}^{-1}\phi(z)$.
\end{theorem}

\begin{remark}[Multivariate states and the curse of dimensionality]
\label{rem:dimension}
Theorem~\ref{thm:main_final} is stated for a scalar state. With a
$d$-dimensional state $Z_{i,t-1}\in\mathcal Z\subset\mathbb R^d$, consider a
tensor-product sieve with $m$ basis functions per coordinate, so $J=m^d$.
For $g_h\in\mathcal H^\kappa(\mathcal Z)$,
$\sup_{z\in\mathcal Z}|R_h(z)|=O(m^{-\kappa})$ and
$\|\widehat b_h-b_h\|_2
=
O_p(m^{d/2-\kappa})+O_p(\sqrt{m^d\chi_{h,J,N}/(NT)})$,
requiring $\kappa>d/2$ for the bias term to vanish.
\end{remark}
\begin{remark}[Refinement for localized sieve bases]
\label{rem:localized_refinement}
Theorem~\ref{thm:main_final} uses global Euclidean-norm bounds and thus requires
$\frac{\sqrt{NT}\,J^{1-\kappa}}
{\sigma_h^\star(z)}
\to0.$
For an $L^2$-normalized fixed-order B-spline basis with quasi-uniform knots,
$\sup_z\#\{j:\phi_j(z)\neq0\}\lesssim1$,
$\max_{j\leq J}|\phi_j|_\infty\lesssim\sqrt J$, and
$\|\phi(z)\|_2\asymp\sqrt J$; its population Gram matrix is uniformly banded
and well conditioned. Define the approximation-score vector
$\widetilde B_{h,T}
\coloneqq
\frac{1}{N(T-h)}
\sum_{t=1}^{T-h}
\widetilde P_{h,t}X_tR_{h,t}$.
If
$\|\widehat{\widetilde A}_{11,h}^{-1}\|_\infty=O_p(1)$
and
$\|\widetilde B_{h,T}\|_\infty=O_p(J^{-\kappa})$,
where, for matrices, $\|\cdot\|_\infty$ denotes the maximum absolute row-sum norm, localization sharpens the pointwise sieve-approximation
contribution from $O_p(J^{1-\kappa})$ to
$O_p(J^{1/2-\kappa})$.
At states for which
$\sigma_h^\star(z)\asymp\sqrt J$, the standardized bias is
$O_p\!\left(
\frac{\sqrt{NT}\,J^{1/2-\kappa}}
{\sigma_h^\star(z)}
\right)$, so the pointwise undersmoothing condition can be weakened to
$\sqrt{NT}\,J^{-\kappa}\to0$ without changing the coefficient rate.
In the bounded-dependence regime $\chi_{h,J,N}=O(1)$, if
$N\asymp T$ and $J\asymp(NT)^a$, this yields
$1/(2\kappa)<a<1/3$ up to logarithmic factors, which is nonempty for
$\kappa>3/2$.
This refinement applies only to the pointwise result; the uniform result
continues to use the global Euclidean-norm bounds and therefore requires
stronger growth and smoothness conditions.
\end{remark}

For clarity, starred quantities denote asymptotic normalizations. Thus
$\sigma_h^2(z)=\sigma_h^{\star 2}(z)/[N(T-h)]$ and
$\widehat\sigma_h^2(z)=\widehat\sigma_h^{\star 2}(z)/[N(T-h)]$,
while
$V_h=V_h^\star/[N(T-h)]$ and
$\widehat V_h=\widehat V_h^\star/[N(T-h)]$.
Here $V_h^\star$ is the asymptotic covariance of
$\sqrt{N(T-h)}(\widehat b_h-b_h)$.
The studentized uniform-band critical value
$c_{h,1-\alpha}^{\mathrm{stu}}$ in
Corollary~\ref{cor:uniform_band} below is dimensionless and therefore
does not require rescaling.

\begin{theorem}[Consistency of the HAC variance estimator]
\label{lem:HAC_consistency}
Suppose Assumption~\ref{ass:C} holds. Assume $q_0> 4$.
Suppose also that
$\lim_{K\to\infty}
\sup_{J,N}
\frac1{\chi_{h,J,N}}
\sum_{|k|>K}
\|\Gamma_{h,J,N}(k)\|
=0,$
and $\sqrt{\frac{NT}{\chi_{h,J,N}}}\,J^{-\kappa}\to0.$ 
Recall the score process
$\widehat s_{h,t}
=\frac{1}{\sqrt N}\bigl(
X_t\Phi(Z_{t-1})^\top
-\widehat A_{12,h}\widehat A_{22,h}^{-1}W_{t-1}^\top
\bigr)\widehat v_{h,t+h}$,
and the HAC estimator
\[
\widehat \Omega_h
=
\frac{1}{T-h}\sum_{t=1}^{T-h}
\widehat s_{h,t}\widehat s_{h,t}^{\top}
+
\sum_{k=1}^{L} w_k
\left(
\frac{1}{T-h}\sum_{t=k+1}^{T-h}
\bigl(
\widehat s_{h,t}\widehat s_{h,t-k}^{\top}
+
\widehat s_{h,t-k}\widehat s_{h,t}^{\top}
\bigr)
\right),
\]
with Bartlett weights $w_k=1-k/(L+1)$, $k=1,\ldots,L$, where
$d_J\coloneqq1+\log(1+L_J)$ and $L=L_T$ satisfies
$L\to\infty,
J^2d_J^4\frac{L}{T}\to0,
\frac{LJ}{\sqrt T}\to0$.
Then, as $N,T\to\infty$,
$\frac{
\|\widehat\Omega_h-\Omega_h\|
}{
\chi_{h,J,N}
}
\xrightarrow{p}0$ and
$\frac{
\|\widehat V_h^\star-V_h^\star\|
}{
\chi_{h,J,N}
}
\xrightarrow{p}0$, where
$\widehat V_h^\star
=\widehat{\widetilde A}_{11,h}^{-1}\widehat\Omega_h
\widehat{\widetilde A}_{11,h}^{-1}$
and
$V_h^\star=A_{11,h}^{-1}\Omega_h A_{11,h}^{-1}$.
In particular, for any fixed $z\in\mathcal Z$ satisfying
$\phi(z)^\top V_h^\star\phi(z)
\gtrsim
\chi_{h,J,N}\|\phi(z)\|_2^2$,
$\left|
\frac{\phi(z)^\top\widehat V_h^\star\phi(z)}
{\phi(z)^\top V_h^\star\phi(z)}
-1
\right|=o_p(1)$.
\end{theorem}

Recall that
$\widetilde\phi_h(z)=A_{11,h}^{-1}\phi(z)$
and
$
\sigma_h^{\star 2}(z)
=
\widetilde\phi_h(z)^\top
\Omega_h
\widetilde\phi_h(z).
$
Define
$\tau_J\coloneqq1\vee\sup_{z\in\mathcal Z}
\|\widetilde\phi_h(z)\|_1$
and
$
\psi_h(z)
\coloneqq
\frac{\widetilde\phi_h(z)}
{\sigma_h^\star(z)}
$
for use below.
Define
$L_{\psi,J,N}
\coloneqq
\sup_{z\neq z'}
\frac{
\|\psi_h(z)-\psi_h(z')\|_2
}{
|z-z'|
}$.
Assumption~\ref{ass:C}(4), condition~\eqref{eq:U_normalization},
and the eigenvalue bounds on $\Omega_h$ imply
$L_{\psi,J,N}
\lesssim
\frac{
\sqrt{\chi_{h,J,N}}\,L_J
}{
\tau_J
}$.

\begin{theorem}[Uniform convergence and studentized Gaussian approximation]
\label{thm:uniform_GA}
Suppose Assumption~\ref{ass:C} holds and that
\begin{equation}
\inf_{z\in\mathcal Z}
\frac{\|\widetilde\phi_h(z)\|_2}{\tau_J}
\geq c>0.
\tag{U-Norm}\label{eq:U_normalization}
\end{equation}

Let
$
y_{h,r}(z)
\coloneqq
\psi_h(z)^\top s_{h,r-h}
$
denote the studentized score reindexed by outcome date. For $q_s=q_0>4$ and some
$\alpha_s>\frac12-\frac1{q_s}$, let
$y_{h,i,\{0\}}(z)$ denote the coupled version obtained by replacing
the primitive innovations at date zero with independent copies. Suppose
\begin{equation}
\sup_{J,N}
\sup_{m\geq0}
(m+1)^{\alpha_s}
\sum_{i=m}^{\infty}
\left\|
\sup_{z\in\mathcal Z}
\left|
y_{h,i}(z)-y_{h,i,\{0\}}(z)
\right|
\right\|_{L^{q_s}}
<\infty.
\tag{U-FD}\label{eq:U_score_dependence}
\end{equation}

Let $(\varepsilon_T\downarrow0)$, and let
$\mathcal Z_T=\{z_1,\ldots,z_{K_T}\}$
be a deterministic $(\varepsilon_T)$-net of $\mathcal Z$, with
$K_T\lesssim\varepsilon_T^{-1}$ and
$\ell_T\coloneqq1\vee\log(2K_T)$.
Choose the block sequences in Theorem~\ref{strong} so that its
block-size and moment-exponent conditions hold with
$n=T-h$, $p=K_T$, $q=q_s$, and $\alpha=\alpha_s$.

Let $(\delta_T)$ denote the corresponding
Gaussian-approximation rate in~\eqref{eq:delta-rate}.
Recall that $\Omega_h\equiv\Omega_{h,J,N}$ and define
$
G_h(z)
\coloneqq
\psi_h(z)^\top
\Omega_h^{1/2}\mathcal N_J,
$
where $\mathcal N_J\sim\mathcal N(0,I_J)$.

If
\begin{equation}
\sqrt{\ell_T}
\left[
\delta_T
+
\bar a_{A,T}\sqrt{J\chi_{h,J,N}}
+
\sqrt{NT}\,J^{1/2-\kappa}
+
L_{\psi,J,N}\varepsilon_T
\sqrt{J\chi_{h,J,N}}
+
\sqrt{\chi_{h,J,N}}\,
L_{\psi,J,N}\varepsilon_T
\sqrt{\log(1/\varepsilon_T)}
\right]
\to0,
\tag{U-Growth}\label{eq:U_growth}
\end{equation}
where $(\bar a_{A,T})$ is defined in
Lemma~\ref{lem:gram_rates}, then the following conclusions hold.

\smallskip
\noindent\textbf{(a) Sup-norm rate.}
$\sup_{z\in\mathcal Z}
|\widehat g_h(z)-g_h(z)|
=
O_p(J^{1-\kappa})
+
O_p\!\left(
J\sqrt{\frac{\chi_{h,J,N}}{NT}}
\right)
.$

\smallskip
\noindent\textbf{(b) Studentized Gaussian approximation.}
$$
\sup_{x\in\mathbb R}
\left|
\P\!\left(
\sup_{z\in\mathcal Z}
\left|
\frac{
\sqrt{N(T-h)}
\{\widehat g_h(z)-g_h(z)\}
}{
\sigma_h^\star(z)
}
\right|
\leq x
\right)
-
\P\!\left(
\sup_{z\in\mathcal Z}|G_h(z)|
\leq x
\right)
\right|
\to0.
$$
\end{theorem}

\smallskip
\noindent
The following oracle confidence-band result is immediate.
\begin{corollary}[Oracle studentized uniform confidence band]
\label{cor:uniform_band}
Under the hypotheses of Theorem~\ref{thm:uniform_GA}, let
$c_{h,1-\alpha}^{\mathrm{stu}}$
be defined by
$$
c_{h,1-\alpha}^{\mathrm{stu}}
\coloneqq
\inf\left\{
c>0:
\P\!\left(
\sup_{z\in\mathcal Z}|G_h(z)|
\leq c
\right)
\geq1-\alpha
\right\}.
$$
Then
$$
\P\!\left(
\sup_{z\in\mathcal Z}
\left|
\frac{
\sqrt{N(T-h)}
\{\widehat g_h(z)-g_h(z)\}
}{
\sigma_h^{\star}(z)
}
\right|
\leq
c_{h,1-\alpha}^{\mathrm{stu}}
\right)
\to1-\alpha.
$$
Equivalently,
$$
\widehat g_h(z)
\pm
\frac{
c_{h,1-\alpha}^{\mathrm{stu}}
\sigma_h^\star(z)
}{
\sqrt{N(T-h)}
},
\qquad
z\in\mathcal Z,
$$
has asymptotic simultaneous coverage $1-\alpha$.
This band is oracle because $A_{11,h}$ and $\Omega_h$ are unknown.
\end{corollary}
\begin{remark}[Feasible implementation]
\label{rem:feasible_band}
The oracle critical value in
Corollary~\ref{cor:uniform_band} may be replaced by the conditional
Gaussian critical value computed from $\widehat V_h^\star$, provided
\begin{equation}
\ell_T^2
\sup_{z,z'\in\mathcal Z}
\frac{
\left|
\phi(z)^\top
\bigl(\widehat V_h^\star-V_h^\star\bigr)
\phi(z')
\right|
}{
\sigma_h^\star(z)\sigma_h^\star(z')
}
\to_p0.
\tag{U-Feasible}\label{eq:U_feasible}
\end{equation}
Under condition~\eqref{eq:U_feasible}, the feasible studentized
confidence band has asymptotic simultaneous coverage $1-\alpha$.
Condition~\eqref{eq:U_feasible} is a direction-uniform strengthening
of the operator-norm HAC consistency result in
Theorem~\ref{lem:HAC_consistency}.
\end{remark}

\begin{remark}[Admissible oracle rates]
\label{rem:uniform_GA_rates}
For an $L^2$-normalized fixed-order B-spline basis with quasi-uniform knots,
$\sup_{z\in\mathcal Z}\|\phi(z)\|_2\asymp\sqrt J$,
$L_J\lesssim J^{3/2}$, and $\tau_J\asymp\sqrt J$.
These orders are consistent with the uniform eigenvalue bounds on
$Q_{J,N}$ under the maintained $L^2$ normalization.
In the bounded-dependence regime $\chi_{h,J,N}=O(1)$,
a sufficient corresponding order for the standardized direction is
$L_{\psi,J,N}\lesssim J$.
\end{remark}

The Gaussian-approximation rate $\delta_T$ in Equation
\eqref{eq:delta-rate} is the maximum of all five terms stated in
Theorem~\ref{strong}. If $L\asymp T^{2/3}$, reducing this rate to
the single leading term $T^{-1/9}\ell_T^{7/6}$ requires the
additional compatibility conditions discussed in
Remark~\ref{rem:rates}. Without those conditions, the full rate
$\delta_T$ must be retained.

For a concrete sufficient regime, suppose $N\asymp T$,
$\chi_{h,J,N}=O(1)$, and use an
$L^2$-normalized B-spline basis with
$\tau_J\asymp\sqrt J$, $L_J\lesssim J^{3/2}$,
$L_{\psi,J,N}\lesssim J$, and
$\varepsilon_T=J^{-3}$. Choose the block sequences in
Theorem~\ref{strong} as
$L\asymp T^{2/3}$, $M\asymp T^{1/3}$, and
$m\asymp T^{1/(3\alpha_s)}$.
Under \eqref{eq:U_normalization} and
\eqref{eq:U_score_dependence}, and provided the remaining
block-size and covariance-approximation conditions of
Theorem~\ref{strong} hold, if
\[
J\asymp(NT)^a,\qquad
\frac1{2\kappa-1}<a<\frac13,\qquad
\alpha_s>1,\qquad
q_s>18a+2,
\]
then \eqref{eq:U_growth} holds, with logarithmic factors absorbed by the
strict inequalities. The upper bound $a<1/3$ comes from
$J^{3/2}/T\to0$, while the lower bound
$a>1/(2\kappa-1)$ comes from
$\sqrt{NT}\,J^{1/2-\kappa}\sqrt{\ell_T}\to0$.
Hence the oracle range is nonempty if and only if $\kappa>2$.

\section{Proofs}
\subsection{Proof of Theorem \ref{thm:main_final}}
\begin{lemma}[Rates for the Gram matrix and Schur complement]
\label{lem:gram_rates}

Suppose Assumption~\ref{ass:C} holds.
Recall that
$\ell_J\coloneqq 1\vee\log J$, $\bar r_{A,T}
\coloneqq
\sqrt{\frac{\ell_J}{T}}
+
\sqrt{\frac{J\ell_J}{NT}}
+
\frac{J\ell_J}{NT}$, and
$\bar a_{A,T}
\coloneqq
\bar r_{A,T}
+
\frac{J}{T}$. Under Assumption~\ref{ass:C}(5),
$\bar r_{A,T}=o(1)$ and $\bar a_{A,T}=o(1)$. Then:

\smallskip
\noindent\textbf{(a) Gram matrix.}
$\left\|
\widehat A_{11,h}-A_{11,h}
\right\|
=
O_p(\bar r_{A,T}).$

\smallskip
\noindent\textbf{(b) Schur complement.}
Recall $\widetilde A_{11,h}
=
A_{11,h}
-
A_{12,h}A_{22,h}^{-1}A_{21,h}$. By Assumption~1(ii) and the predetermination of the states and controls,
$A_{12,h}=A_{21,h}^\top=0$, and hence
$\widetilde A_{11,h}=A_{11,h}$. Moreover,
\[
\left\|
\widehat{\widetilde A}_{11,h}
-
\widetilde A_{11,h}
\right\|
=
O_p(\bar a_{A,T}).
\]

\smallskip
\noindent\textbf{(c) Inverse bounds.}
$
\|\widehat A_{11,h}^{-1}\|
+
\|\widehat{\widetilde A}_{11,h}^{-1}\|
+
\|\widehat A_{22,h}^{-1}\|
=
O_p(1).
$

\smallskip
\noindent\textbf{(d) Inverse expansions.}
$\left\|
\widehat A_{11,h}^{-1}
-
A_{11,h}^{-1}
\right\|
=
O_p(\bar r_{A,T})$
and
$\left\|
\widehat{\widetilde A}_{11,h}^{-1}
-
\widetilde A_{11,h}^{-1}
\right\|
=
O_p(\bar a_{A,T}).$
\end{lemma}

\begin{proof}[Proof of Lemma~\ref{lem:gram_rates}]
Recall $\mathbf S_t
=
\frac1N\sum_{i=1}^N
\phi(Z_{i,t-1})\phi(Z_{i,t-1})^\top$.
Assumption~\ref{ass:C}(1) shows that $0<\underline\sigma_X^2
\leq
\sigma_{X,t}^2
\leq
\overline\sigma_X^2
<\infty$ almost surely, where recall $\sigma_{X,t}^2
=
\E[X_t^2\mid\mathcal F_{t-1}]$. Since $h$ is fixed, $T-h\asymp T$ throughout.

\smallskip
\noindent\textbf{Step 0: Preliminary bounds.}
By the shock exogeneity in Assumption~1,
$\E[X_t\mid\mathcal F_{t-1}]=0$.
Since
$\phi(Z_{i,t-1})W_{i,t-1}^\top$ is
$\mathcal F_{t-1}$-measurable,
$A_{12,h}
=
\E\!\left[
X_t\phi(Z_{i,t-1})W_{i,t-1}^\top
\right]
=
0$. Consequently,
$A_{21,h}=0,
\widetilde A_{11,h}=A_{11,h}$. In particular,
$\lambda_{\min}(\widetilde A_{11,h})
=
\lambda_{\min}(A_{11,h})
\geq c_A$.

\emph{(i) Rate for $\widehat A_{12,h}$.}
Write
$\widehat A_{12,h}
=(T-h)^{-1}\sum_{t=1}^{T-h}X_tM_t$,
where
$M_t\coloneqq
N^{-1}\sum_{i=1}^N\phi(Z_{i,t-1})W_{i,t-1}^\top$.
Because $M_t$ is $\mathcal F_{t-1}$-measurable,
$\{X_tM_t\}_t$ is a matrix martingale-difference sequence. By
orthogonality of martingale differences under the Frobenius inner
product and
$\sigma_{X,t}^2\leq\overline\sigma_X^2$,
\[
\E\|\widehat A_{12,h}\|_F^2
=
\frac1{(T-h)^2}
\sum_{t=1}^{T-h}
\E\!\left[
\sigma_{X,t}^2\|M_t\|_F^2
\right]
\leq
\frac{\overline\sigma_X^2}{T-h}
\max_t\E\|M_t\|_F^2.
\]
By Jensen's inequality and the sieve-envelope condition,
\[
\begin{aligned}
\E\|M_t\|_F^2
&\leq
\frac1N\sum_{i=1}^N
\E\left\|
\phi(Z_{i,t-1})W_{i,t-1}^\top
\right\|_F^2    =
\frac1N\sum_{i=1}^N
\E\!\left[
\|\phi(Z_{i,t-1})\|_2^2
\|W_{i,t-1}\|_2^2
\right]                                           \\
&\leq
C_\phi^2J
\frac1N\sum_{i=1}^N
\E\|W_{i,t-1}\|_2^2
\lesssim J.
\end{aligned}
\]
Here
$\frac1N\sum_{i=1}^N
\E\|W_{i,t-1}\|_2^2
=
\operatorname{tr}(A_{22,h})<\infty$
because the dimension of $W_{i,t-1}$ is fixed and the eigenvalues of
$A_{22,h}$ are uniformly bounded. Hence, by Markov's inequality,
\begin{equation} \label{eq:A12_rate}
\|\widehat A_{12,h}\|
\leq
\|\widehat A_{12,h}\|_F
=
O_p\!\left(\sqrt{\frac JT}\right).
\end{equation}

\emph{(ii) Rates for $\widehat A_{22,h}$.}
The block $\widehat A_{22,h}$ is a $Q\times Q$ sample average with
$Q$ fixed. Each entry is a stationary average with finite
$q/2>2$ moments. The functional-dependence condition in
Assumption~\ref{ass:C}(1), together with conditional
cross-sectional independence, therefore gives
\begin{equation} \label{eq:A22_rate}
\|\widehat A_{22,h}-A_{22,h}\|
=
O_p(T^{-1/2}).
\end{equation}
Because $\lambda_{\min}(A_{22,h})$ is bounded away from zero,
Weyl's inequality gives
$\lambda_{\min}(\widehat A_{22,h})
\geq \frac12\lambda_{\min}(A_{22,h})$
with probability approaching one. Therefore,
\begin{equation} \label{eq:A22_inverse}
\|\widehat A_{22,h}^{-1}\|
=
O_p(1).
\end{equation}
\smallskip
\noindent\textbf{Part \textbf{(a)}.}
By definition,
$
\widehat A_{11,h}
=
\frac{1}{T-h}
\sum_{t=1}^{T-h}
X_t^2\mathbf S_t
$
and
$
A_{11,h}
=
\E[X_t^2\mathbf S_t].
$
Because $\mathbf S_t$ is $\mathcal F_{t-1}$-measurable,
$
A_{11,h}
=
\E[\sigma_{X,t}^2\mathbf S_t].
$
It follows that
$
\widehat A_{11,h}-A_{11,h}
=
\frac{1}{T-h}
\sum_{t=1}^{T-h}
\left\{
X_t^2\mathbf S_t
-
\E[X_t^2\mathbf S_t]
\right\}.
$
Therefore, the high-level Gram-concentration condition in
Assumption~\ref{ass:C}(4) gives
$
\|
\widehat A_{11,h}-A_{11,h}
\|
=
O_p\!\left(
\sqrt{\frac{\ell_J}{T}}
+
\sqrt{\frac{J\ell_J}{NT}}
+
\frac{J\ell_J}{NT}
\right)
=
O_p(\bar r_{A,T}),
$
which proves part~\textbf{(a)}.

\smallskip
\noindent\textbf{Part \textbf{(b)}.}
By Step~0,
$A_{12,h}=0$ and $\widetilde A_{11,h}=A_{11,h}$. Hence
$\widehat{\widetilde A}_{11,h}-\widetilde A_{11,h}
=
(\widehat A_{11,h}-A_{11,h})
-
\widehat A_{12,h}
\widehat A_{22,h}^{-1}
\widehat A_{21,h}$.
By submultiplicativity of the operator norm and
\eqref{eq:A12_rate}--\eqref{eq:A22_inverse},
\begin{equation*}
\left\|
\widehat A_{12,h}
\widehat A_{22,h}^{-1}
\widehat A_{21,h}
\right\|
\leq
\|\widehat A_{12,h}\|^2
\|\widehat A_{22,h}^{-1}\|
=
O_p\!\left(\frac JT\right).
\end{equation*}
Together with part~\textbf{(a)}, this gives
$\|
\widehat{\widetilde A}_{11,h}
-
\widetilde A_{11,h}
\|
=
O_p(\bar r_{A,T}+J/T)
=
O_p(\bar a_{A,T})$,
which proves part~\textbf{(b)}.

\smallskip
\noindent\textbf{Part \textbf{(c)}.}
Because $\bar a_{A,T}\to0$, parts~\textbf{(a)} and~\textbf{(b)}
give
$\|\widehat A_{11,h}-A_{11,h}\|=o_p(1)$
and
$\|
\widehat{\widetilde A}_{11,h}
-
\widetilde A_{11,h}
\|
=o_p(1)$.
By Weyl's inequality and Step~0,
$\lambda_{\min}(\widehat A_{11,h})
\geq
c_A-o_p(1)
\geq
c_A/2$, and
$\lambda_{\min}(\widehat{\widetilde A}_{11,h})
\geq
c_A-o_p(1)
\geq
c_A/2$,
each with probability approaching one.

Together with \eqref{eq:A22_inverse}, this yields
$\|\widehat A_{11,h}^{-1}\|
+
\|\widehat{\widetilde A}_{11,h}^{-1}\|
+
\|\widehat A_{22,h}^{-1}\|
=
O_p(1)$,
which proves part~\textbf{(c)}.

\smallskip
\noindent\textbf{Part \textbf{(d)}.}
The resolvent identity gives
$\widehat A_{11,h}^{-1}-A_{11,h}^{-1}
=
-\widehat A_{11,h}^{-1}
(\widehat A_{11,h}-A_{11,h})
A_{11,h}^{-1}$.
Parts~\textbf{(a)} and~\textbf{(c)}, together with
$\|A_{11,h}^{-1}\|\leq c_A^{-1}$, imply
$\|\widehat A_{11,h}^{-1}-A_{11,h}^{-1}\|
=O_p(\bar r_{A,T})$.
Similarly,
$\widehat{\widetilde A}_{11,h}^{-1}
-\widetilde A_{11,h}^{-1}
=
-\widehat{\widetilde A}_{11,h}^{-1}
(\widehat{\widetilde A}_{11,h}-\widetilde A_{11,h})
\widetilde A_{11,h}^{-1}$.
Since
$\|\widetilde A_{11,h}^{-1}\|
=\|A_{11,h}^{-1}\|\leq c_A^{-1}$,
parts~\textbf{(b)} and~\textbf{(c)} imply
$\|
\widehat{\widetilde A}_{11,h}^{-1}
-\widetilde A_{11,h}^{-1}
\|
=O_p(\bar a_{A,T})$.
This proves part~\textbf{(d)} and completes the proof.
\end{proof}


\begin{proof}[Proof of Theorem~\ref{thm:main_final}]
Using \eqref{eq:estimation_error_for_sieve}, we obtain the
linearization
\[
\widehat{b}_h - b_h
=
\widehat{\widetilde A}_{11,h}^{-1}
\left[
\frac{1}{N(T-h)}
\sum_{t=1}^{T-h}
\left(
X_t \Phi(Z_{t-1})^\top
-
\widehat A_{12,h} \widehat A_{22,h}^{-1}W_{t-1}^\top
\right)
v_{h,t+h}
\right].
\]
Here
$v_{h,t+h}=R_h(Z_{t-1})X_t+u_{h,t+h}$ and
$R_h(Z_{t-1})\coloneqq(R_h(Z_{i,t-1}))_{i=1}^{N}$,
with $R_h(z)=g_h(z)-\phi(z)^\top b_h$ the sieve approximation
residual. By the smoothness condition in Assumption~\ref{ass:C}(3) and the
corresponding sieve approximation property,
$\sup_{z\in\mathcal Z}|R_h(z)|=O(J^{-\kappa})$.

Define the partialled-out sieve-bias term
\[
\widetilde B_{h,T}
\coloneqq
\frac1{N(T-h)}
\sum_{t=1}^{T-h}
\left\{
X_t\Phi(Z_{t-1})^\top
-
\widehat A_{12,h}
\widehat A_{22,h}^{-1}
W_{t-1}^\top
\right\}
X_tR_h(Z_{t-1}).
\]
For its first component, Assumption~\ref{ass:C}(4) and
$\sup_z|R_h(z)|=O(J^{-\kappa})$ give
\[
\begin{aligned}
\left\|
\frac1{N(T-h)}
\sum_{t=1}^{T-h}
X_t^2\Phi(Z_{t-1})^\top R_h(Z_{t-1})
\right\|_2
&\leq
\frac{1}{T-h}\sum_{t=1}^{T-h}
X_t^2
\frac1N\sum_{i=1}^N
\|\phi(Z_{i,t-1})\|_2
|R_h(Z_{i,t-1})|\\
&\lesssim
J^{1/2-\kappa}
\frac{1}{T-h}\sum_{t=1}^{T-h}X_t^2=
O_p(J^{1/2-\kappa}).
\end{aligned}
\]
For the control component, the maintained moment and law-of-large-numbers
conditions imply
\[
\begin{aligned}
\left\|
\frac{1}{N(T-h)}
\sum_{t=1}^{T-h}
W_{t-1}^\top X_tR_h(Z_{t-1})
\right\|_2
&\leq
J^{-\kappa}
\frac{1}{T-h}\sum_{t=1}^{T-h}
|X_t|
\frac1N\sum_{i=1}^N
\|W_{i,t-1}\|_2\\
&=
O_p(J^{-\kappa}).
\end{aligned}
\]
Since
$\|\widehat A_{12,h}\|
=O_p(\sqrt{J/T})$ and
$\|\widehat A_{22,h}^{-1}\|=O_p(1)$,
the partialling contribution is
$O_p(J^{1/2-\kappa}T^{-1/2})
=o_p(J^{1/2-\kappa})$.
Consequently,
$\|\widetilde B_{h,T}\|_2=O_p(J^{1/2-\kappa})$.
Because
$\|\widehat{\widetilde A}_{11,h}^{-1}\|=O_p(1)$,
it follows that
$\|
\widehat{\widetilde A}_{11,h}^{-1}
\widetilde B_{h,T}
\|_2
=O_p(J^{1/2-\kappa})$.
Therefore,
\begin{equation}\label{eq:bh_linear_main}
\widehat{b}_h - b_h
=
\widehat{\widetilde A}_{11,h}^{-1}
\left[
\frac{1}{N(T-h)}
\sum_{t=1}^{T-h}
\left(
X_t \Phi(Z_{t-1})^\top
-
\widehat A_{12,h} \widehat A_{22,h}^{-1} W_{t-1}^\top
\right)
u_{h,t+h}
\right]
+ O_p\bigl(J^{1/2-\kappa}\bigr),
\end{equation}
where the $O_p(J^{1/2-\kappa})$ is in $\ell^{2}$ norm.

Decompose the leading term into
\begin{eqnarray*}
\widehat{b}_h - b_h
&=&
\underbrace{
\widehat{\widetilde A}_{11,h}^{-1}
\left[
\frac{1}{N(T-h)}
\sum_{t=1}^{T-h}
X_t \Phi(Z_{t-1})^\top
u_{h,t+h}
\right]}_{\mathbf I_n}
\;-\;
\underbrace{
\widehat{\widetilde A}_{11,h}^{-1}
\widehat A_{12,h} \widehat A_{22,h}^{-1}
\left[
\frac{1}{N(T-h)}
\sum_{t=1}^{T-h}
W_{t-1}^\top
u_{h,t+h}
\right]}_{\mathbf S_n}
\\[6pt]
&&\;+\;O_p\bigl(J^{1/2-\kappa}\bigr).
\end{eqnarray*}

\medskip
\noindent\textbf{Step 1: Rate for $\widehat b_h-b_h$.}
We show that $\mathbf I_n=O_p\!\left(
\sqrt{\frac{J\chi_{h,J,N}}{NT}}
\right)$ and that $\mathbf S_n$ is
asymptotically dominated by $\mathbf I_n$ under the population
projection definition of $\gamma_h$.

\smallskip
\emph{Rate of $\mathbf I_n$.}
We have
\[
\frac{1}{N(T-h)}
\sum_{t=1}^{T-h}X_t\Phi(Z_{t-1})^\top u_{h,t+h}
=
\frac{1}{N(T-h)}
\sum_{i=1}^{N}\sum_{t=1}^{T-h}
X_t\phi(Z_{i,t-1})u_{i,t+h}
\in\R^{J}.
\]
Recall
$s_{h,t}
=
N^{-1/2}X_t\Phi(Z_{t-1})^\top u_{h,t+h}$.
The orthogonality condition implies $\E[s_{h,t}]=0$. 
Writing $n=T-h$, stationarity and Assumption~\ref{ass:C}(2) give
$$
\begin{aligned}
\E\left\|
\frac1{\sqrt n}\sum_{t=1}^n s_{h,t}
\right\|_2^2
&=
\sum_{|k|<n}
\left(1-\frac{|k|}{n}\right)
\operatorname{tr}\!\left(\Gamma_{h,J,N}(k)\right)
\leq
J\sum_{k\in\mathbb Z}
\left\|\Gamma_{h,J,N}(k)\right\|
\lesssim
J\chi_{h,J,N}.
\end{aligned}
$$
Therefore, by Markov's inequality,
$$
\left\|
\frac1{\sqrt{T-h}}
\sum_{t=1}^{T-h}s_{h,t}
\right\|_2
=
O_p\!\left(\sqrt{J\chi_{h,J,N}}\right).
$$

Since
$\frac{1}{N(T-h)}
\sum_{t=1}^{T-h}X_t\Phi(Z_{t-1})^\top u_{h,t+h}
=
\frac1{\sqrt{N(T-h)}}
\left(
\frac1{\sqrt{T-h}}
\sum_{t=1}^{T-h}s_{h,t}
\right)$,
it follows that
\[
\left\|
\frac{1}{N(T-h)}
\sum_{t=1}^{T-h}
X_t\Phi(Z_{t-1})^\top u_{h,t+h}
\right\|_2
=
O_p\!\left(
\sqrt{\frac{J\chi_{h,J,N}}{NT}}
\right).
\]
By Lemma~\ref{lem:gram_rates},
$\|\widehat A_{11,h}^{-1}\|=O_p(1)$.
Furthermore,
\[
\|
\widehat{\widetilde A}_{11,h}
-
\widehat A_{11,h}
\|
\leq
\|\widehat A_{12,h}\|^2
\|\widehat A_{22,h}^{-1}\|
=
O_p\!\left(\frac JT\right)
=
o_p(1).
\]
Hence, by the inverse perturbation identity,
$\|\widehat{\widetilde A}_{11,h}^{-1}\|=O_p(1)$.
Therefore,
\[
\|\mathbf I_n\|_2
=
O_p\!\left(
\sqrt{\frac{J\chi_{h,J,N}}{NT}}
\right).
\]

By Assumption~\ref{ass:C} and matrix regularity
(Lemma~\ref{lem:gram_rates} together with
$\lambda_{\min}(A_{22,h})>0$), we have
$\|\widehat A_{22,h}^{-1}\|=O_p(1)$ and
$\|\widehat{\widetilde A}_{11,h}^{-1}\|=O_p(1)$.
Under the m.d.s.\ exogeneity condition in Assumption~1,
$A_{12,h}
=
\mathbb E[X_t\phi(Z_{i,t-1})W_{i,t-1}^\top]
=0$.
Under the stated moment and time-series dependence conditions,
$\|\widehat A_{12,h}\|
=
O_p(\sqrt{J/T})$.
By Assumption~\ref{ass:C}(2),
$\mu_{Wu}\coloneqq
\E[W_{t-1}^\top u_{h,t+h}]=0$,
so
$\|
[N(T-h)]^{-1}\sum_{t=1}^{T-h}W_{t-1}^\top u_{h,t+h}
\|
=
O_p\!\left(
\sqrt{\frac{\chi_{h,J,N}}{NT}}
\right)$.
The preceding bounds give
$\|\mathbf S_n\|_2
=
T^{-1/2}
O_p\!\left(
\sqrt{\frac{J\chi_{h,J,N}}{NT}}
\right)$.
Thus, $\mathbf S_n$ is negligible relative to the stochastic rate
$\sqrt{J\chi_{h,J,N}/(NT)}$.

Adding the bias contribution $O_p(J^{1/2-\kappa})$ in $\ell^2$ from
\eqref{eq:bh_linear_main},
$\|\widehat b_h-b_h\|_2
\leq
\|\mathbf I_n\|_2+\|\mathbf S_n\|_2+O_p(J^{1/2-\kappa})
=
O_p\!\left(
\sqrt{\frac{J\chi_{h,J,N}}{NT}}
\right)+O_p(J^{1/2-\kappa})$,
where we used that $\|\mathbf S_n\|_2$ is dominated. This establishes
the stated coefficient rate, in $\ell^2$ norm.

\medskip
\noindent\textbf{Step 2: Pointwise CLT for $\widehat g_h(z)$.}
Fix $z\in\mathcal Z$. Multiply \eqref{eq:bh_linear_main} by
$\sqrt{N(T-h)}\,\phi(z)^\top$:
\begin{align*}
\sqrt{N(T-h)}\,\phi(z)^\top(\widehat b_h-b_h)
\; & =\;
\phi(z)^\top \widehat{\widetilde A}_{11,h}^{-1}
\!\left[
\frac{1}{\sqrt{N(T-h)}}\sum_{t=1}^{T-h}
\Bigl(X_t\Phi(Z_{t-1})^\top
-\widehat A_{12,h} \widehat A_{22,h}^{-1}W_{t-1}^\top\Bigr)
u_{h,t+h}\right] \\
& +\sqrt{N(T-h)}\,O_p\bigl(J^{1-\kappa}\bigr).
\end{align*}

By Step~1, $\|\widehat A_{12,h}\|=O_p(\sqrt{J/T})$ and
$\mu_{Wu}=\E[W_{t-1}^\top u_{h,t+h}]=0$ by Assumption~\ref{ass:C}(2). Hence
\[
\begin{aligned}
\Bigl\|
\widehat A_{12,h} \widehat A_{22,h}^{-1}\cdot
\frac{1}{\sqrt{N(T-h)}}\sum_t W_{t-1}^\top u_{h,t+h}
\Bigr\|
&\leq
\|\widehat A_{12,h}\|\,
\|\widehat A_{22,h}^{-1}\|\,
\Bigl\|\frac{1}{\sqrt{N(T-h)}}\sum_t W_{t-1}^\top u_{h,t+h}\Bigr\| \\
&=
O_p(\sqrt{J/T})\,O_p(1)\,O_p\!\left(\sqrt{\chi_{h,J,N}}\right).
\end{aligned}
\]

After premultiplication by
$\phi(z)^\top\widehat{\widetilde A}_{11,h}^{-1}$
and division by $\sigma_h^\star(z)$, the Frisch--Waugh correction is
$
O_p\!\left(
\frac{\|\phi(z)\|_2}{\sigma_h^\star(z)}
\sqrt{\frac{J\chi_{h,J,N}}{T}}
\right).
$
This term is controlled below by the pointwise matrix-rate condition.
Define the score process
$\widetilde S_{h,t}(z)
\coloneqq
\widetilde\phi_h(z)^\top s_{h,t}
=
\widetilde\phi_h(z)^\top
N^{-1/2}\sum_{i=1}^{N}
X_t\phi(Z_{i,t-1})u_{i,t+h}$,
where
$\widetilde\phi_h(z)\coloneqq A_{11,h}^{-1}\phi(z)$.
By Assumption~\ref{ass:C}(1)--(2),
$\{\widetilde S_{h,t}(z)\}_t$ is strictly stationary, has finite
second moments, and is mean zero because
$\E[X_t\phi(Z_{i,t-1})u_{i,t+h}]=0$.
Assumption~\ref{ass:C}(2) gives
$
\begin{aligned}
\sum_{k\in\mathbb Z}
\left|
\E\!\left[
\widetilde S_{h,t}(z)
\widetilde S_{h,t-k}(z)
\right]
\right|
&\leq
\|\widetilde\phi_h(z)\|_2^2
\sum_{k\in\mathbb Z}
\left\|\Gamma_{h,J,N}(k)\right\|
<\infty.
\end{aligned}
$
Consequently,
$
\sum_{k\in\mathbb Z}
\E\!\left[
\widetilde S_{h,t}(z)
\widetilde S_{h,t-k}(z)
\right]
=
\widetilde\phi_h(z)^\top
\Omega_{h,J,N}
\widetilde\phi_h(z).
$ 

The eigenvalue condition on $\Omega_h$ ensures that
$\sigma_h^{\star2}(z)>0$. Define
$
a_h(z)
\coloneqq
\frac{
\Omega_h^{1/2}\widetilde\phi_h(z)
}{
\sigma_h^\star(z)
}.
$
Then
$
\|a_h(z)\|_2^2
=
\frac{
\widetilde\phi_h(z)^\top
\Omega_h
\widetilde\phi_h(z)
}{
\sigma_h^{\star2}(z)
}
=1,
$
and
$
Y_{h,t,T}(z)
=
a_h(z)^\top\Omega_h^{-1/2}s_{h,t}.
$
Consequently, the standardized score-moment condition in
Assumption~\ref{ass:C}(2) gives, for some $q_0>2$,
$
\sup_{J,N}
\|Y_{h,0,T}(z)\|_{L^{q_0}}
<\infty.
$

Let $\bar\rho\coloneqq\max\{\rho,\rho_u\}<1$. A product-difference
decomposition, Hölder's inequality, the geometric functional-dependence
conditions in Assumption~\ref{ass:C}(1)--(2), and the Lipschitz
property of the spline basis give
$$
\left|
Y_{h,k,T}(z)-Y_{h,k,T,\{0\}}(z)
\right|_{L^{q_0}}
\lesssim
\min\left\{
1,L_J\bar\rho^{(k-h)_+}
\right\}.
$$
Choose
$
m_T
\coloneqq
h+
\left\lceil
\frac{(3/2+\epsilon)\log J}
{|\log\bar\rho|}
\right\rceil
$
for some $\epsilon>0$. Since $L_J\lesssim J^{3/2}$,
$
\sum_{k=m_T}^{\infty}
\left|
Y_{h,k,T}(z)-Y_{h,k,T,\{0\}}(z)
\right|_{L^{q_0}}
\lesssim
L_J\bar\rho^{m_T-h}
\lesssim
J^{-\epsilon}
\to0.
$
Moreover, $m_T=O(\log J)=O(\log T)$ because $J/T\to0$.

The long-run variance of the standardized score equals one:
$
\sum_{k\in\mathbb Z}
\E\!\left[
Y_{h,t,T}(z)Y_{h,t-k,T}(z)
\right]
=
\frac{
\widetilde\phi_h(z)^\top
\Omega_{h,J,N}
\widetilde\phi_h(z)
}{
\sigma_h^{\star2}(z)
}
=1.
$
The uniform $L^{q_0}$ bound also implies the Lindeberg condition,
since, for every $\varepsilon>0$,
$
\E\!\left[
Y_{h,0,T}(z)^2
\mathbf 1\{
|Y_{h,0,T}(z)|>\varepsilon\sqrt T
\}
\right]
\lesssim
T^{-(q_0-2)/2}
\to0.
$
The $m_T$-dependent approximation and the standard blocking argument
for triangular arrays therefore give
$
\frac{1}{\sigma_h^\star(z)}
\frac{1}{\sqrt{T-h}}
\sum_{t=1}^{T-h}
\widetilde S_{h,t}(z)
\Rightarrow
\mathcal N(0,1).
$

By Lemma~\ref{lem:gram_rates},
$\|\widehat A_{11,h}^{-1}\|=O_p(1)$ and
$\|
\widehat A_{11,h}^{-1}-A_{11,h}^{-1}
\|
=
O_p(\bar r_{A,T})$.
Moreover,
$\|
\widehat{\widetilde A}_{11,h}-\widehat A_{11,h}
\|
=
O_p(J/T)$.
Under the growth condition below, $J/T\to0$. Hence
$\widehat{\widetilde A}_{11,h}$ is nonsingular with probability
approaching one and
$\|\widehat{\widetilde A}_{11,h}^{-1}\|=O_p(1)$.
The inverse identity gives
$\widehat{\widetilde A}_{11,h}^{-1}-\widehat A_{11,h}^{-1}
=
\widehat{\widetilde A}_{11,h}^{-1}
(\widehat A_{11,h}-\widehat{\widetilde A}_{11,h})
\widehat A_{11,h}^{-1}$,
and consequently
$\|
\widehat{\widetilde A}_{11,h}^{-1}-\widehat A_{11,h}^{-1}
\|
=
O_p(J/T)$.
Combining the preceding bounds,
$\|
\widehat{\widetilde A}_{11,h}^{-1}-A_{11,h}^{-1}
\|
=
O_p(\bar a_{A,T})$.

Let
$U_{h,T}\coloneqq
(T-h)^{-1/2}\sum_{t=1}^{T-h}s_{h,t}$.
By the score bound established above,
$\|U_{h,T}\|_2
=
O_p\!\left(\sqrt{J\chi_{h,J,N}}\right)$.
The eigenvalue bounds on $A_{11,h}$ and $\Omega_h$ imply
$\|\phi(z)\|_2/\sigma_h^\star(z)=O(1)$.
It follows that
\[
\begin{aligned}
\left|
\frac{
\phi(z)^\top
\left(
\widehat{\widetilde A}_{11,h}^{-1}
-A_{11,h}^{-1}
\right)
U_{h,T}
}{
\sigma_h^\star(z)
}
\right|
&\le
\frac{\|\phi(z)\|_2}{\sigma_h^\star(z)}
\left\|
\widehat{\widetilde A}_{11,h}^{-1}
-A_{11,h}^{-1}
\right\|
\|U_{h,T}\|_2 \\
&=
O_p\!\left(
\bar a_{A,T}
\frac{
\|\phi(z)\|_2\sqrt{J\chi_{h,J,N}}
}{
\sigma_{h}^\star(z)
}
\right).
\end{aligned}
\]
Therefore, suppose that
\begin{equation}
\label{eq:pointwise_matrix_rate}
\bar a_{A,T}
\frac{
\|\phi(z)\|_2\sqrt{J\chi_{h,J,N}}
}{
\sigma_h^\star(z)
}
\to0.
\end{equation}Because
$\bar a_{A,T}\geq\sqrt{\ell_J/T}\geq T^{-1/2}$,
condition~\eqref{eq:pointwise_matrix_rate} also implies
$
\frac{\|\phi(z)\|_2}{\sigma_h^\star(z)}
\sqrt{\frac{J\chi_{h,J,N}}{T}}
\to0.
$
Hence the inverse-matrix replacement and the Frisch--Waugh 
correction are both $o_p(1)$.
 Combining these bounds with the sieve
remainder established above gives

\[
\begin{aligned}
&\frac{
\sqrt{N(T-h)}\,\phi(z)^\top(\widehat b_h-b_h)
}{
\sigma_h^\star(z)
}\\
={}&
\frac1{\sigma_h^\star(z)}
\frac1{\sqrt{T-h}}
\sum_{t=1}^{T-h}\widetilde S_{h,t}(z)+
O_p\!\left(
\bar a_{A,T}
\frac{
\|\phi(z)\|_2\sqrt{J\chi_{h,J,N}}
}{
\sigma_h^\star(z)
}
\right)+
O_p\!\left(
\frac{\|\phi(z)\|_2}{\sigma_h^\star(z)}
\sqrt{\frac{J\chi_{h,J,N}}{T}}
\right)+
O_p\!\left(
\frac{
\sqrt{NT}\,J^{1-\kappa}
}{
\sigma_h^\star(z)
}
\right).
\end{aligned}
\]
Thus, under~\eqref{eq:pointwise_matrix_rate} and the undersmoothing
condition
$
\frac{\sqrt{NT}\,J^{1-\kappa}}
{\sigma_h^\star(z)}
\to0,
$
the preceding remainder is $o_p(1)$.

Finally,
$\widehat g_h(z)-g_h(z)
=
\phi(z)^\top(\widehat b_h-b_h)-R_h(z)$, with
$\sup_{z\in\mathcal Z}|R_h(z)|=O(J^{-\kappa})$.
By pointwise nondegeneracy of $\sigma_h^\star(z)$,
$\left|
\sqrt{N(T-h)}R_h(z)/\sigma_h^\star(z)
\right|=o(1)$,
where the final conclusion follows from
$\frac{\sqrt{NT}\,J^{1-\kappa}}
{\sigma_h^\star(z)}
\to0$
. Hence
\[
\frac{\sqrt{N(T-h)}\{\widehat g_h(z)-g_h(z)\}}
{\sigma_h^\star(z)}
\Rightarrow
\mathcal N(0,1). \qedhere
\]

\end{proof}

\subsection{Proof of Theorem \ref{lem:HAC_consistency}}
\begin{lemma}[HAC bound for the infeasible effective score]
\label{lem:HAC_infeasible}
Maintain Assumption~\ref{ass:C}, and suppose that the standardized
score-moment exponent in Assumption~\ref{ass:C}(2) satisfies $q_0>4$. Recall $s_{h,t}$, $\Gamma_{h,J,N}(k)$, and
$
\Omega_{h,J,N}
=
\sum_{k\in\mathbb Z}\Gamma_{h,J,N}(k).
$
For $0\leq k\leq L<T-h$, define
$
\widehat\Gamma_h(k)
\coloneqq
\frac{1}{T-h}
\sum_{t=k+1}^{T-h}
s_{h,t}s_{h,t-k}^{\top},
$
and
$
\widehat\Omega_h^{\mathrm{inf}}
\coloneqq
\widehat\Gamma_h(0)
+
\sum_{k=1}^{L}w_k
\left\{
\widehat\Gamma_h(k)
+
\widehat\Gamma_h(k)^\top
\right\},
w_k=1-\frac{k}{L+1}.
$

For the dependence calculation, reindex and normalize the score by
outcome date:
$
\check s_{h,r}
\coloneqq
\chi_{h,J,N}^{-1/2}s_{h,r-h}.
$
Under the causal representations in Assumption~\ref{ass:C}(1)--(2),
$\{\check s_{h,r}\}_{r\in\mathbb Z}$ is a mean-zero, strictly
stationary causal process. Let
$\check s_{h,r,j}^{*(r-m)}$
denote the coupled version of its $j$th coordinate obtained by
replacing the primitive innovations at date $r-m$ with independent
copies. Define
$
d_J\coloneqq1+\log(1+L_J).
$
Then
$\sup_{J,N}
\max_{j\leq J}
\sup_{r\in\mathbb Z}
\left\|
\check s_{h,r,j}
\right\|_{L^{q_0}}
<\infty$
and
$\sup_{J,N}
\frac{1}{d_J}
\max_{j\leq J}
\sum_{m=0}^{\infty}
\sup_{r\in\mathbb Z}
\left\|
\check s_{h,r,j}
-
\check s_{h,r,j}^{*(r-m)}
\right\|_{L^{q_0}}
<\infty$.

Suppose also that
$\lim_{K\to\infty}
\sup_{J,N}
\frac{1}{\chi_{h,J,N}}
\sum_{|k|>K}
\left\|
\Gamma_{h,J,N}(k)
\right\|
=
0$.

Define
\[
\begin{aligned}
b_{\Gamma,h}(L,T)
\coloneqq{}&
\sum_{|k|>L}
\left\|
\Gamma_{h,J,N}(k)
\right\|
+
\frac{1}{L+1}
\sum_{|k|\leq L}
|k|\,
\left\|
\Gamma_{h,J,N}(k)
\right\|+
\frac{L}{T-h}
\sum_{|k|\leq L}
\left\|
\Gamma_{h,J,N}(k)
\right\|.
\end{aligned}
\]
Then
\[
\frac{1}{\chi_{h,J,N}}
\left\|
\E\!\left[
\widehat\Omega_h^{\mathrm{inf}}
\right]
-
\Omega_{h,J,N}
\right\|
\leq
\frac{b_{\Gamma,h}(L,T)}
{\chi_{h,J,N}},
\]
and
\[
\frac{1}{\chi_{h,J,N}}
\left\|
\widehat\Omega_h^{\mathrm{inf}}
-
\E\!\left[
\widehat\Omega_h^{\mathrm{inf}}
\right]
\right\|
=
O_p\!\left(
Jd_J^2\sqrt{\frac{L}{T}}
\right).
\]
Consequently,
\[
\frac{1}{\chi_{h,J,N}}
\left\|
\widehat\Omega_h^{\mathrm{inf}}
-
\Omega_{h,J,N}
\right\|
=
O_p\!\left(
Jd_J^2\sqrt{\frac{L}{T}}
\right)
+
O\!\left(
\frac{b_{\Gamma,h}(L,T)}
{\chi_{h,J,N}}
\right).
\]
In particular, if
$
L\to\infty
$
and
$
J^2d_J^4L/T\to0,
$
then
\[
\frac{
\left\|
\widehat\Omega_h^{\mathrm{inf}}
-
\Omega_{h,J,N}
\right\|
}{
\chi_{h,J,N}
}
\xrightarrow{p}0.
\]
\end{lemma}

\begin{proof}[Proof of Lemma~\ref{lem:HAC_infeasible}]
For $j\leq J$, let
$
a_{j,J,N}
\coloneqq
\chi_{h,J,N}^{-1/2}
\Omega_{h,J,N}^{1/2}e_j.
$
Because
$
\|a_{j,J,N}\|_2^2
=
e_j^\top\Omega_{h,J,N}e_j/\chi_{h,J,N}
\leq1,
$
and
$
\check s_{h,r,j}
=
a_{j,J,N}^\top
\Omega_{h,J,N}^{-1/2}s_{h,r-h},
$
the standardized score-moment condition in
Assumption~\ref{ass:C}(2) gives
$\sup_{J,N}
\max_{j\leq J}
\sup_{r\in\mathbb Z}
|\check s_{h,r,j}|_{L^{q_0}}
<\infty$.

Let $\bar\rho\coloneqq\max\{\rho,\rho_u\}<1$.
A product-difference decomposition, conditional Rosenthal's
inequality, $J/N\to0$, the sieve Lipschitz bound, and the geometric
dependence conditions in Assumption~\ref{ass:C}(1)--(2) give,
uniformly over $J,N$, $j\leq J$, and $r\in\mathbb Z$,
$\left\|
\check s_{h,r,j}
-
\check s_{h,r,j}^{*(r-m)}
\right\|_{L^{q_0}}
\lesssim
\min\left\{
1,\,
L_J\bar\rho^{(m-h)_+}
\right\}$. Since $h$ is fixed,
$\begin{aligned}
\max_{j\leq J}
\sum_{m=0}^{\infty}
\sup_{r\in\mathbb Z}
\left\|
\check s_{h,r,j}
-
\check s_{h,r,j}^{*(r-m)}
\right\|_{L^{q_0}}
&\lesssim
1+\log(1+L_J)=
d_J.
\end{aligned}$
This proves the two score-dependence bounds stated in the lemma.

For the lag-product process, define
$
\check Y_{t,k}^{j\ell}
\coloneqq
\check s_{h,t+h,j}\check s_{h,t-k+h,\ell}
-
\E\!\left[
\check s_{h,t+h,j}\check s_{h,t-k+h,\ell}
\right].
$
If $\check Y_{t,k}^{j\ell,*}$ denotes its coupled version, then
$ab-a^*b^*=(a-a^*)b+a^*(b-b^*)$, so H\"older's inequality gives
\[
\begin{aligned}
\left\|
\check Y_{t,k}^{j\ell}
-
\check Y_{t,k}^{j\ell,*}
\right\|_{L^{q_0/2}}
\lesssim{}&
\left\|
\check s_{h,t+h,j}
-
\check s_{h,t+h,j}^*
\right\|_{L^{q_0}}
\left\|
\check s_{h,t-k+h,\ell}
\right\|_{L^{q_0}}\\
&+
\left\|
\check s_{h,t+h,j}^*
\right\|_{L^{q_0}}
\left\|
\check s_{h,t-k+h,\ell}
-
\check s_{h,t-k+h,\ell}^*
\right\|_{L^{q_0}}.
\end{aligned}
\]
Thus, the lag-product process inherits summable
functional-dependence coefficients from the score process. The
relevant products of the score dependence norms are of order $d_J^2$.

Stationarity gives, for $k\geq0$,
$
\E[\widehat\Gamma_h(k)]
=
(1-k/(T-h))\Gamma_{h,J,N}(k).
$
Therefore,
$
\E[\widehat\Omega_h^{\mathrm{inf}}]
=
\sum_{|k|\leq L}
w_{|k|}(1-|k|/(T-h))\Gamma_{h,J,N}(k),
$
where $w_0=1$. Hence
\[
\begin{aligned}
\E[\widehat\Omega_h^{\mathrm{inf}}]-\Omega_{h,J,N}
={}&
-\sum_{|k|>L}\Gamma_{h,J,N}(k)
+
\sum_{|k|\leq L}
(w_{|k|}-1)\Gamma_{h,J,N}(k)\\
&-
\sum_{|k|\leq L}
w_{|k|}
\frac{|k|}{T-h}
\Gamma_{h,J,N}(k).
\end{aligned}
\]
Since
$|w_{|k|}-1|=|k|/(L+1)$ and
$0\leq w_{|k|}\leq1$,
the triangle inequality yields
\[
\left\|
\E[\widehat\Omega_h^{\mathrm{inf}}]-\Omega_{h,J,N}
\right\|
\leq
b_{\Gamma,h}(L,T).
\]
The normalized absolute-summability condition in
Assumption~\ref{ass:C}(2), the uniform normalized covariance-tail
condition, $L\to\infty$, and $L/T\to0$ imply
$\frac{b_{\Gamma,h}(L,T)}
{\chi_{h,J,N}}
\to0$.

Next, let
$
D_h
\coloneqq
\widehat\Omega_h^{\mathrm{inf}}
-
\E[\widehat\Omega_h^{\mathrm{inf}}],
$
and denote its $(j,\ell)$th entry by $D_{h,j\ell}$. Let
$
\check D_h
\coloneqq
\chi_{h,J,N}^{-1}D_h.
$
Define
$
a_{tr}^{(L)}
\coloneqq
w_{|t-r|}
\mathbf 1
\left\{
1\leq t,r\leq T-h,\ |t-r|\leq L
\right\},
$
with $w_0=1$. Then
\begin{equation}
\label{eq:HAC_quadratic_form}
\check D_{h,j\ell}
=
\frac1{T-h}
\sum_{t=1}^{T-h}
\sum_{r=1}^{T-h}
a_{tr}^{(L)}
\left\{
\check s_{h,t+h,j}\check s_{h,r+h,\ell}
-
\E\!\left[
\check s_{h,t+h,j}\check s_{h,r+h,\ell}
\right]
\right\}.
\end{equation}

Because $q_0>4$, the preceding bound and the monotonicity of
$L^p$ norms imply\\
$\max_{j\leq J}
\sum_{m=0}^{\infty}
\sup_{r\in\mathbb Z}
\left\|
\check s_{h,r,j}
-
\check s_{h,r,j}^{*(r-m)}
\right\|_{L^4}
\lesssim
d_J$.
The bilinear quadratic-form moment inequality
(see \citealt{LiuWu2010} and \citealt{WuZaffaroni2018}) for
functionally dependent processes therefore gives, uniformly over
$j,\ell\leq J$,
\begin{equation}
\label{eq:HAC_quadratic_inequality}
\left\|
\sum_{t=1}^{T-h}
\sum_{r=1}^{T-h}
a_{tr}^{(L)}
\left\{
\check s_{h,t+h,j}\check s_{h,r+h,\ell}
-
\E\!\left[
\check s_{h,t+h,j}\check s_{h,r+h,\ell}
\right]
\right\}
\right\|_{L^2}
\leq
C d_J^2
\left\{
\sum_{t=1}^{T-h}
\sum_{r=1}^{T-h}
\bigl(a_{tr}^{(L)}\bigr)^2
\right\}^{1/2}.
\end{equation}
The bilinear version follows from the corresponding quadratic-form
inequality by polarization. For the Bartlett weights,
\[
\begin{aligned}
\sum_{t=1}^{T-h}
\sum_{r=1}^{T-h}
\bigl(a_{tr}^{(L)}\bigr)^2
&=
T-h
+
2\sum_{k=1}^{L}(T-h-k)w_k^2\leq
T(1+2L)
\lesssim
TL.
\end{aligned}
\]
Combining this with
\eqref{eq:HAC_quadratic_form}--%
\eqref{eq:HAC_quadratic_inequality} gives
$\sup_{j,\ell\leq J}
|\check D_{h,j\ell}|_{L^2}
\lesssim
d_J^2\sqrt{\frac LT}$,
and hence
$\sup_{j,\ell\leq J}
\E[\check D_{h,j\ell}^2]
\lesssim
d_J^4\frac LT$.
Consequently,
\[
\E\|\check D_h\|_F^2
=
\sum_{j=1}^J
\sum_{\ell=1}^J
\E[\check D_{h,j\ell}^2]
\lesssim
J^2d_J^4\frac LT.
\]
Since $\|\check D_h\|\leq\|\check D_h\|_F$, Markov's inequality yields
$\|\check D_h\|
=
O_p\!\left(
Jd_J^2\sqrt{\frac LT}
\right)$.
\end{proof}

\begin{lemma}[Approximation-error and control-score bounds]
\label{lem:feasible_score_aux}

Maintain Assumption~\ref{ass:C}. Write
$R_{h,t}
\coloneqq
R_h(Z_{t-1})
=
\bigl(
R_h(Z_{1,t-1}),\ldots,R_h(Z_{N,t-1})
\bigr)^\top$.
Suppose that, uniformly over
$j\leq J$,
\[
\E\!\left[
\phi_j(Z_{i,t-1})^2R_h(Z_{i,t-1})^2
\right]
\lesssim J^{-2\kappa},
\qquad
\left\|
\E\!\left[
\phi_j(Z_{i,t-1})R_h(Z_{i,t-1})
\,\middle|\,
\mathcal A
\right]
\right\|_{L^2}
\lesssim J^{-\kappa-1}.
\tag{F-Loc}\label{eq:feasible_localization}
\]
Then the following conclusions hold.

\begin{enumerate}[label=\upshape(\alph*)]

\item Define
$\mathbf A_t^{\mathrm{(bias)}}
\coloneqq
X_t^2\Phi(Z_{t-1})^\top R_{h,t}$.
Then
\[
\frac1{N (T-h)}
\sum_{t=1}^{T-h}
\left\|
\mathbf A_t^{\mathrm{(bias)}}
\right\|_2^2
=
O_p\!\left(
J^{1-2\kappa}
+
NJ^{-2\kappa-1}
\right).
\]

\item Suppose the second block of the sample normal equations is
\[
\widehat A_{21,h}(\widehat b_h-b_h)
+
\widehat A_{22,h}(\widehat\gamma_h-\gamma_h)
=
\frac1{N(T-h)}
\sum_{t=1}^{T-h}
W_{t-1}^\top v_{h,t+h},
\]
and assume
$\|
[N(T-h)]^{-1}\sum_{t=1}^{T-h}
W_{t-1}^\top v_{h,t+h}
\|
=
O_p\!\left(
\sqrt{\frac{\chi_{h,J,N}}{NT}}
+
\frac{J^{-\kappa}}{\sqrt T}
\right)$,
By Lemma~\ref{lem:gram_rates} and
Theorem~\ref{thm:main_final},
$\|\widehat A_{21,h}\|
=
O_p\!\left(\sqrt{\frac JT}\right),
\|\widehat A_{22,h}^{-1}\|
=
O_p(1)$,

and
$\|\widehat b_h-b_h\|_2
=
O_p\!\left(
J^{1/2-\kappa}
+
\sqrt{\frac{J\chi_{h,J,N}}{NT}}
\right)$.
Then
\begin{equation} \label{eq:gamma_corrected_rate}
\|\widehat\gamma_h-\gamma_h\|
=
O_p(r_{\gamma,T}),
\end{equation}
where
$r_{\gamma,T}
\coloneqq
\frac{\sqrt{\chi_{h,J,N}}}{\sqrt{NT}}
+
\frac{J \sqrt{\chi_{h,J,N}}}{T\sqrt N}
+
\frac{J^{1-\kappa}}{\sqrt T}$.
The term $J^{1-\kappa}/\sqrt T$ accounts for the
approximation-bias component of $\widehat b_h-b_h$.

\item Recall that
$\widehat v_{h,t+h}
=
u_{h,t+h}
+
X_tR_{h,t}
-
X_t\Phi(Z_{t-1})(\widehat b_h-b_h)
-
W_{t-1}(\widehat\gamma_h-\gamma_h)$.
Suppose, in addition, that
$\frac1{T-h}\sum_{t=1}^{T-h}
X_t^2
\left\|
\frac1N
W_{t-1}^\top\Phi(Z_{t-1})
\right\|^2
=
O_p(1)$.
Then
\begin{equation} \label{eq:feasible_control_bound}
\frac{1}{T-h}\sum_{t=1}^{T-h}
\left\|
\frac1{\sqrt N}
W_{t-1}^{\top}\widehat v_{h,t+h}
\right\|_2^2
=
O_p\!\Bigl(
\chi_{h,J,N}
+
NJ^{-2\kappa}
+
NJ^{1-2\kappa}
+
\frac{J\chi_{h,J,N}}{T}
+
Nr_{\gamma,T}^2
\Bigr).
\end{equation}

Consequently, if
\[
NJ^{-2\kappa}
+
NJ^{1-2\kappa}
+
\frac{J\chi_{h,J,N}}{T}
+
Nr_{\gamma,T}^2
=
O(\chi_{h,J,N}),
\tag{F-Control-Growth}
\label{eq:feasible_control_growth}
\]
then
\[
\frac1{T-h}
\sum_{t=1}^{T-h}
\left\|
N^{-1/2}
W_{t-1}^\top\widehat v_{h,t+h}
\right\|_2^2
=
O_p(\chi_{h,J,N}).
\]
\end{enumerate}
\end{lemma}

\begin{proof}[Proof of Lemma~\ref{lem:feasible_score_aux}]
For part~(a), let
$A_{t,j}^{\mathrm{(bias)}}
=
X_t^2\sum_{i=1}^N
\phi_j(Z_{i,t-1})R_h(Z_{i,t-1})$.
The cross-sectional sum is $\mathcal F_{t-1}$-measurable. Therefore, by Assumption~\ref{ass:C}(1), together with $q>4$, 
\[
\begin{aligned}
\E[(A_{t,j}^{\mathrm{(bias)}})^2]
&=
\E\!\left[
\E[X_t^4\mid\mathcal F_{t-1}]
\left(
\sum_{i=1}^N
\phi_j(Z_{i,t-1})R_h(Z_{i,t-1})
\right)^2
\right]\\
&\leq
C\E\!\left[
\left(
\sum_{i=1}^N
\phi_j(Z_{i,t-1})R_h(Z_{i,t-1})
\right)^2
\right].
\end{aligned}
\]
Thus, no independence between $X_t$ and $\mathcal F_{t-1}$ is
required. Conditional cross-sectional independence and
\eqref{eq:feasible_localization} give
$\E[(A_{t,j}^{\mathrm{(bias)}})^2]
=
O(NJ^{-2\kappa}+N^2J^{-2\kappa-2})$.
Summing over $j\leq J$, averaging over $t$, and applying Markov's
inequality yields
$\frac1{N(T-h)}
\sum_{t=1}^{T-h}
\|\mathbf A_t^{\mathrm{(bias)}}\|_2^2
=
O_p(J^{1-2\kappa}+NJ^{-2\kappa-1})$.

For part~(b), the second block of the normal equations gives
\[
\widehat\gamma_h-\gamma_h
=
\widehat A_{22,h}^{-1}
\left[
\frac1{N(T-h)}
\sum_{t=1}^{T-h}
W_{t-1}^\top v_{h,t+h}
-
\widehat A_{21,h}(\widehat b_h-b_h)
\right].
\]
Hence
\[
\begin{aligned}
\|\widehat\gamma_h-\gamma_h\|
={}&&
O_p\!\left(
\sqrt{\frac{\chi_{h,J,N}}{NT}}
+
\frac{J^{-\kappa}}{\sqrt T}
\right)+
O_p\!\left(\sqrt{\frac JT}\right)
O_p\!\left(
\sqrt{\frac{J\chi_{h,J,N}}{NT}}
+
J^{1/2-\kappa}
\right)
\\&&=
O_p\!\left(
\sqrt{\frac{\chi_{h,J,N}}{NT}}
+
\frac{J\sqrt{\chi_{h,J,N}}}{T\sqrt N}
+
\frac{J^{1-\kappa}}{\sqrt T}
\right)=
O_p(r_{\gamma,T}),
\end{aligned}
\]
which proves~\eqref{eq:gamma_corrected_rate}.

For part~(c), expand
\[
\begin{aligned}
\frac1{\sqrt N}W_{t-1}^\top\widehat v_{h,t+h}
={}&
\frac1{\sqrt N}W_{t-1}^\top u_{h,t+h}
+
\frac{X_t}{\sqrt N}W_{t-1}^\top R_{h,t}\\
&-
\frac{X_t}{\sqrt N}
W_{t-1}^\top\Phi(Z_{t-1})(\widehat b_h-b_h)
-
\frac1{\sqrt N}
W_{t-1}^\top W_{t-1}
(\widehat\gamma_h-\gamma_h).
\end{aligned}
\]
Let
$
q_{h,t}^{W}
\coloneqq
N^{-1/2}W_{t-1}^\top u_{h,t+h}.
$
By stationarity and Assumption~\ref{ass:C}(2),
\[
\begin{aligned}
\E\!\left[
\frac{1}{T-h}
\sum_{t=1}^{T-h}
\|q_{h,t}^{W}\|_2^2
\right]
&=
\operatorname{tr}\!\left(
\E[q_{h,0}^{W}q_{h,0}^{W\top}]
\right) \lesssim
\sum_{k\in\mathbb Z}
\left\|
\E[q_{h,t}^{W}q_{h,t-k}^{W\top}]
\right\|
\lesssim
\chi_{h,J,N}.
\end{aligned}
\]
Therefore, Markov's inequality gives
$\frac{1}{T-h}
\sum_{t=1}^{T-h}
\left\|
\frac1{\sqrt N}
W_{t-1}^\top u_{h,t+h}
\right\|_2^2
=
O_p(\chi_{h,J,N})$.
For the second component, Cauchy--Schwarz gives
$\|
X_tN^{-1/2}W_{t-1}^\top R_{h,t}
\|_2^2
\leq
NX_t^2
(N^{-1}\|W_{t-1}\|_F^2)
(N^{-1}\|R_{h,t}\|_2^2)$,
and therefore
$\frac{1}{T-h}\sum_{t=1}^{T-h}
\|
X_tN^{-1/2}W_{t-1}^\top R_{h,t}
\|_2^2
=
O_p(NJ^{-2\kappa})$.

For the third component,
\[
\begin{aligned}
\frac{1}{T-h}\sum_{t=1}^{T-h}
\left\|
\frac{X_t}{\sqrt N}
W_{t-1}^\top\Phi(Z_{t-1})
(\widehat b_h-b_h)
\right\|_2^2
&\leq
N\|\widehat b_h-b_h\|_2^2
\frac{1}{T-h}\sum_{t=1}^{T-h}
X_t^2
\left\|
\frac1N
W_{t-1}^\top\Phi(Z_{t-1})
\right\|^2\\
&=
O_p\!\left(
NJ^{1-2\kappa}
+
\frac {J\chi_{h,J,N}}{T}
\right).
\end{aligned}
\]
Finally,
\[
\begin{aligned}
\frac{1}{T-h}\sum_{t=1}^{T-h}
\left\|
\frac1{\sqrt N}
W_{t-1}^\top W_{t-1}
(\widehat\gamma_h-\gamma_h)
\right\|_2^2
&\leq
N\|\widehat\gamma_h-\gamma_h\|_2^2
\frac{1}{T-h}\sum_{t=1}^{T-h}
\left\|
\frac1N
W_{t-1}^\top W_{t-1}
\right\|^2\\
&=
O_p(Nr_{\gamma,T}^2).
\end{aligned}
\]
Combining the four bounds proves~\eqref{eq:feasible_control_bound}.
\end{proof}

\begin{proof}[Proof of Theorem~\ref{lem:HAC_consistency}]
We split the argument into three steps.

\noindent\textbf{Step 1: Preliminary limits and matrix regularity.}

By Assumption~1, $\E[X_t\mid\mathcal F_{t-1}]=0$. Because
$\phi(Z_{i,t-1})$ and $W_{i,t-1}$ are $\mathcal F_{t-1}$-measurable,
$A_{12,h}
=
\E[X_t\phi(Z_{i,t-1})W_{i,t-1}^{\top}]
=
\E[\E(X_t\mid\mathcal F_{t-1})
\phi(Z_{i,t-1})W_{i,t-1}^{\top}]
=0$.
Consequently,
$\widetilde A_{11,h}
=
A_{11,h}
-
A_{12,h}A_{22,h}^{-1}A_{21,h}
=
A_{11,h}$.
Thus, under the maintained martingale-difference restriction, the
population covariance matrix $V_h^\star$ may equivalently be written
using either $\widetilde A_{11,h}^{-1}$ or $A_{11,h}^{-1}$.

By Assumption~\ref{ass:C}(4), there exist constants $c,C>0$ such that
$c\leq\lambda_{\min}(A_{11,h})
\leq\lambda_{\max}(A_{11,h})\leq C$.
Moreover, $A_{22,h}$ is positive definite. Lemma~\ref{lem:gram_rates}
therefore implies
$\|\widehat A_{11,h}^{-1}\|=O_p(1)$ and
$\widehat A_{11,h}^{-1}\to_p A_{11,h}^{-1}$.
Under the maintained ergodicity or weak-dependence condition for
$\{W_{i,t-1}\}$, the relevant law of large number  gives
$\widehat A_{22,h}\to_p A_{22,h}$.
It follows from continuity of matrix inversion that
$\widehat A_{22,h}^{-1}\to_p A_{22,h}^{-1}$ and
$\|\widehat A_{22,h}^{-1}\|=O_p(1)$.

To determine the rate of the sample cross-moment, write
$\widehat A_{12,h}
=
(T-h)^{-1}\sum_{t=1}^{T-h}X_tM_t$,
where
$M_t
=
N^{-1}\sum_{i=1}^N
\phi(Z_{i,t-1})W_{i,t-1}^{\top}$.
Although $A_{12,h}=0$, the aggregate shock $X_t$ is common across
units. Consequently, cross-sectional averaging does not generally
reduce the component of $M_t$ corresponding to its conditional
cross-sectional mean. Under the stated moment and time-series
dependence conditions,
$\|\widehat A_{12,h}\|
=
O_p(\sqrt{J/T})$.
The faster rate $O_p(\sqrt{J/(NT)})$ would require the additional
cross-sectional centering condition
\[
\frac1N\sum_{i=1}^N
\E\!\left[
\phi(Z_{i,t-1})W_{i,t-1}^{\top}
\,\middle|\,
\mathcal A_{t-1}
\right]
=0.
\]

Using the general rate and
$\|\widehat A_{22,h}^{-1}\|=O_p(1)$,
\[
\left\|
\widehat{\widetilde A}_{11,h}
-
\widehat A_{11,h}
\right\|
\leq
\|\widehat A_{12,h}\|^2
\|\widehat A_{22,h}^{-1}\|
=
O_p\!\left(\frac JT\right).
\]
Therefore, provided $J/T\to0$,
$\widehat{\widetilde A}_{11,h}\to_p A_{11,h}$
in operator norm. The inverse identity gives
\[
\widehat{\widetilde A}_{11,h}^{-1}
-
\widehat A_{11,h}^{-1}
=
\widehat{\widetilde A}_{11,h}^{-1}
\left(
\widehat A_{11,h}
-
\widehat{\widetilde A}_{11,h}
\right)
\widehat A_{11,h}^{-1}
=
O_p\!\left(\frac JT\right).
\]
Hence
$\widehat{\widetilde A}_{11,h}^{-1}\to_p A_{11,h}^{-1}$ and
$\|\widehat{\widetilde A}_{11,h}^{-1}\|=O_p(1)$.

\medskip
\noindent\textbf{Step 2: Consistency of the long-run covariance
estimator for the infeasible score.}

Define the infeasible effective score
\[
s_{h,t}
=
\frac1{\sqrt N}
\left\{
X_t\Phi(Z_{t-1})^\top
-
A_{12,h}A_{22,h}^{-1}W_{t-1}^\top
\right\}
u_{h,t+h}.
\]
As shown in Step~1, the martingale-difference restriction on $X_t$
and the predeterminedness of $Z_{i,t-1}$ and $W_{i,t-1}$ imply
$A_{12,h}=0$. Therefore,
$s_{h,t}
=
N^{-1/2}X_t\Phi(Z_{t-1})^\top u_{h,t+h}$.

By Lemma~\ref{lem:HAC_infeasible},
$\frac{
\left\|
\widehat\Omega_h^{\mathrm{inf}}
-
\Omega_{h,J,N}
\right\|
}{
\chi_{h,J,N}
}
=
O_p\!\left(
Jd_J^2\sqrt{\frac{L}{T}}
\right)
+
O\!\left(
\frac{
b_{\Gamma,h}(L,T)
}{
\chi_{h,J,N}
}
\right)$.

Thus, if $L\to\infty$ and
$J^2d_J^4L/T\to0$, then
$\frac{
\left\|
\widehat\Omega_h^{\mathrm{inf}}
-
\Omega_{h,J,N}
\right\|
}{
\chi_{h,J,N}
}
\xrightarrow{p}0$.

\medskip
\noindent\textbf{Step 3: Feasible scores and the sandwich mapping.}

Recall
$s_{h,t}
=
N^{-1/2}X_t\Phi(Z_{t-1})^\top u_{h,t+h}$
and
\[
\widehat s_{h,t}
=
\frac1{\sqrt N}
\left\{
X_t\Phi(Z_{t-1})^\top
-
\widehat A_{12,h}
\widehat A_{22,h}^{-1}
W_{t-1}^\top
\right\}
\widehat v_{h,t+h}.
\]
Let
$e_{h,t}\coloneqq\widehat s_{h,t}-s_{h,t}$.
Using
$\widehat v_{h,t+h}-u_{h,t+h}
=
X_tR_{h,t}
-
D_t(\widehat\theta_h-\theta_h)$,
we obtain
\[
\begin{aligned}
e_{h,t}
=
\frac1{\sqrt N}\Bigl\{
X_t^2\Phi(Z_{t-1})^\top R_{h,t}
-
X_t\Phi(Z_{t-1})^\top
D_t(\widehat\theta_h-\theta_h)
-
\widehat A_{12,h}
\widehat A_{22,h}^{-1}
W_{t-1}^\top\widehat v_{h,t+h}
\Bigr\}.
\end{aligned}
\]

By Lemma~\ref{lem:feasible_score_aux},
\begin{equation}
\label{eq:F1}
\frac{1}{
\chi_{h,J,N}N(T-h)
}
\sum_{t=1}^{T-h}
\left\|
X_t^2\Phi(Z_{t-1})^\top R_{h,t}
\right\|_2^2
=
O_p\!\left(
\frac{
J^{1-2\kappa}
+
NJ^{-2\kappa-1}
}{
\chi_{h,J,N}
}
\right).
\end{equation}
Under
\[
\sqrt{\frac{NT}{\chi_{h,J,N}}}\,J^{-\kappa}\to0
\quad\text{and}\quad
\frac JT\to0,
\]
the right-hand side is $o_p(J/T)$.

Next, set $v_h=\widehat\theta_h-\theta_h$. Under the maintained
design-moment law of large numbers,
\[
\left\|
\frac1{N(T-h)}
\sum_{t=1}^{T-h}
X_t^2
D_t^\top
\Phi(Z_{t-1})\Phi(Z_{t-1})^\top
D_t
\right\|
=
O_p(N).
\]

Consequently,
\begin{align}
&
\frac{1}{
\chi_{h,J,N}N(T-h)
}
\sum_{t=1}^{T-h}
\left\|
X_t\Phi(Z_{t-1})^\top D_t
(\widehat\theta_h-\theta_h)
\right\|_2^2
\nonumber\\
&\qquad =
O_p\!\left(
\frac{N}{\chi_{h,J,N}}
\|\widehat b_h-b_h\|_2^2
+
\frac{N}{\chi_{h,J,N}}
\|\widehat\gamma_h-\gamma_h\|_2^2
\right)
\nonumber =
O_p\!\left(
\frac{
NJ^{1-2\kappa}
}{
\chi_{h,J,N}
}
+
\frac JT
+
\frac{
Nr_{\gamma,T}^2
}{
\chi_{h,J,N}
}
\right)
\nonumber =
O_p\!\left(\frac JT\right),
\end{align}
where Theorem~\ref{thm:main_final} and
Lemma~\ref{lem:feasible_score_aux}\textup{(b)} have been used.
 Since
$\frac{NJ^{1-2\kappa}}{\chi_{h,J,N}}
=o(J/T),
\frac{Nr_{\gamma,T}^2}{\chi_{h,J,N}}
=O(J/T),$
condition~\eqref{eq:feasible_control_growth} also holds, so
Lemma~\ref{lem:feasible_score_aux}(c) applies.
For the partialling term, submultiplicativity and
Lemma~\ref{lem:feasible_score_aux}(c) give
\begin{align}
\frac{1}{\chi_{h,J,N}(T-h)}
\sum_{t=1}^{T-h}
\left\|
\frac1{\sqrt N}
\widehat A_{12,h}
\widehat A_{22,h}^{-1}
W_{t-1}^\top\widehat v_{h,t+h}
\right\|_2^2
&\leq
\|\widehat A_{12,h}\|^2
\|\widehat A_{22,h}^{-1}\|^2
\frac{1}{\chi_{h,J,N}(T-h)}
\sum_{t=1}^{T-h}
\left\|
\frac1{\sqrt N}
W_{t-1}^\top\widehat v_{h,t+h}
\right\|_2^2
\nonumber\\
&=
O_p\!\left(\frac JT\right).
\label{eq:F3}
\end{align}
because
$\|\widehat A_{12,h}\|
=
O_p(\sqrt{J/T})$ and
$\|\widehat A_{22,h}^{-1}\|=O_p(1)$.

Combining~\eqref{eq:F1}--\eqref{eq:F3} yields
$\frac{1}{\chi_{h,J,N}(T-h)}
\sum_{t=1}^{T-h}
\|e_{h,t}\|_2^2
=
O_p\!\left(
\frac{J^{1-2\kappa}+NJ^{-2\kappa-1}}{\chi_{h,J,N}}
+\frac JT
\right)$.
Under the undersmoothing condition
$\sqrt{NT/\chi_{h,J,N}}\,J^{-\kappa}\to0$ and $J/T\to0$,
$\{J^{1-2\kappa}+NJ^{-2\kappa-1}\}/\chi_{h,J,N}
=o(J/T)$.
Therefore,
\[
\left[
\frac{1}{\chi_{h,J,N}(T-h)}
\sum_{t=1}^{T-h}
\|e_{h,t}\|_2^2
\right]^{1/2}
=
O_p\!\left(
\sqrt{\frac JT}
\right).
\]
Moreover, because the dimension of $s_{h,t}$ increases with $J$, the
appropriate normalized score bound is
\[
\frac{1}{\chi_{h,J,N}(T-h)}
\sum_{t=1}^{T-h}\|s_{h,t}\|_2^2
=
O_p(J),
\]
rather than $O_p(1)$. At every lag $k\leq L$, Cauchy--Schwarz gives
\[
\begin{aligned}
\frac{1}{\chi_{h,J,N}}
\left\|
\frac{1}{T-h}\sum_t e_{h,t}s_{h,t-k}^\top
\right\|
&\leq
\left(
\frac{1}{\chi_{h,J,N}(T-h)}
\sum_t\|e_{h,t}\|_2^2
\right)^{1/2}
\left(
\frac{1}{\chi_{h,J,N}(T-h)}
\sum_t\|s_{h,t-k}\|_2^2
\right)^{1/2},\\
\frac{1}{\chi_{h,J,N}}
\left\|
\frac{1}{T-h}\sum_t e_{h,t}e_{h,t-k}^\top
\right\|
&\leq
\frac{1}{\chi_{h,J,N}(T-h)}
\sum_t\|e_{h,t}\|_2^2.
\end{aligned}
\]
Expanding
$\widehat s_{h,t}\widehat s_{h,t-k}^\top
-
s_{h,t}s_{h,t-k}^\top
=
e_{h,t}s_{h,t-k}^\top
+
s_{h,t}e_{h,t-k}^\top
+
e_{h,t}e_{h,t-k}^\top$
and summing the $L+1$ Bartlett-weighted lags therefore gives
\begin{equation} \label{eq:feasible_HAC_difference}
\frac{
\|\widehat\Omega_h-\widehat\Omega_h^{\mathrm{inf}}\|
}{
\chi_{h,J,N}
}
=
O_p\!\left[
L\sqrt J
\sqrt{\frac JT}
+
L\frac JT
\right]
=
O_p\!\left(
\frac{LJ}{\sqrt T}
\right).
\end{equation}
Consequently,
$\|\widehat\Omega_h-\widehat\Omega_h^{\mathrm{inf}}\|
/\chi_{h,J,N}
\xrightarrow{p}0$
provided
\[
\frac{LJ}{\sqrt T}\to0.
\tag{F-HAC-Growth}\label{eq:feasible_HAC_growth}
\]

Combining~\eqref{eq:feasible_HAC_difference} with Step~2 gives
$\|\widehat\Omega_h-\Omega_{h,J,N}\|
/\chi_{h,J,N}
\xrightarrow{p}0$.
Since $\Omega_h\equiv\Omega_{h,J,N}$, it follows that
$\|\widehat\Omega_h-\Omega_h\|
/\chi_{h,J,N}
\xrightarrow{p}0$.
Finally, define
$\widehat V_h^\star
=
\widehat{\widetilde A}_{11,h}^{-1}
\widehat\Omega_h
\widehat{\widetilde A}_{11,h}^{-1}$
and
$V_h^\star
=
A_{11,h}^{-1}
\Omega_h
A_{11,h}^{-1}$.
Let
$\widehat B_h
=
\widehat{\widetilde A}_{11,h}^{-1}$
and
$B_h=A_{11,h}^{-1}$.
Then
$\begin{aligned}
\widehat V_h^\star-V_h^\star
={}&
(\widehat B_h-B_h)\widehat\Omega_h\widehat B_h
+
B_h(\widehat\Omega_h-\Omega_h)\widehat B_h
+
B_h\Omega_h(\widehat B_h-B_h).
\end{aligned}$
Because
$\|\widehat B_h-B_h\|=o_p(1)$,
$\|\widehat B_h\|+\|B_h\|=O_p(1)$, and
$\|\widehat\Omega_h\|/\chi_{h,J,N}
+\|\Omega_h\|/\chi_{h,J,N}=O_p(1)$,
we conclude that
$\|\widehat V_h^\star-V_h^\star\|
/\chi_{h,J,N}
\xrightarrow{p}0$.

If, in addition,
$\phi(z)^\top V_h^\star\phi(z)
\gtrsim
\chi_{h,J,N}\|\phi(z)\|_2^2$, then
\[
\left|
\frac{
\phi(z)^\top\widehat V_h^\star\phi(z)
}{
\phi(z)^\top V_h^\star\phi(z)
}
-1
\right|
=o_p(1). \qedhere
\] 
\end{proof}

\subsection{Proof of Theorem \ref{thm:uniform_GA}}

\noindent\emph{Notation for this subsection.}
The symbols $X_t,n,p,m,M,$ and $L$ below are local to this theorem and its
proof: they denote, respectively, a generic vector process, the sample size
and dimension, the dependence-truncation and big-block lengths, and the
number of big blocks.  They are unrelated to similarly named aggregate-shock,
HAC-lag, or sieve quantities used elsewhere in the paper.

\begin{theorem}[Gaussian Approximation for High-Dimensional Stationary Time
Series]\label{strong}
Let $\{X_t\}_{t=1}^{n}$ be a $p$-dimensional mean-zero strictly stationary
Bernoulli shift, $X_t=g(\varepsilon_t,\varepsilon_{t-1},\ldots)$,
where $\{\varepsilon_t\}_{t\in\mathbb Z}$ are i.i.d. Let $m$ and $M$ be
positive integers with $m\ll M$, set
$L\coloneqq\lfloor n/(m+M)\rfloor$, and write
$\ell_p\coloneqq1\vee\log(2p)$. Suppose $p=p_n\to\infty$ and that the
following hold for some $q>4$ and $\alpha>1/2-1/q$.

\begin{enumerate}[label=\upshape(\roman*)]

\item \textbf{(Moments.)}
$\sup_n\sup_{t\in\mathbb Z}\max_{1\leq j\leq p_n}
\E|X_{t,j}|^q<\infty$.

\item \textbf{(Functional dependence.)}
Let $X_{t,\{0\}}$ denote the coupled version of $X_t$ obtained by
replacing $\varepsilon_0$ with an independent copy $\varepsilon_0^*$,
and let $X_{t,j,\{0\}}$ denote its $j$th coordinate. For $i\geq0$,
$r\geq2$, and $j\leq p$, define
$\delta_{i,r,j}\coloneqq
\|X_{i,j}-X_{i,j,\{0\}}\|_{L^r}$ and
$\Delta_{s,r,j}\coloneqq\sum_{i=s}^{\infty}\delta_{i,r,j}$.
Define
$\|X_{\cdot j}\|_{r,\alpha}
\coloneqq
\sup_{s\geq0}(s+1)^\alpha\Delta_{s,r,j}$
and
$\Psi_{r,\alpha}
\coloneqq
\max_{1\leq j\leq p}\|X_{\cdot j}\|_{r,\alpha}$,
and assume $\Psi_{r,\alpha}$ is uniformly bounded along the asymptotic
sequence for each $r\in\{2,q\}$.

For the vectorwise dependence measure, define
$\omega_{i,q}\coloneqq
\|\|X_i-X_{i,\{0\}}\|_\infty\|_{L^q}$,
$\Omega_{s,q}\coloneqq\sum_{i=s}^{\infty}\omega_{i,q}$, and
$\Pi_{q,\alpha}\coloneqq
\sup_{s\geq0}(s+1)^\alpha\Omega_{s,q}$.
Finally, define
$\Upsilon_{q,\alpha}\coloneqq
(\sum_{j=1}^{p}\|X_{\cdot j}\|_{q,\alpha}^{q})^{1/q}$
and
$\Theta_{q,\alpha}
\coloneqq
\Upsilon_{q,\alpha}\wedge(\ell_p\Pi_{q,\alpha})$.
Here $\ell_p=1\vee\log(2p)$ is used in place of the asymptotically
equivalent $\log p$ convention. By construction,
$\Theta_{q,\alpha}\leq
\Upsilon_{q,\alpha}\leq
p^{1/q}\Psi_{q,\alpha}$.

\item \textbf{(Long-run covariance.)}
The long-run covariance matrix
$\Sigma\coloneqq\sum_{l\in\mathbb Z}\E[X_0X_l^\top]$
exists, with the series converging absolutely entrywise. Moreover, for
some constants $0<c<C<\infty$,
\[
c
\leq
\min_{1\leq j\leq p}\Sigma_{jj}
\leq
\max_{1\leq j\leq p}\Sigma_{jj}
\leq
C
\]
uniformly along the asymptotic sequence.

\item \textbf{(Block-size conditions.)}
$\ell_p^{9}\ll L$,
$M\gg m\ell_p^3$, and
$\{m^{-\alpha}+v(n)\}\ell_p^2\to0$,
where $v(x)=x^{-1}$ if $\alpha>1$,
$v(x)=(\log x)/x$ if $\alpha=1$, and
$v(x)=x^{-\alpha}$ if $0<\alpha<1$.

\item \textbf{(Moment exponent condition.)}
$p^{1/q}\ell_p=o(L^{1/2-1/q})$ and
$p^{1/q}\ell_p
=o(m^{\alpha-1/2+1/q}n^{1/2-1/q})$.

\end{enumerate}

Let $Z\sim\mathcal N(0,\Sigma)$. Then, on a suitably enriched
probability space, there exists a real-valued random variable
$\widetilde Z_n$ satisfying
$\widetilde Z_n\overset{d}{=}\|Z\|_\infty$ such that
\[
\mathbb P\!\left(
\left|
\Bigl\|\frac{1}{\sqrt n}\sum_{t=1}^{n}X_t\Bigr\|_{\infty}
-\widetilde Z_n
\right|
\ge
\delta_n
\right)
\to0,
\]
where
\begin{equation}\label{eq:delta-rate}
\begin{aligned}
\delta_n\coloneqq C\Bigl[\,&
L^{-1/6}\ell_p^{7/6}
\vee p^{1/q}\ell_p L^{-1/2+1/q}
{}\vee p^{1/q}\ell_p
m^{1/2-1/q-\alpha}n^{1/q-1/2}\\
&{}\vee\sqrt{(m/M+L^{-1})\ell_p}
\vee\{m^{-\alpha}+v(n)\}^{1/3}\ell_p^{2/3}
\Bigr]
\end{aligned}
\end{equation}
for a sufficiently large constant $C$ depending only on the constants
in the assumptions, $q$, and $\alpha$.
\end{theorem}

Proof of Theorem~\ref{strong} is contained in the accompanying Technical Note, which is provided separately from the Online Appendix.

\begin{remark}[Discussion of rates]\label{rem:rates}
The first term in $\delta_n$ \eqref{eq:delta-rate} is a convenient conservative
choice for the CCK coupling of the $L$ independent big-block sums.  The
power $7/6$ makes the remainder in CCK Theorem~3.1 tend to zero; a
$(\log p)^{2/3}$ radius would require an additional unspecified slowly
diverging factor.  The second term is the finite-$q$ part of the same
coupling, and the third is the finite-$q$ cost of the $m$-dependent
approximation.  These two terms cannot in general be deleted merely
because they are $o(1)$.

The fourth term is the Gaussian cost of restoring the small and terminal
blocks.  The last term is the Gaussian covariance-comparison radius.  Its
$m^{-\alpha}$ component is essential: the Gaussian process used in the
intermediate coupling matches the covariance of the truncated process
$\{X_{t,m}\}$, not that of $\{X_t\}$.  The term $v(n)$ is the Ces\`aro
error between the full-sample covariance and the long-run covariance.
Thus $v(M)$ is not, by itself, the covariance gap in this construction.

The restrictions also have distinct roles.  The condition
$\ell_p^9\ll L$ controls the CCK remainder, and the two conditions in
(v) make the finite-$q$ truncation remainders vanish.  Zhang--Wu's exact
small-block coefficient is $Lm\Theta_{q,\alpha}^q$, so its polynomial
radius is already absorbed into the displayed CCK finite-$q$ term under
$m\ll M$; no condition $M\gg m^2$ is needed.  The stronger convenient
requirement $M\gg m\ell_p^3$ makes the Gaussian block-restoration radius
vanish (for that purpose $M\gg m\ell_p$ would suffice), while
$\{m^{-\alpha}+v(n)\}\ell_p^2\to0$ makes the covariance-comparison
radius vanish.

\smallskip
\emph{Worked example: $L=\lfloor n^{2/3}\rfloor$.}
Set $L=\lfloor n^{2/3}\rfloor$, implying $m+M\asymp n^{1/3}$ and hence
$M\asymp n^{1/3}$ when $m\ll M$.  If the two displayed finite-$q$ terms
in $\delta_n$ \eqref{eq:delta-rate} are additionally dominated, the remaining
verified rate is
\[
\delta_n\;\lesssim\;
n^{-1/9}(\log p)^{7/6}
\;\vee\;
\sqrt{m/n^{1/3}}\,(\log p)^{1/2}
\;\vee\;
\{m^{-\alpha}+v(n)\}^{1/3}(\log p)^{2/3}.
\]
In particular, making the $m^{-\alpha}$ covariance component no larger
than the CCK term requires
\[
m\gg n^{1/(3\alpha)}(\log p)^{-3/(2\alpha)}.
\]
Making the Gaussian small-block term no larger than the CCK term also
requires
\[
m\lesssim n^{1/9}(\log p)^{4/3}.
\]
These lower and upper choices are compatible at the polynomial level
only when $\alpha\ge3$ (with the boundary $\alpha=3$ permitted by the
displayed logarithmic factors), subject also to the block-size and
finite-$q$ restrictions.  For $\alpha<3$, the displayed maximum, rather
than the clean CCK term alone, is the valid conclusion.
\end{remark}

\begin{proof}[Proof of Theorem~\ref{thm:uniform_GA}]
To apply Theorem~\ref{strong}, identify its generic sample-size index
$n$ with the effective time dimension $T-h$. The rate conditions stated
in terms of $n=T-h$ are therefore asymptotically equivalent to the
corresponding conditions stated in terms of $T$.

Define
$U_{h,T}\coloneqq
(T-h)^{-1/2}\sum_{t=1}^{T-h}s_{h,t}$,
$W_{h,T}\coloneqq
[N(T-h)]^{-1/2}\sum_{t=1}^{T-h}W_{t-1}^\top u_{h,t+h}$, and
$\widetilde B_{h,T}\coloneqq
[N(T-h)]^{-1}\sum_{t=1}^{T-h}
\widetilde P_{h,t}X_tR_h(Z_{t-1})$.
We prove the two claims in four steps.

\medskip
\noindent\textbf{Step 1: Exact normal equations and the uniform rate.}

The two block normal equations for
$(\widehat b_h,\widehat\gamma_h)$, followed by exact block
elimination of $\widehat\gamma_h$, give
\begin{equation}
\label{eq:U_exact_normal_equation}
\sqrt{N(T-h)}(\widehat b_h-b_h)
=
\widehat{\widetilde A}_{11,h}^{-1}
\left[
U_{h,T}
-
\widehat A_{12,h}
\widehat A_{22,h}^{-1}W_{h,T}
+
\sqrt{N(T-h)}\,\widetilde B_{h,T}
\right].
\end{equation}
Thus, no sample control term is discarded.

Because
$\lambda_{\max}(\Omega_h)\leq\chi_{h,J,N}$ and the score
autocovariances are absolutely summable,
$
\E\|U_{h,T}\|_2^2
\lesssim
\operatorname{tr}(\Omega_h)
\lesssim
J\chi_{h,J,N}.
$
Therefore,
$
\|U_{h,T}\|_2
=
O_p\!\left(
\sqrt{J\chi_{h,J,N}}
\right).
$
Conditions and the preceding Gram-matrix
bounds give
$\|W_{h,T}\|_2=O_p(\sqrt{\chi_{h,J,N}})$,
$\|\widehat{\widetilde A}_{11,h}^{-1}\|=O_p(1)$,
$\|\widehat A_{12,h}\|=O_p(\sqrt{J/T})$, and
$\|\widehat A_{22,h}^{-1}\|=O_p(1)$.

The direct component of $\widetilde B_{h,T}$ is
$O_p(J^{1/2-\kappa})$ by the sieve-envelope condition and
Cauchy--Schwarz. Its partialling-out component is
$O_p(\sqrt{J/T}\,J^{-\kappa})$ and is therefore dominated. Hence
$\|\widetilde B_{h,T}\|_2=O_p(J^{1/2-\kappa})$.
Using
$\sup_z\|\phi(z)\|_2\lesssim\sqrt J$ in
\eqref{eq:U_exact_normal_equation}, we obtain
$$
\begin{aligned}
\sup_{z\in\mathcal Z}
\left|
\phi(z)^\top(\widehat b_h-b_h)
\right|
={}&
O_p\!\left(
J\sqrt{\frac{\chi_{h,J,N}}{NT}}
\right)
+
O_p\!\left(
\frac{J\sqrt{\chi_{h,J,N}}}{\sqrt N\,T}
\right)
+
O_p(J^{1-\kappa})\\
={}&
O_p\!\left(
J\sqrt{\frac{\chi_{h,J,N}}{NT}}
\right)
+
O_p(J^{1-\kappa}).
\end{aligned}
$$
Because
$\sup_{z\in\mathcal Z}|R_h(z)|
=
O(J^{-\kappa})
=
O(J^{1-\kappa})$,
it follows that
$$
\sup_{z\in\mathcal Z}
|\widehat g_h(z)-g_h(z)|
=
O_p\!\left(
J\sqrt{\frac{\chi_{h,J,N}}{NT}}
\right)
+
O_p(J^{1-\kappa}),
$$
which proves part~\textbf{(a)}. By stationarity and Assumption~\ref{ass:C}(2),
$$
\begin{aligned}
\E\|U_{h,T}\|_2^2
&=
\operatorname{tr}\!\left[
\sum_{|k|<T-h}
\left(1-\frac{|k|}{T-h}\right)
\Gamma_{h,J,N}(k)
\right]\leq
J\sum_{k\in\mathbb Z}
\|\Gamma_{h,J,N}(k)\|
\lesssim
J\chi_{h,J,N}.
\end{aligned}
$$

\medskip
\noindent\textbf{Step 2: Studentized linear representation.}

Because
$A_{12,h}=A_{21,h}^\top=0$,
the population Schur complement equals $A_{11,h}$. Write
$$
\sqrt{N(T-h)}(\widehat b_h-b_h)
=
A_{11,h}^{-1}U_{h,T}+r_{h,T}.
$$
Equation~\eqref{eq:U_exact_normal_equation} gives
$$
\begin{aligned}
r_{h,T}
={}&
\left(
\widehat{\widetilde A}_{11,h}^{-1}
-
A_{11,h}^{-1}
\right)U_{h,T}-
\widehat{\widetilde A}_{11,h}^{-1}
\widehat A_{12,h}
\widehat A_{22,h}^{-1}W_{h,T}
+
\sqrt{N(T-h)}\,
\widehat{\widetilde A}_{11,h}^{-1}
\widetilde B_{h,T}.
\end{aligned}
$$

Since
$$
\frac{\phi(z)}{\sigma_h^{\star}(z)}
=
A_{11,h}\psi_h(z)
$$
and
$\lambda_{\min}(\Omega_h)\geq c_\Omega>0$,
$$
\sup_{z\in\mathcal Z}
\|\psi_h(z)\|_2
\leq
c_\Omega^{-1/2}.
$$
Therefore, for the inverse-replacement term,
$$
\begin{aligned}
&\sup_{z\in\mathcal Z}
\left|
\frac{
\phi(z)^\top
\left(
\widehat{\widetilde A}_{11,h}^{-1}
-
A_{11,h}^{-1}
\right)U_{h,T}
}{
\sigma_h^{\star}(z)
}
\right|=
\sup_{z\in\mathcal Z}
\left|
\psi_h(z)^\top
A_{11,h}
\left(
\widehat{\widetilde A}_{11,h}^{-1}
-
A_{11,h}^{-1}
\right)U_{h,T}
\right|\\
&\qquad=
O_p\!\left(
\bar a_{A,T}\sqrt{J\chi_{h,J,N}}
\right),
\end{aligned}
$$
by Lemma~\ref{lem:gram_rates}. Similarly,
$$
\begin{aligned}
&\sup_{z\in\mathcal Z}
\left|
\frac{
\phi(z)^\top
\widehat{\widetilde A}_{11,h}^{-1}
\widehat A_{12,h}
\widehat A_{22,h}^{-1}W_{h,T}
}{
\sigma_h^{\star}(z)
}
\right|=
O_p\!\left(
\sqrt{\frac{J\chi_{h,J,N}}{T}}
\right)
=
O_p\!\left(
\bar a_{A,T}\sqrt{J\chi_{h,J,N}}
\right).
\end{aligned}
$$

For the approximation-bias term,
$$
\begin{aligned}
&\sqrt{N(T-h)}
\sup_{z\in\mathcal Z}
\left|
\frac{
\phi(z)^\top
\widehat{\widetilde A}_{11,h}^{-1}
\widetilde B_{h,T}
}{
\sigma_h^{\star}(z)
}
\right|=
\sqrt{N(T-h)}
\sup_{z\in\mathcal Z}
\left|
\psi_h(z)^\top
A_{11,h}
\widehat{\widetilde A}_{11,h}^{-1}
\widetilde B_{h,T}
\right|=
O_p\!\left(
\sqrt{NT}\,J^{1/2-\kappa}
\right).
\end{aligned}
$$

Moreover, Assumption~\ref{ass:C}(2) and
\eqref{eq:U_normalization} imply
$
\inf_{z\in\mathcal Z}\sigma_h^{\star}(z)
\gtrsim
\tau_J
\geq1.
$\\
Hence
$
\sqrt{N(T-h)}
\sup_{z\in\mathcal Z}
\frac{|R_h(z)|}{\sigma_h^{\star}(z)}
=
O\!\left(
\sqrt{NT}\,J^{-\kappa}
\right),
$
which is dominated by
$O(\sqrt{NT}\,J^{1/2-\kappa})$.

Because
$$
\widehat g_h(z)-g_h(z)
=
\phi(z)^\top(\widehat b_h-b_h)-R_h(z),
$$
we conclude that
\begin{equation}
\label{eq:U_studentized_remainder}
\sup_{z\in\mathcal Z}
\left|
\frac{
\sqrt{N(T-h)}
\{\widehat g_h(z)-g_h(z)\}
}{
\sigma_h^{\star}(z)
}
-
\psi_h(z)^\top U_{h,T}
\right|
=
O_p\!\left(
\bar a_{A,T}\sqrt{J\chi_{h,J,N}}
+
\sqrt{NT}\,J^{1/2-\kappa}
\right).
\end{equation}

\medskip
\noindent\textbf{Step 3: Gaussian coupling on a deterministic net.}

For $\ell=1,\ldots,K_T$, define
$
Y_{h,r,\ell}
\coloneqq
y_{h,r}(z_\ell)
=
\psi_h(z_\ell)^\top s_{h,r-h},
$
and
$
Y_{h,r}
\coloneqq
(Y_{h,r,1},\ldots,Y_{h,r,K_T})^\top.
$
Because
$
\frac{1}{\sqrt{T-h}}
\sum_{r=h+1}^{T}Y_{h,r,\ell}
=
\psi_h(z_\ell)^\top U_{h,T},
$
we may relabel this length-$(T-h)$ stationary block by
$t=1,\ldots,T-h$.
The standardized score-moment condition in
Assumption~\ref{ass:C}(2) implies the required moment condition,
while condition~\eqref{eq:U_score_dependence} implies the
functional-dependence condition of Theorem~\ref{strong} for
$\{Y_{h,t}\}$.
The long-run covariance matrix of $Y_{h,t}$ has entries
$
\Sigma_{Y,\ell k}
=
\psi_h(z_\ell)^\top
\Omega_h
\psi_h(z_k).
$
By the definition of $\psi_h(z)$,
$
\Sigma_{Y,\ell\ell}
=
\psi_h(z_\ell)^\top
\Omega_h
\psi_h(z_\ell)
=
1.
$
Thus, the coordinate variances are uniformly bounded above and away
from zero, including at rate-switching points. The covariance matrix
may nevertheless be singular when $K_T>J$, which does not affect the
coordinatewise variance condition in Theorem~\ref{strong}.

Recall that
$
G_h(z)
=
\psi_h(z)^\top
\Omega_h^{1/2}\mathcal N_J.
$
The Gaussian vector in Theorem~\ref{strong} therefore has the same
distribution as
$(G_h(z_\ell))_{\ell=1}^{K_T}$.
Theorem~\ref{strong} gives a coupling such that
\begin{equation}
\label{eq:U_studentized_net_coupling}
\left|
\max_{\ell\leq K_T}
\left|
\psi_h(z_\ell)^\top U_{h,T}
\right|
-
\max_{\ell\leq K_T}|G_h(z_\ell)|
\right|
=
O_p(\delta_T).
\end{equation}

\medskip
\noindent\textbf{Step 4: From the net to the continuum and then to
Kolmogorov distance.}

For each $z\in\mathcal Z$, choose $z_\ell\in\mathcal Z_T$ such that
$|z-z_\ell|\leq\varepsilon_T$. Since
$\|U_{h,T}\|_2
=
O_p(\sqrt{J\chi_{h,J,N}})$,
\begin{equation}
\label{eq:U_studentized_empirical_modulus}
\begin{aligned}
&\sup_{z\in\mathcal Z}
\min_{\ell\leq K_T}
\left|
\psi_h(z)^\top U_{h,T}
-
\psi_h(z_\ell)^\top U_{h,T}
\right|\leq
L_{\psi,J,N}\varepsilon_T
\|U_{h,T}\|_2=
O_p\!\left(
L_{\psi,J,N}\varepsilon_T
\sqrt{J\chi_{h,J,N}}
\right).
\end{aligned}
\end{equation}

Combining
\eqref{eq:U_studentized_remainder},
\eqref{eq:U_studentized_net_coupling}, and
\eqref{eq:U_studentized_empirical_modulus} gives
\begin{equation}
\label{eq:U_studentized_prelimit}
\begin{aligned}
&\left|
\sup_{z\in\mathcal Z}
\left|
\frac{
\sqrt{N(T-h)}
\{\widehat g_h(z)-g_h(z)\}
}{
\sigma_h^{\star}(z)
}
\right|
-
\max_{\ell\leq K_T}|G_h(z_\ell)|
\right|\\
&\quad=
O_p\!\left(
\delta_T
+
\bar a_{A,T}\sqrt{J\chi_{h,J,N}}
+
\sqrt{NT}\,J^{1/2-\kappa}
+
L_{\psi,J,N}\varepsilon_T
\sqrt{J\chi_{h,J,N}}
\right).
\end{aligned}
\end{equation}

Moreover,
$$
\begin{aligned}
\left\{
\E|G_h(z)-G_h(z')|^2
\right\}^{1/2}
&\leq
\|\Omega_h\|^{1/2}
\|\psi_h(z)-\psi_h(z')\|_2\\
&\lesssim
\sqrt{\chi_{h,J,N}}\,
L_{\psi,J,N}|z-z'|.
\end{aligned}
$$
Dudley's entropy bound and Gaussian concentration therefore give
\begin{equation}
\label{eq:U_studentized_gaussian_modulus}
\sup_{z\in\mathcal Z}
\min_{\ell\leq K_T}
|G_h(z)-G_h(z_\ell)|
=
O_p\!\left(
\sqrt{\chi_{h,J,N}}\,
L_{\psi,J,N}\varepsilon_T
\sqrt{\log(1/\varepsilon_T)}
\right).
\end{equation}

Combining
\eqref{eq:U_studentized_prelimit} and
\eqref{eq:U_studentized_gaussian_modulus}, condition
\eqref{eq:U_growth} implies
$$
\left|
\sup_{z\in\mathcal Z}
\left|
\frac{
\sqrt{N(T-h)}
\{\widehat g_h(z)-g_h(z)\}
}{
\sigma_h^{\star}(z)
}
\right|
-
\sup_{z\in\mathcal Z}|G_h(z)|
\right|
=
o_p\!\left(
\frac1{\sqrt{\ell_T}}
\right).
$$

Since
$\Var(G_h(z_\ell))=1$ for every $\ell\leq K_T$,
Gaussian anti-concentration gives, for $0<r<1$,
$$
\sup_{x\in\mathbb R}
\P\!\left(
\left|
\max_{\ell\leq K_T}|G_h(z_\ell)|-x
\right|
\leq r
\right)
\lesssim
r\sqrt{\ell_T+\log(1/r)}.
$$
The Gaussian modulus bound transfers this anti-concentration result
from the net to the continuum.
Indeed, if $E_T$ denotes the preceding coupling error, then
$\sqrt{\ell_T}E_T=o_p(1)$. A diagonal argument therefore gives a
deterministic sequence $c_T\downarrow0$ such that
$
\P\!\left(\sqrt{\ell_T}E_T>c_T\right)\to0.
$
Replacing $c_T$ by $\max\{c_T,\ell_T^{-1}\}$ if necessary preserves
this property. For $r_T=c_T/\sqrt{\ell_T}$,
$
r_T\sqrt{\ell_T+\log(1/r_T)}
\lesssim c_T\to0.
$
The coupling-to-Kolmogorov inequality therefore gives
$$
\sup_{x\in\mathbb R}
\left|
\P\!\left(
\sup_{z\in\mathcal Z}
\left|
\frac{
\sqrt{N(T-h)}
\{\widehat g_h(z)-g_h(z)\}
}{
\sigma_h^{\star}(z)
}
\right|
\leq x
\right)
-
\P\!\left(
\sup_{z\in\mathcal Z}|G_h(z)|
\leq x
\right)
\right|
\to0.
$$
This proves part~\textbf{(b)}.
\end{proof}

\newpage

\section{Additional Simulation Results} \label{sec: additional simulation resutls}

Figure \ref{fig:sieve-vs-linear under Fourier DGP} provides a robustness check
for the Monte Carlo results in Section 5.1 of the paper by repeating the comparison
under a Fourier DGP,
\[
g(z)=0.8\sin(2\pi z)+2\cos(2\pi z)-0.5\sin(4\pi z)+\cos(4\pi z),
\]
with $z$ normalized to $[0,1]$ by $(z+4.65)/9.3$. The results are similar to
those for the cubic DGP: the sieve estimator captures the nonlinear IRF, while
the linear specification misses its curvature.

Tables \ref{tab:mc_dynamic_cubic_spline_h0_h4_all_selectors} and
\ref{tab:mc_dynamic_cubic_spline_h8_h12_all_selectors} report the full results
for the uniform-band simulation in Section 5.2, including coverage and average
width across oracle, AIC, GCV, and LASSO sieve-dimension choices. Coverage
improves with sample size; AIC and GCV mildly undercover, while LASSO typically
achieves higher coverage with wider bands.

\begin{figure}[htbp]
\centering
\includegraphics[width=\textwidth]{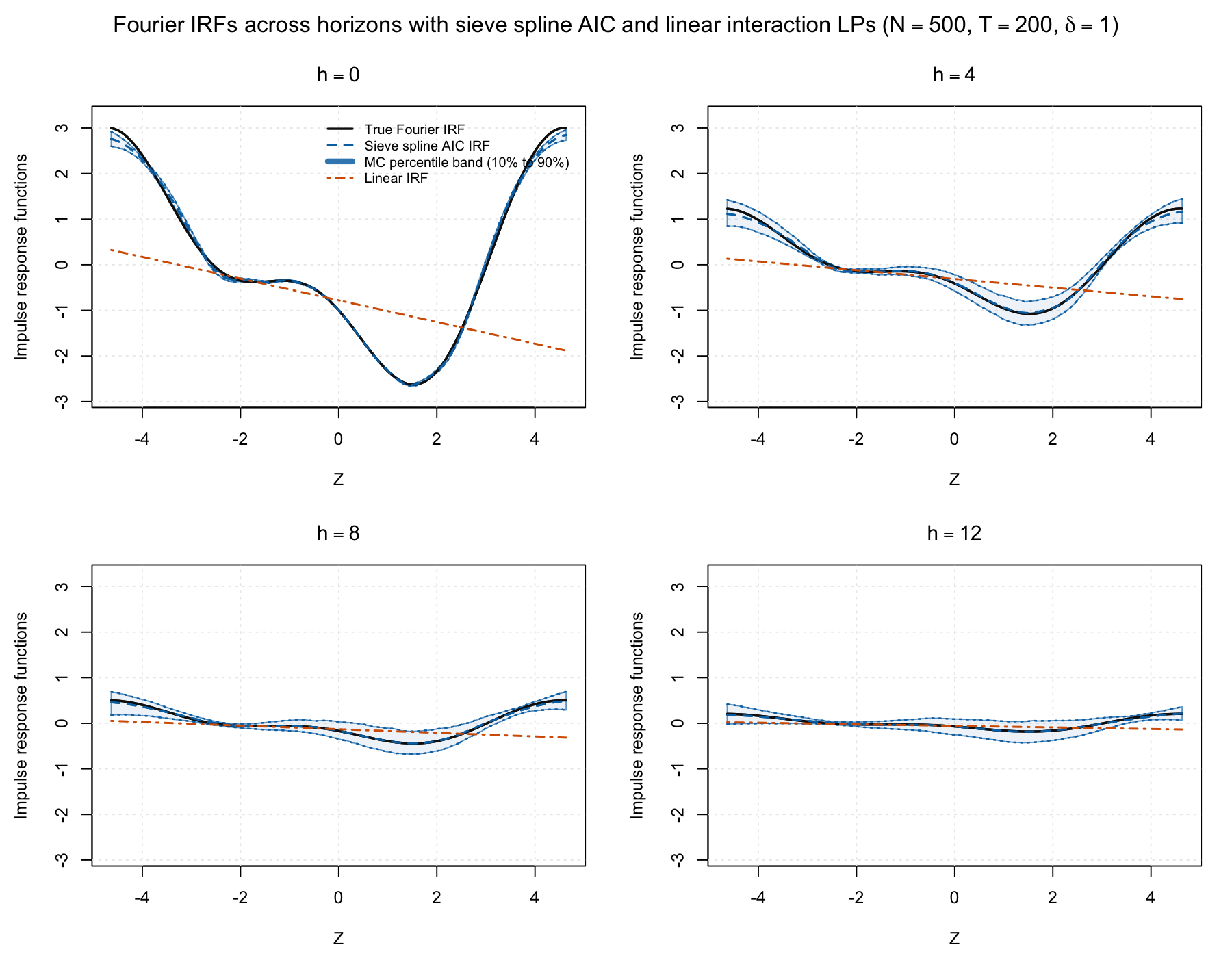}
\caption{Monte Carlo comparison of sieve and linear IRFs under the Fourier DGP}
\label{fig:sieve-vs-linear under Fourier DGP}
\end{figure}

\begin{table}[htbp]
\centering
\caption{Uniform-band performance of sieve across selector choices at horizons \(h \in \{0,4\}\) under nominal coverage \(1-\alpha = 0.95\)}
\label{tab:mc_dynamic_cubic_spline_h0_h4_all_selectors}
\small
\setlength{\tabcolsep}{3.2pt}
\renewcommand{\arraystretch}{1.06}
\begin{adjustbox}{max width=\textwidth}
\begin{tabular}{c c *{4}{S[table-format=1.2] S[table-format=2.2]}}
\toprule
& & \multicolumn{2}{c}{Oracle} & \multicolumn{2}{c}{AIC} & \multicolumn{2}{c}{GCV} & \multicolumn{2}{c}{LASSO} \\
\cmidrule(lr){3-4}\cmidrule(lr){5-6}\cmidrule(lr){7-8}\cmidrule(lr){9-10}
$N$ & $T$ & {Coverage} & {Width} & {Coverage} & {Width} & {Coverage} & {Width} & {Coverage} & {Width} \\
\midrule
\multicolumn{10}{l}{\textit{Panel A: Horizon $h=0$}} \\
\midrule
\multirow{3}{*}{1} & 40
 & 0.55 & 1.25
 & 0.44 & 1.27
 & 0.45 & 1.26
 & 0.50 & 1.33 \\
 & 120
 & 0.76 & 0.74
 & 0.63 & 0.82
 & 0.64 & 0.81
 & 0.68 & 1.07 \\
 & 200
 & 0.85 & 0.57
 & 0.73 & 0.64
 & 0.73 & 0.64
 & 0.74 & 1.06 \\
\addlinespace[0.2em]
\multirow{3}{*}{100} & 40
 & 0.84 & 0.12
 & 0.73 & 0.14
 & 0.73 & 0.14
 & 0.39 & 0.57 \\
 & 120
 & 0.91 & 0.07
 & 0.83 & 0.08
 & 0.83 & 0.08
 & 0.78 & 0.36 \\
 & 200
 & 0.93 & 0.06
 & 0.88 & 0.06
 & 0.88 & 0.06
 & 0.87 & 0.28 \\
\addlinespace[0.2em]
\multirow{3}{*}{500} & 40
 & 0.82 & 0.05
 & 0.72 & 0.06
 & 0.72 & 0.06
 & 0.41 & 0.26 \\
 & 120
 & 0.91 & 0.03
 & 0.84 & 0.04
 & 0.84 & 0.04
 & 0.76 & 0.16 \\
 & 200
 & 0.92 & 0.03
 & 0.86 & 0.03
 & 0.86 & 0.03
 & 0.86 & 0.13 \\
\midrule
\multicolumn{10}{l}{\textit{Panel B: Horizon $h=4$}} \\
\midrule
\multirow{3}{*}{1} & 40
 & 0.61 & 7.80
 & 0.52 & 8.15
 & 0.52 & 7.98
 & 0.61 & 7.84 \\
 & 120
 & 0.82 & 4.80
 & 0.72 & 5.49
 & 0.72 & 5.34
 & 0.80 & 5.10 \\
 & 200
 & 0.89 & 3.51
 & 0.78 & 4.18
 & 0.79 & 4.15
 & 0.88 & 3.92 \\
\addlinespace[0.2em]
\multirow{3}{*}{100} & 40
 & 0.82 & 2.79
 & 0.73 & 3.13
 & 0.74 & 3.13
 & 0.78 & 3.03 \\
 & 120
 & 0.90 & 1.76
 & 0.82 & 2.01
 & 0.82 & 2.01
 & 0.83 & 2.07 \\
 & 200
 & 0.90 & 1.47
 & 0.92 & 1.66
 & 0.92 & 1.66
 & 0.88 & 1.80 \\
\addlinespace[0.2em]
\multirow{3}{*}{500} & 40
 & 0.79 & 2.35
 & 0.75 & 2.56
 & 0.75 & 2.56
 & 0.73 & 2.54 \\
 & 120
 & 0.87 & 1.48
 & 0.87 & 1.64
 & 0.87 & 1.64
 & 0.84 & 1.73 \\
 & 200
 & 0.92 & 1.17
 & 0.89 & 1.30
 & 0.89 & 1.30
 & 0.90 & 1.42 \\
\bottomrule
\end{tabular}
\end{adjustbox}
\end{table}

\begin{table}[htbp]
\centering
\caption{Uniform-band performance of sieve across selector choices at horizons \(h \in \{8, 12\}\) under nominal coverage \(1-\alpha = 0.95\)}
\label{tab:mc_dynamic_cubic_spline_h8_h12_all_selectors}
\small
\setlength{\tabcolsep}{3.2pt}
\renewcommand{\arraystretch}{1.06}
\begin{adjustbox}{max width=\textwidth}
\begin{tabular}{c c *{4}{S[table-format=1.2] S[table-format=2.2]}}
\toprule
& & \multicolumn{2}{c}{Oracle} & \multicolumn{2}{c}{AIC} & \multicolumn{2}{c}{GCV} & \multicolumn{2}{c}{LASSO} \\
\cmidrule(lr){3-4}\cmidrule(lr){5-6}\cmidrule(lr){7-8}\cmidrule(lr){9-10}
$N$ & $T$ & {Coverage} & {Width} & {Coverage} & {Width} & {Coverage} & {Width} & {Coverage} & {Width} \\
\midrule
\multicolumn{10}{l}{\textit{Panel A: Horizon $h=8$}} \\
\midrule
\multirow{3}{*}{1} & 40
 & 0.61 & 9.85
 & 0.53 & 9.83
 & 0.55 & 9.83
 & 0.60 & 9.87 \\
 & 120
 & 0.83 & 5.18
 & 0.72 & 5.66
 & 0.73 & 5.57
 & 0.81 & 5.31 \\
 & 200
 & 0.91 & 3.77
 & 0.85 & 4.31
 & 0.85 & 4.21
 & 0.91 & 3.91 \\
\addlinespace[0.2em]
\multirow{3}{*}{100} & 40
 & 0.80 & 3.08
 & 0.71 & 3.46
 & 0.71 & 3.47
 & 0.76 & 3.25 \\
 & 120
 & 0.90 & 1.89
 & 0.88 & 2.17
 & 0.88 & 2.17
 & 0.89 & 2.03 \\
 & 200
 & 0.93 & 1.58
 & 0.92 & 1.82
 & 0.92 & 1.82
 & 0.89 & 1.72 \\
\addlinespace[0.2em]
\multirow{3}{*}{500} & 40
 & 0.83 & 2.67
 & 0.80 & 2.88
 & 0.80 & 2.88
 & 0.80 & 2.77 \\
 & 120
 & 0.88 & 1.62
 & 0.87 & 1.81
 & 0.87 & 1.81
 & 0.86 & 1.75 \\
 & 200
 & 0.93 & 1.26
 & 0.92 & 1.42
 & 0.92 & 1.42
 & 0.92 & 1.40 \\
\midrule
\multicolumn{10}{l}{\textit{Panel B: Horizon $h=12$}} \\
\midrule
\multirow{3}{*}{1} & 40
 & 0.54 & 12.18
 & 0.46 & 12.35
 & 0.47 & 12.11
 & 0.52 & 12.14 \\
 & 120
 & 0.84 & 5.33
 & 0.76 & 5.73
 & 0.77 & 5.68
 & 0.83 & 5.44 \\
 & 200
 & 0.91 & 3.80
 & 0.82 & 4.27
 & 0.82 & 4.25
 & 0.91 & 3.85 \\
\addlinespace[0.2em]
\multirow{3}{*}{100} & 40
 & 0.83 & 3.32
 & 0.72 & 3.75
 & 0.72 & 3.75
 & 0.81 & 3.41 \\
 & 120
 & 0.91 & 1.93
 & 0.90 & 2.24
 & 0.90 & 2.24
 & 0.89 & 2.00 \\
 & 200
 & 0.93 & 1.58
 & 0.93 & 1.82
 & 0.93 & 1.82
 & 0.93 & 1.66 \\
\addlinespace[0.2em]
\multirow{3}{*}{500} & 40
 & 0.81 & 2.85
 & 0.75 & 3.09
 & 0.75 & 3.09
 & 0.78 & 2.97 \\
 & 120
 & 0.95 & 1.64
 & 0.91 & 1.82
 & 0.91 & 1.82
 & 0.92 & 1.71 \\
 & 200
 & 0.95 & 1.30
 & 0.94 & 1.46
 & 0.94 & 1.46
 & 0.93 & 1.37 \\
\bottomrule
\end{tabular}
\end{adjustbox}
\end{table}


\newpage
\bibliographystyle{apalike}
\bibliography{nonlinearLP_bib}